\documentclass[aps,prx,noeprint,twocolumn,superscriptaddress,balancelastpage,10pt]{revtex4-2}
\usepackage{mysty}

\let\oldaddcontentsline\addcontentsline
\renewcommand{\addcontentsline}[3]{}

\begin{document}

\title{Quantum Mpemba Effect in Non-Equilibrium Quantum Thermometry}

\author{Zi-Shen Li}
\email{zishen@connect.hku.hk}

\author{Yuxiang Yang}
\email{yuxiang@cs.hku.hk}
\affiliation{Quantum Information and Computation Initiative, Department of Computer Science, School of Computing and Data Science, The University of Hong Kong, Pokfulam Road, Hong Kong, China}

\begin{abstract}
    The quantum Mpemba effect (QMpE) describes an anomalous thermalization phenomenon in which quantum states initially far from equilibrium can approach thermal equilibrium faster than states that begin closer to it. While this effect has been extensively studied in various frameworks, its practical implications for quantum information processing remain largely unexplored. We investigate the relationship between QMpE and quantum thermometry, focusing on non-equilibrium scenarios where measurements are performed during early-stage thermalization. In a Markovian model, we rigorously prove that the initial states that are optimal for thermometry exhibit QMpE with high probability and thermalize faster than most initial states. Our results reveal a fundamental connection between quantum thermodynamics and thermometry, suggesting that QMpE can be harnessed to enhance temperature estimation with quantum probes.
\end{abstract}

\maketitle

\section*{Introduction}
The Mpemba effect, a counterintuitive phenomenon in which hot liquid freezes faster than cold liquid, was first systematically documented by E. B. Mpemba and D. G. Osborne in the 1960s \cite{Mpemba1979cool}. Subsequently, a formalism for analyzing eigenmodes in Markovian stochastic processes was developed \cite{lu2017NonequilibriumThermodynamicsMarkovianMpembaEffectIts,klich2019MpembaIndexAnomalousRelaxation}, substantially advancing the theoretical understanding of this effect. Recent research has extended this framework to quantum systems \cite{Carollo2021ExponentiallyAcceleratedApproach,Kochsiek2022AcceleratingApproachDissipative,Moroder2024MpembaEffectThermo,zhang2025ObservationQuantumStrongMpembaEffect,ares2025QuantumMpembaEffects,murciano2024EntanglementAsymmetryQuantumMpembaEffectXY,joshi2024ObservingQuantumMpembaEffectQuantumSimulations,wang2024MpembaEffectsNonequilibriumOpenQuantumSystems,summer2026ResourceTheoreticalUnificationMpembaEffectsClassicalQuantum}, leading to the identification of the quantum Mpemba effect (QMpE). This quantum analog occurs when states initially distant from thermal equilibrium relax to equilibrium more rapidly than states initially closer to equilibrium, as illustrated in \cref{fig-intro}(a). The QMpE has been rigorously examined across diverse regimes, encompassing Markovian \cite{Moroder2024MpembaEffectThermo} and non-Markovian dynamics \cite{strachan2025NonMarkovianQuantumMpembaEffect}, random quantum circuits \cite{turkeshi2025QuantumMpembaEffectRandomCircuits,qian2025IntrinsicQuantumMpembaEffectMarkovian,liu2024SymmetryRestorationQuantumMpembaEffectSymmetric}, and many-body quantum systems \cite{joshi2024ObservingQuantumMpembaEffectQuantumSimulations,murciano2024EntanglementAsymmetryQuantumMpembaEffectXY}. 

Despite the rapidly growing literature on the QMpE, most existing works characterize it as a property of relaxation curves or spectral overlaps of a Liouvillian, while leaving open how such anomalous relaxation translates into operational advantages in quantum information tasks. 
A natural application of the Mpemba effect is thermometry \cite{DePasquale2018,kiilerich2018DynamicalApproachAncillaassistedQuantumThermometry,pasquale2017EstimatingTemperatureSequentialMeasurements,seah2019CollisionalQuantumThermometry,jarzyna2015QuantumInterferometricMeasurementsTemperature,fujiwara2021DiamondQuantumThermometryFoundationsApplications,campbell2018PrecisionThermometryQuantumSpeedLimit,Correa2015IndividualQuantumProbes,mehboudi2019ThermometryQuantumRegimeRecentTheoreticalProgress,cavina2018BridgingThermodynamicsMetrologyNonequilibriumQuantumThermometry,aiache2024NonMarkovianEnhancementNonequilibriumQuantumThermometry,jevtic2015SinglequbitThermometry,yu2024CriticalityenhancedPrecisionPhaseThermometry,boeyens2023ProbeThermometryContinuousMeasurements,stace2010QuantumLimitsThermometrya,srivastava2025TopologicalQuantumThermometry,rubio2021GlobalQuantumThermometry,mok2021OptimalProbesGlobalQuantumThermometry}. 
In classical thermodynamics, temperature is defined via the zeroth law: two systems in thermal equilibrium share the same temperature \cite{kittel1969ThermalPhysics}. 
The Mpemba effect offers a distinct advantage in classical thermometry by accelerating thermalization, thereby reducing the time required to attain the target temperature. In quantum thermometry, however, temperature can be estimated before the probe reaches equilibrium [\cref{fig-intro}(b)].
Such partial-thermalized thermometry is also known as non-equilibrium quantum thermometry \cite{Correa2015IndividualQuantumProbes,mehboudi2019ThermometryQuantumRegimeRecentTheoreticalProgress,aiache2024NonMarkovianEnhancementNonequilibriumQuantumThermometry,cavina2018BridgingThermodynamicsMetrologyNonequilibriumQuantumThermometry,jevtic2015SinglequbitThermometry}. 
This raises an ambiguity: faster relaxation could either help by amplifying temperature-dependent transitions early or hurt by washing out temperature information more quickly.
This motivates the central question of this work: what is the operational relation between the QMpE and the sensitivity of non-equilibrium quantum thermometry?

\begin{figure}
    \centering
    \includegraphics[width=0.95\columnwidth]{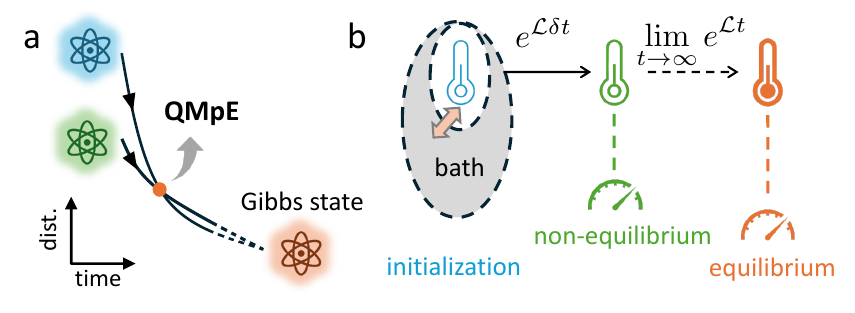}
    \caption{Schematic illustration of QMpE and thermometry. (a) Given a state initially further from thermal equilibrium in terms of a distance measure, QMpE occurs when it surpasses a state initially closer to equilibrium after finite time. (b) Illustration of non-equilibrium versus equilibrium quantum thermometry.}
    \label{fig-intro}
\end{figure}

Establishing a rigorous link requires identifying a regime where (i) thermometric optimality admits an analytic characterization and (ii) the resulting optimal probe preparation can be compared against typical states under a precise definition of QMpE. 
In this work, we consider a probe system with provably optimal thermometric performance and perform general measurements after short-time evolution, a strategy that is optimal when the total interrogation time is limited \cite{Correa2015IndividualQuantumProbes}.
From this, we obtain our main results: (i) we identify the optimal initial state that yields the maximum local distinguishability, and (ii) under this condition the QMpE necessarily appears with an exceptional probability exponentially small in the probe dimension $d$. 
The local distinguishability in this context is characterized by the trace distance between adjacent states within the parameterized family.
The results suggest that thermometric optimality implies an anomalously fast approach to equilibrium.

Our results suggest that the information-theoretic optimality of the input state at the short-time limit is associated with faster-to-equilibrium behavior in the thermalization procedure.
Numerical results further demonstrate that this correspondence persists in the finite-time regime.
Our work not only provides valuable insight into the QMpE, but also motivates further investigation of nonequilibrium thermodynamics from an information-theoretic perspective.
\medskip

\section*{Results}
\noindent\emphbold{Setup and optimal probe system. }
We can gain initial intuition about quantum thermometry and thermalization by recalling how temperature enters nonequilibrium dynamics.
Although temperature is fundamentally an equilibrium notion, in quantum statistical mechanics a thermal environment at inverse temperature $\beta$ is microscopically characterized by the Kubo-Martin-Schwinger (KMS) condition on its equilibrium correlation functions \cite{kubo1957StatisticalMechanicalTheoryIrreversible,Martin1959TheoryManyParticleSystems,haag1967EquilibriumStatesQuantumStatisticalMechanics,davies1974MarkovianMasterEquations,breuer2002TheoryOpenQuantum}.
This condition constrains the bath spectral correlations and thereby determines how temperature enters the reduced dynamics of the probe system.
In particular, after the weak-coupling Markovian, or Davies, reduction, the KMS structure is inherited by the probe through the temperature-dependent microscopic transition rates appearing in the dissipator \cite{spohn1978IrreversibleThermodynamicsQuantumSystems,AlickiLendi2007QuantumDynamicalSemigroups}.
Specifically, for a general $d$-level probe with system Hamiltonian
$\bH = \sum_{j=0}^{d-1} \hbar \omega_j \ketbra{j}{j}$
coupled to a heat bath at inverse temperature $\beta$, the reduced dynamics are governed by the Lindblad master equation
\cite{gorini1976CompletelyPositiveDynamicalSemigroups,lindblad1976GeneratorsQuantumDynamicalSemigroups}.
\begin{align}
    \frac{\d\rho}{\d t}
    = \cL_\beta[\rho]
    = - i[\bH,\rho] + \cD_\beta[\rho],
\end{align}
where $\cD_\beta$ contains the temperature-dependent dissipative transitions.\footnote{We do not use the Lamb shift as a thermometric resource, since its temperature-dependent phase contribution is sensitive to microscopic bath details and can be difficult to distinguish from detuning or other Hamiltonian frequency drifts. Unlike the dissipative rates, it is not fixed solely by KMS detailed balance \cite{breuer2002TheoryOpenQuantum,AlickiLendi2007QuantumDynamicalSemigroups}.}
To ensure thermodynamic consistency and identify the probe geometry relevant to optimal thermometry, we impose two physical assumptions:

{\emph{(i) Weak-coupling thermalization (Davies map).}} {We assume the probe interacts with the heat bath} in the weak-coupling, time-homogeneous regime. {Under this assumption, the dissipative dynamics are characterized by the Davies map \cite{davies1979GeneratorsDynamicalSemigroups,lostaglio2018ElementaryThermalOperations}, which strictly enforces thermodynamic consistency.} The generator of the Davies map exhibits two properties. First, the superoperators corresponding to its unitary and dissipative components commute. Consequently, the dissipative generator takes the form
\begin{align}
    \cD_\beta[\rho] &= \sum_{\substack{i,j\in[d]\\i>j}} \Big[ \Gamma_\beta(\omega_j-\omega_i)\cD^{(i\rightarrow j)}[\rho] \nonumber \\
    &\quad + \Gamma_\beta(\omega_i-\omega_j)\cD^{(j\rightarrow i)}[\rho] \Big],
\end{align}
where $\cD^{(i\rightarrow j)}$ represents the dissipator governing the transition between the $i$-th and $j$-th energy levels, and $\Gamma_{\beta}(\omega)$ is the transition weight. Second, these transition weights satisfy the detailed balance condition \cite{alicki1976DetailedBalanceCondition}, meaning
\begin{align}\label{eq:kms-detail}
    \frac{\Gamma_\beta (\omega)}{\Gamma_\beta (-\omega)} = e^{-\beta \omega}.
\end{align}
{The specific form of $\Gamma_\beta(\omega)$ depends on the system--bath interaction.} It is convenient to rewrite the transition rates as $\Gamma_\beta(\omega)=J(\omega)f_\beta(\omega)$, where $f_\beta(\omega):=\left[\exp(\beta\omega)-1\right]^{-1}$ denotes the mean thermal occupation number of bosonic bath excitations \cite{DePasquale2018}.

{\emph{(ii) Effective two-band probe topology.}} While the system may generally exhibit an arbitrary energy-level coupling topology, i.e., transitions can happen between all $d$ levels, it is sufficient to consider an effective model consisting of a single ground state coupled to a $(d-1)$-dimensional excited manifold.
For thermometric purposes, not all the excitations $\{\cD_\beta^{(i\rightarrow j)}\}_{i\neq j}$ contribute high thermal sensitivity due to the following reasons.
In realistic settings, the transition weight $\Gamma_\beta$ is not a uniform function of frequency.
There typically exists an optimal frequency $\omega_{\rm opt}$ near which the probe system exhibits maximal thermal sensitivity \footnote{More specifically, one can tailor the energy level spacings to cluster around $\omega_{\rm opt} := \arg \max_{\omega} |\partial_\beta \Gamma_\beta(\omega)|$.}. Here, we set $\omega_0=0$ and assume the remaining frequencies $\{\omega_j\}_{j=1}^{d-1}$ are closely spaced with a small detuning $\varepsilon$, such that $|\omega_i-\omega_j| \le \varepsilon$ for all $i,j\in\{1, \dots, d-1\}$.
It has been shown that this configuration is optimal for equilibrium quantum thermometry \cite{Correa2015IndividualQuantumProbes}, where inserting any energy levels between them lowers the effective heat capacity, yielding a smaller thermal sensitivity.
From experimental perspective, such probe systems can be constructed in various platforms, including atomic systems \cite{anton2002OpticalBistabilityUsingQuantumInterferenceVtype,abdelaziz2020EffectiveControlSwitchingOpticalMultistabilityThreelevel,saffman2010QuantumInformationRydbergAtoms}, quantum dots \cite{kouwenhoven1997ExcitationSpectraCircularFewElectronQuantumDots}, superconducting circuits \cite{koch2007ChargeinsensitiveQubitDesignDerivedCooperPair}, and ion traps \cite{haffner2008QuantumComputingTrappedIons}.

\medskip

\noindent\emphbold{Characterizing thermal sensitivity. }
In the context of quantum metrology, the precision of parameter estimation is characterized by various bounds.
Recent studies have developed connections between parameter estimation and channel discrimination \cite{Tsang2012ziv-zakai,walter2014LowerBoundsQuantumParameterEstimation,berry2015QuantumBellZivZakaiBoundsHeisenbergLimitsWaveform,meyer2025QuantumMetrologyFiniteSampleRegime}.
In this work, we consider the local distinguishability as the figure of merit, which has been considered in various metrological setups such as binary hypothesis testing \cite{Tsang2012ziv-zakai,walter2014LowerBoundsQuantumParameterEstimation,berry2015QuantumBellZivZakaiBoundsHeisenbergLimitsWaveform}.
More specifically, given a continuous parameterized family of quantum channels, i.e., $\{\cE_\beta\}$, the local distinguishability given input state $\rho_0$ is defined as 
\begin{align}
    \lone{\partial_\beta \cE_\beta[\rho_0]}.
\end{align}
The connection between the local distinguishability and the estimation error can be demonstrated in various ways, such as Ziv-Zakai error bound \cite{Tsang2012ziv-zakai,berry2015QuantumBellZivZakaiBoundsHeisenbergLimitsWaveform} (see also the channel-discrimination perspective in Refs.~\cite{walter2014LowerBoundsQuantumParameterEstimation,meyer2025QuantumMetrologyFiniteSampleRegime}).

Regarding the thermometry problem, the parameterized channels are of the form $\{e^{\cL_\beta t}\}$.
Here the interrogation time $t$ is an important resource in non-equilibrium quantum thermometry \cite{Correa2015IndividualQuantumProbes}. 
Practical constraints often limit the allowable interrogation time. 
One may either estimate the temperature after total evolution time $t_s$ or partition it into $n$ independent runs, each of duration $\Delta t = t_s/n$. 
It has been demonstrated that optimal thermometric performance can only be achieved in the limit $\Delta t \rightarrow 0$ \cite{Correa2015IndividualQuantumProbes,chin2012QuantumMetrologyNonMarkovianEnvironments}. 
Therefore, we focus on the short-time regime where $\Delta t$ is considered as infinitesimal and will later provide numerical results in the finite-$\Delta t$ regime.
The temperature estimation problem in this case reduces to maximizing the following trace norm:
\begin{align}
    \max_{\varrho_0} \lone{\partial_\beta \cL[\varrho_0]}\label{eq:max-problem}.
\end{align}
The optimal initial state is denoted by $\rho^\star$, i.e., 
\begin{align}\label{eq:opt_state}
    \rho^\star:= \arg \max_{\varrho_0} \lone{\partial_\beta \cL [\varrho_0]}.
\end{align}
As the problem is not convex, finding the global optimum is generally challenging.
We will later introduce an analytical tight upper bound for the objective function that allows us to identify the optimal initial state.

\medskip

\noindent\emphbold{Optimal thermometry implies QMpE. }
We now compare the thermalization processes between the optimal initial state for thermometry $\rho^\star$, defined by Eq.~(\ref{eq:opt_state}), and a random initial state. 
Specifically, we say that $\varrho^{(1)}$ exceeds $\varrho^{(2)}$ in thermalization if there exists $t'\in[0,\infty)$ such that for all $t\ge t'$, the Frobenius distance between the first state and the thermal state remains consistently smaller than that of the second state, i.e., $\forall t\ge t',\fnorm{\varrho_{t}^{(1)} - \tau_\beta} \le \fnorm{\varrho_{t}^{(2)} - \tau_\beta}$, where both states evolve under the same dynamics, i.e., $\varrho_t^{(1)}= \exp(\cL t)[\varrho^{(1)}]$ and $\varrho_t^{(2)}= \exp(\cL t)[\varrho^{(2)}]$.
We note that the criterion for exceeding can be defined using alternative distance measures or even thermomajorization \cite{ruch1978MixingDistance,horodecki2013FundamentalLimitationsQuantumNanoscaleThermodynamics,marshall2011InequalitiesTheoryMajorizationApplications,streltsov2017ColloquiumQuantumCoherenceResource,chitambar2019QuantumResourceTheories,lipkabartosik2024CatalysisQuantumInformationTheory,brandao2015SecondLawsQuantumThermodynamics,lostaglio2015DescriptionQuantumCoherenceThermodynamicProcesses,vu2025ThermomajorizationMpembaEffect}. Here, we adopt the Frobenius distance for conciseness.
We present the following theorem establishing a crucial connection between optimal thermometry and QMpE:
\smallskip
\begin{theorem}\label{thm:prob-of-exceeding}
    Let $\rho_t^{\mathrm{ref}} = \exp(\cL t)[\rho^{\mathrm{ref}}]$ be the evolved state starting from a random reference state $\rho^{\mathrm{ref}} = (1-\alpha)\tau_\beta + \alpha \sigma$. 
    Here $\alpha \in (0,1]$ and $\sigma$ is a Haar random pure state ($\sigma = U \ketbra{0}{0} U^\dg,U\sim\mu_H$). 
    For $d\ge 3$, the probability of $\rho^{\mathrm{ref}}_t=\exp(\cL t)[\rho^{\mathrm{ref}}]$ being exceeded by $\rho^\star_t = \exp(\cL t)[\rho^\star]$ is at least $1-\delta$, with an exponentially decaying $\delta$:
    \begin{align}\label{eq:delta-exp-bound-main}
        \delta \le 2\exp\!\left(
        -\frac{d}{36\pi^3}\left[\frac{(d-2)(d-1)}{d(d+1)} - \frac{11\,\varepsilon^2}{10\alpha^2} g^2\right]^{\,2}
        \right).
    \end{align}
    Here $\rho^\star$ is the optimal state maximizing (\ref{eq:max-problem}), $g :=\left|\partial_\omega \log\left[e^{\omega\beta}\Gamma_\beta(\omega)\right]\right|_{\omega = \omega_{\rm opt}}$, and $M>0$ is a dimension-independent constant. For $d=2$, $\delta = 0$.
\end{theorem}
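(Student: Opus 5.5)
The plan has four steps: diagonalize the Davies generator, locate $\rho^\star$ in that spectrum, reduce exceeding to one scalar condition on $\sigma$, and then apply concentration of measure on the Haar-random pure state. I expect the third step to be the main obstacle.

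\emph{Step 1: spectral structure.} Because the unitary and dissipative parts of $\cL_\beta$ commute, the Liouvillian splits into a population block, which is a classical rate matrix with fixed point $\tau_\beta$, and coherence blocks. Each coherence $\ketbra{i}{j}$ is an eigenoperator, with decay rate equal to half the total escape rates of levels $i$ and $j$. In the two-band topology with $\omega_0=0$ and $|\omega_i-\omega_j|\le\varepsilon$, all excited levels have escape rate $\Gamma_\beta(-\omega_{\rm opt})$ up to corrections $O(\varepsilon\,\partial_\omega\Gamma_\beta)$. Writing $\Gamma_\beta(\omega)=e^{-\omega\beta}\,[e^{\omega\beta}\Gamma_\beta(\omega)]$, these corrections are controlled by the logarithmic derivative $g$, which is how $g$ enters.

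I would sort the nonzero eigenvalues into two groups:
\begin{itemize}
\item[(a)] The ground--excited coherences and the single population mode moving weight between the ground state and the excited band. These decay at the larger rate, roughly the sum of up and down rates.
\item[(b)] The excited--excited coherences and the $d-2$ traceless population differences inside the excited band. These decay at the slower rate, roughly the down rate alone, split only by $O(\varepsilon g)$.
\end{itemize}

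\emph{Step 2: locating $\rho^\star$.} Using the analytic upper bound on \eqref{eq:max-problem} announced before the theorem, I would show that $\rho^\star$ attains it. Its deviation $\rho^\star-\tau_\beta$ then lies in the span of modes of type (a), together with at most an $O(\varepsilon)$ leakage into type (b). Consequently $\fnorm{\rho^\star_t-\tau_\beta}$ decays at the fast rate, up to a type-(b) component of size $O(\varepsilon g)$.

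\emph{Step 3: reduction to a scalar condition (main obstacle).} For $\rho^{\mathrm{ref}}$, the type-(b) component of $\rho^{\mathrm{ref}}-\tau_\beta$ is $\alpha P_b(\sigma)$, where $P_b$ projects onto the excited-band traceless operators. If $\alpha^2\fnorm{P_b(\sigma)}^2$ exceeds the squared type-(b) amplitude of $\rho^\star$, then $\rho^{\mathrm{ref}}_t$ asymptotically carries more weight on the slow modes, and $\rho^\star$ exceeds it for all large $t$. The difficulty is that the slow modes are only nearly degenerate, with splittings $O(\varepsilon g)$. I would handle this by bounding the Frobenius norm of each type-(b) sector between $e^{-\lambda_{\min}t}$ and $e^{-\lambda_{\max}t}$ times its initial norm. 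I would then absorb the spread and $\rho^\star$'s leakage into a threshold of the form $\fnorm{P_b(\sigma)}^2 > \tfrac{11}{10}\varepsilon^2g^2/\alpha^2$, where the constant $11/10$ comes from a crude Taylor bound on the rates.

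\emph{Step 4: concentration.} Let $F(U)=\fnorm{P_b(U\ketbra{0}{0}U^\dg)}^2$. Haar moments give
\begin{align*}
\mathbb{E}[F]=\frac{(d-2)(d-1)}{d(d+1)},
\end{align*}
which is the weight of a random pure state on the traceless part of a $(d-1)$-dimensional block. This function is $2$-Lipschitz on the unit sphere of $\mathbb{C}^d\cong\mathbb{R}^{2d}$. Lévy's lemma, $\Pr[|F-\mathbb{E}F|>s]\le 2\exp(-2ds^2/(9\pi^3L^2))$ with $L=2$ and $s$ equal to the gap between the mean and the threshold, yields \eqref{eq:delta-exp-bound-main}. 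For $d=2$ the excited band has dimension one, so there are no type-(b) modes. The comparison then becomes a deterministic statement about the two remaining rates, and it holds for every $\sigma$, giving $\delta=0$.
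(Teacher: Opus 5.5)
Your overall architecture matches the paper's: ground-state optimality, $O(\varepsilon g/\sqrt{d-1})$ overlaps of $\rho^\star$ with the $d-2$ slow population modes, a slow-mode lower bound for $\rho^{\mathrm{ref}}$, a threshold event, Haar second moments, and L\'evy's lemma. However, the mechanism you propose for Step~3 cannot work. Sandwiching the slow sectors between $e^{-\lambda_{\max}t}$ and $e^{-\lambda_{\min}t}$ leaves you needing $\alpha^2 e^{-2\lambda_{\max}t}F>\tfrac{11}{10}\varepsilon^2g^2e^{-2\lambda_{\min}t}$. That is a threshold $\tfrac{11}{10}\varepsilon^2g^2e^{2(\lambda_{\max}-\lambda_{\min})t}/\alpha^2$, which diverges as $t\to\infty$. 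A Taylor bound $\lambda_{\max}-\lambda_{\min}=O(\varepsilon g)$ controls the exponent, not the prefactor, so no fixed constant works uniformly in $t\ge t'$.

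In the paper the $11/10$ plays a different role. It is the prefactor $M$ of the convergence lemma, absorbing the fast population mode and the $O(\varepsilon)$ non-orthogonality cross terms in $\rho^\star$'s expansion. The spread is not absorbed at all. The same factor $e^{-2\Lambda_{\min}t}$ appears both in the upper bound for $\rho^\star$ and in the lower bound $\alpha^2e^{-2\Lambda_{\min}t}f(\psi)$ for $\rho^{\mathrm{ref}}$. That lower bound requires every excited--excited coherence to decay at exactly $\Lambda_{\min}$, so it is exact only in the degenerate case, where one even gets $\delta=0$ almost surely. The obstacle you flag is therefore real, and the paper's argument does not resolve it either.

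Note also that any argument uniform in $t$ must be mode-resolved rather than sector-wise. As $t\to\infty$ only the single slowest mode survives, and at $\varepsilon>0$ this can be a slow population mode. With distinct rates, $\rho^\star$ has nonzero overlap with every population mode. The comparison is then between that $O(\varepsilon)$ overlap and the single statistic $\alpha\sum_{j\ge1}v_j|\psi_j|^2$. The probability that this statistic falls below $\rho^\star$'s overlap is not exponentially small in $d$, and your scalar $F$ does not control it.

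There are two smaller slips.

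\emph{The mean of $F$.} Your $F=\|P_b(\sigma)\|_2^2$ includes the traceless population differences. For a pure state, $F=\tfrac{d-2}{d-1}\big(\sum_{j\ge1}|\psi_j|^2\big)^2$, whose Haar mean is $\tfrac{d-2}{d+1}$. The value $\tfrac{(d-1)(d-2)}{d(d+1)}$ is the mean of the coherence-only function $f(\psi)=\sum_{i\neq j\ge1}|\psi_i|^2|\psi_j|^2$ that the paper uses. Since $F\ge f$, the stated bound still follows. The paper's restriction to $f$ is deliberate, though. Each excited--excited $|i\rangle\langle j|$ is an exact eigenoperator, orthogonal to everything else, for every $\varepsilon$, so that lower bound has no cross terms. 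At $\varepsilon>0$, by contrast, the traceless population subspace is neither invariant nor orthogonal to the other population eigenoperators, and $P_b(\tau_\beta)\neq0$.

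\emph{The ground--excited coherences.} By your own formula (half the sum of the escape rates $(d-1)\gamma\bar n$ and $\gamma(1+\bar n)$), they decay at $\tfrac12\gamma(1+d\bar n)$. This is half the fast population rate, and it lies below $\gamma(1+\bar n)$ whenever $(d-2)\bar n<1$, so these modes are not type~(a). This is harmless for $d\ge3$, since $\rho^\star$ is diagonal. It is, however, exactly the mechanism for $d=2$, which the paper merely asserts. For a qubit the coherence is the slow mode, $\rho^\star$ has none, and $\psi_0\psi_1^*\neq0$ almost surely. This gives $\delta=0$ almost surely, not for every $\sigma$: for $\sigma=|0\rangle\langle0|$ and $\alpha<1$, $\|\rho^{\mathrm{ref}}_t-\tau_\beta\|_2=\alpha\|\rho^\star_t-\tau_\beta\|_2$ for all $t$.
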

\smallskip

\Cref{thm:prob-of-exceeding} indicates that the optimal initial state for thermometry exhibits QMpE with probability $1-\exp(-\Omega(d))$ when compared to a random reference state, which is reminiscent of the quantum strong Mpemba effect \cite{zhang2025ObservationQuantumStrongMpembaEffect,furtado2025EnhancedQuantumMpembaEffectSqueezedThermal} where a state thermalizes faster than all other initial states.
It also reveals a non-typicality of the initial state in quantum information processing tasks, standing in contrast to the typicality phenomenon where relaxation dynamics become nearly initial-state-independent as system size increases \cite{bao2026InitialStateTypicalityQuantumRelaxation}.
A schematic illustration is provided in \cref{fig:traj-illustration}. The shaded ribbons representing the $\beta$-family trajectories of $\{\mathcal{L}_{\beta'}\}_{\beta' \in [\beta-\delta\beta,\beta+\delta\beta]}$ {gradually expand} throughout the entire evolution. 
Although they eventually converge to the same family of thermal states, given a finite interrogation time, the trajectories associated with $\rho^*$ expand more than those of $\rho^{\mathrm{ref}}$, yielding {enhanced} thermometric sensitivity. 
Moreover, despite being initially farther from $\tau_\beta$ than the reference state, {$\rho^\star$ reaches} the vicinity of thermal equilibrium more rapidly, thereby illustrating the quantum Mpemba effect in an operational thermometric setting.

\begin{figure}
    \centering
    \includegraphics[width=0.85\linewidth]{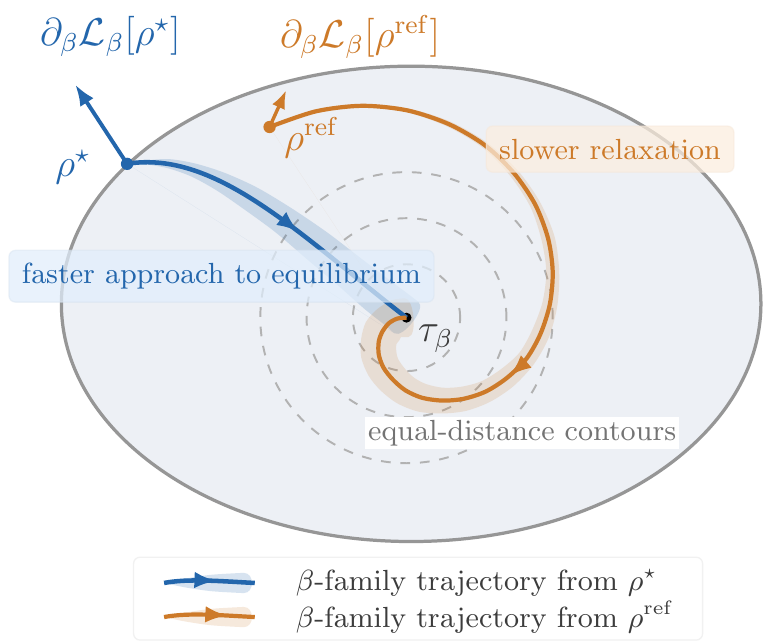}
    \caption{Conceptual illustration of the main result: thermometric optimality implies {accelerated convergence toward thermal equilibrium}. The blue trajectory starts from the optimal probe state $\rho^\star$, which maximizes the short-time thermometric sensitivity, $\|\partial_\beta \mathcal{L}_\beta(\rho)\|_1$. The orange trajectory starts from a reference state $\rho^{\mathrm{ref}}$, {whose} {sensitivity vector of smaller magnitude} indicates a weaker thermometric response. The shaded ribbons {characterize} the associated $\beta$-family trajectories, {representing} the smooth one-parameter families of states generated from $\rho^\star$ and $\rho^{\mathrm{ref}}$ under the continuous family of generators $\{\mathcal{L}_{\beta'}\}_{\beta' \in [\beta-\delta\beta,\beta+\delta\beta]}$.}
    \label{fig:traj-illustration}
\end{figure}

There are two important remarks regarding the theorem.
First, the conclusion holds for any fixed $\alpha$, including $\alpha = 1$ (i.e., when the initial state is a Haar random pure state).
The bound is non-trivial whenever the bracketed term in the exponent is positive.
In particular, this holds for $\alpha=\Omega(\varepsilon)$ when $g=O(1)$.
Second, the time at which the exceeding occurs is not specified; it could, in principle, be any finite time within the interval $[0,\infty)$.
In particular, if the exceeding occurs at $t=0$, it signifies that the optimal state for thermometry is already closer to the thermal state than a typical random state at the initial time—a scenario we include as a trivial case.
This trivial case can be excluded by choosing a smaller $\alpha$  so that $\rho^{\mathrm{ref}}$ is initially closer to $\tau_\beta$ than $\rho^\star$.

It is worth noting that the performance for thermometry is characterized by its short-time behavior as $\Delta t \rightarrow 0$, whereas QMpE is defined in the finite-time regime with $t\in[0,\infty)$.
Crucially, our results show that optimal short-time thermometric performance implies the occurrence of QMpE at finite time with probability at least $1-\exp(-\Omega(d))$.
However, the converse is not necessarily true.
In our setting, maximizing the local distinguishability selects a pure initial state, whereas the thermalization speed need not be monotone in purity.

\medskip
\noindent\emphbold{Sketch of the proof. }
First, we show how we can determine the optimal initial state for thermometry. 
We utilize the property of the Davies map that allows us to decompose the states into block-diagonal form. 
For convenience, we denote the population subspace by $\mathcal{P} := \text{span}\{\ket{j}\bra{j} \mid j=0,1,\ldots,d-1\}$ and the coherence subspace by $\mathcal{C} := \text{span}\{\ket{i}\bra{j} \mid i,j=0,1,\ldots,d-1, i\neq j\}$.
Since $\cL$ is of Davies form, it does not couple the population subspace with the coherence subspace, i.e., $\cL$ only maps from $\mathcal{P}$ to $\mathcal{P}$ and from $\mathcal{C}$ to $\mathcal{C}$.
This property permits the following decomposition.
Consider a general pure state $\varrho_0 = \ket{\psi}\bra{\psi}$ with $\ket{\psi} = \sqrt{\eta} \ket{0} + \sqrt{1-\eta} \sum_{j=1}^{d-1} c_j \ket{j}$, where $\eta\in[0,1]$ and $\sum_{j=1}^{d-1} |c_j|^2 = 1$. 
The operator $\partial_\beta \cL[\varrho_0]$ then decomposes as follows:
\begin{align}
    \begin{pmatrix}
        (1-\eta) A + \eta B& \sqrt{\eta(1-\eta)} \partial_\beta \cL[\ket{0}\bra{\til\psi}]\\
        \sqrt{\eta(1-\eta)} \partial_\beta \cL[\ket{\til\psi}\bra{0}] & (1-\eta)a + \eta b,
    \end{pmatrix}\label{eq:block-matrix}
\end{align}
where $A\oplus a$ constitutes the block-diagonal matrix defined by $\partial_\beta\cL[\ket{\til\psi}\bra{\til\psi}]$, $B\oplus b$ is the diagonal matrix equal to $\partial_\beta\cL[\ket{0}\bra{0}]$, and $\ket{\til\psi} = \sum_{j=1}^{d-1} c_j \ket{j}$.
To show that the ground state is optimal, it is equivalent to showing that the maximum of the trace norm of (\ref{eq:block-matrix}) is achieved at $\eta = 1$.
While in general $\eta = 1$ is not always the maximum point, there are certain conditions as demonstrated in \cref{lem:trace-norm-bound-2-main-text} (see Methods).
We show these conditions are satisfied by (\ref{eq:block-matrix}), and then its trace norm reduces to the following form:
\begin{align}\label{eq:main-text-optimal-probe-inequality-roof}
    \lone{\partial_\beta \cL [\varrho_0]} \le 2\left|\sum_{j=1}^{d-1}\frac{\partial}{\partial \beta}\Gamma_\beta(\omega_j)\right|.
\end{align}
The RHS of the bound is independent of $\varrho_0$, with equality being achieved if and only if $\varrho_0$ is the ground state.
By looking into the summation terms in \cref{eq:main-text-optimal-probe-inequality-roof}, we find that it is exactly the sum over all the weights of transitions that connect to the ground state.
This can be generalized to other energy-level coupling topologies, where, 
with eigenstate input, the short-time thermal sensitivity is always set by \cref{eq:main-text-optimal-probe-inequality-roof}.
This suggests that the optimal eigenstate is the one that has the most connectivity to other eigenstates.
In that case, the optimal state is not necessarily the ground state.

Second, we analyze how fast the ground state thermalizes compared to other states.
We do so by analytically solving the Liouvillian $\cL_\beta$.
The system state after evolution time $t$ can be expressed as follows \cite{Carollo2021ExponentiallyAcceleratedApproach,Moroder2024MpembaEffectThermo}
\begin{align}
    \varrho_t = e^{\cL_\beta t}[\varrho_0] = \hat r_1 + \sum_{i=2}^{d^2} e^{t \lambda_i} \Tr(\hat l_i^\dg \varrho_0) \hat r_i,
\end{align}
where $r_i$ and $l_i$ are the $i$th eigen-operators of $\cL_\beta$ and its dual map is defined as $\cL^+_\beta[\cdot]= i [\bH,\cdot]+\sum_j\left(L_j^\dg \cdot L_j-\{L_j^\dg L_j,\cdot\}/2\right)$.
$\lambda_i$ denotes the $i$th eigenvalue.
The eigenvalues and eigenoperators can be found by solving the eigenequations: $\cL[\hat r_i] = \lambda_i \hat r_i$ and $\cL^+[\hat l_i] = \lambda_i^* \hat l_i$.
Note that $\{r_i\}$ and $\{l_i\}$ themselves are not orthogonal since $\cL_\beta$ is not normal.

Importantly, the Davies map is naturally diagonalized within the coherence subspace $\mathcal{C}$.
We restrict our attention to the $d$-dimensional population subspace $\mathcal{P}$.
Directly solving $\cL_\beta$ is challenging, so we can first solve the zero-detuning case ($\varepsilon = 0$) and generalize the result using perturbation theory.
In the zero-detuning case, the map $\cL_\beta$ preserves the permutation symmetry among the excited states; consequently, the state $\rho_s := \sum_{j=1}^{d-1} \ket{j}\bra{j}/({d-1})$ and the ground state form a closed two-dimensional subspace under the dynamics.
This reduced system admits an exact analytical solution. 
It can be verified that one of the eigenoperators of $\cL_\beta$ within this subspace is the Gibbs state.
Furthermore, one can verify that the complementary subspace within $\mathcal{P}$ orthogonal to both $\ket{0}$ and $\rho_s$ constitutes a degenerate eigensubspace of $\cL_\beta$. 
Therefore, it is always possible to choose eigenoperators within this subspace that are mutually orthogonal.
As a result, the only pair of non-orthogonal eigenoperators arises from the two-dimensional subspace spanned by $\ket{0}$ and $\rho_s$.
This fact allows us to use the eigen-decomposition to precisely characterize the convergence behavior {via} \Cref{eq:convergence-bound-orthogonal-case}.
The squared Frobenius norm of the deviation from {the} steady state simplifies to:
\begin{align}\label{eq:convergence-bound-orthogonal-case}
    \fnorm{\varrho_t - \hat r_1}^2 &= \sum_{i=2}^{d^2} \left| \Tr(\hat l_i^\dg \varrho_0) \right|^2 e^{2\Re(\lambda_i) t}.
\end{align}
We then demonstrate that a small detuning constitutes only a minor perturbation, altering eigenoperators and eigenvalues by at most $O(\varepsilon)$.
Utilizing the results of the spectral analysis, we observe that the optimal state thermalizes via the fastest decay mode. Specifically, the distance to equilibrium scales as $\fnorm{\rho_t^\star - \hat r_1}^2 \sim e^{-2\Lambda_{\max}t}$, where $\Lambda_{\max} = \max_j \{|\Re(\lambda_j)|\}$.
Finally, we invoke concentration of measure for Haar-random pure states and obtain an exponentially small upper bound on the failure probability $\delta$. The full proof can be found in the Methods section.

\begin{figure}
    \centering
    \includegraphics[width=0.8\columnwidth]{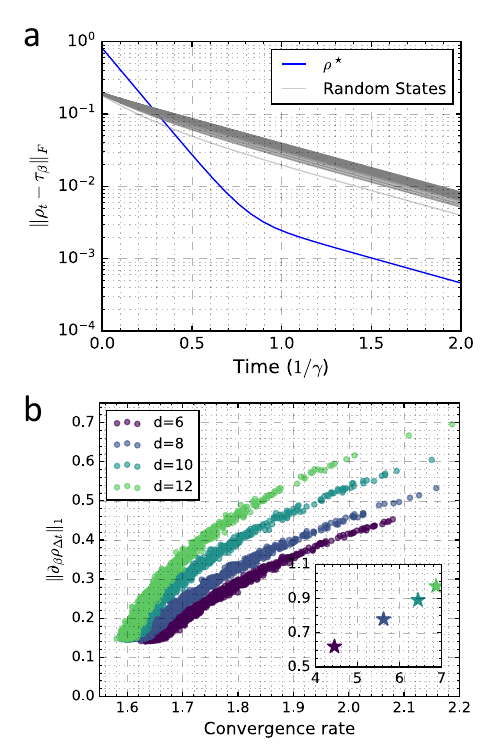}
    \caption{Numerical results. (a) The trace distance between the evolved state and the thermal state for two different categories of initial states: the optimal state for thermometry (ground state) and a random initial state. (b) The trace norm $\lone{\partial_\beta \brho_{\Delta t}}$ and the convergence rate. The convergence rate is the coefficient fitted through linear regression of $\log \fnorm{\brho_{t} - \tau_\beta}$ versus time $t$.}
    \label{fig:numerics}
\end{figure}

\medskip
\noindent\emphbold{Numerical results. }
We further demonstrate our results through two numerical experiments.
First, we investigate the presence of QMpE in optimal thermometry using a quantum system with dimension $d=10$.
As shown in \cref{fig:numerics}(a), the state $\rho^\star$ (ground state) demonstrates faster thermalization {than} random initial states. 
Here random states are constructed as $\rho_0 = (1-\alpha)\tau_\beta + \alpha \sigma$ with $\alpha = 0.2$ and $\sigma$ being a Haar random pure state.
{These observations are consistent with \Cref{thm:prob-of-exceeding}, which predicts an exponentially small failure probability in $d$ via concentration of measure.}
Subsequently, we examine non-optimal thermometry by comparing the thermometry performance between faster-thermalized and slower-thermalized states.
In \cref{fig:numerics}(b), we present the local distinguishability $\lone{\partial_\beta \brho_{\Delta t}}$ along with the fitted convergence rates.
The convergence rate is determined by performing linear regression on $\log \fnorm{\brho_{t} - \tau_\beta}$ against time $t$ within the interval $t\in[0,1 / \gamma]$.
The initial states are Haar random pure states.
For both numerical experiments, we employ the following parameters: energy gap $\omega_1 - \omega_0 = 1$, $\gamma=1$, $\varepsilon = 0.05$, $\Delta t=0.1$, and $\beta = 1$.
{The separation between the optimal point and typical random states in \cref{fig:numerics}(b) is consistent with concentration of measure, in line with the exponential tail bound in \Cref{thm:prob-of-exceeding}.}
The optimal state consistently exhibits a larger trace norm, indicating superior performance in temperature estimation during early-stage thermalization.
It also suggests that faster-thermalizing states {tend to} demonstrate superior thermometry performance compared to slower-thermalizing states in the majority of sampled instances.
Whether this holds universally, however, remains an open question.

\medskip
\section*{Conclusion and outlook}
In this work, we establish an operational connection between QMpE and non-equilibrium quantum thermometry in a Markovian setting.
We {rigorously show} that the optimal initial states for thermometry exhibit QMpE {with a failure probability that decreases exponentially with the probe dimension}.
Our theoretical findings elucidate a {previously unexplored} relation between quantum thermodynamics and thermometry.
From a practical perspective, our results suggest that {pure probe states with anomalously rapid thermalization can be advantageous} for temperature estimation, which could serve as a criterion for selecting initial states in non-equilibrium quantum thermometry.

While fast relaxation is often viewed as harmful for metrology due to {environment-induced} information washout \cite{huelga1997ImprovementFrequencyStandardsQuantumEntanglement,zhou2018AchievingHeisenbergLimitQuantumMetrologyUsing,zhou2021AsymptoticTheoryQuantumChannelEstimation,liu2024FullyOptimizedQuantumMetrologyFrameworkToolsApplications,liu2023OptimalStrategiesQuantumMetrologyStrictHierarchy,liu2017QuantumParameterEstimationOptimalControl,li2025RecoveringOptimalPrecisionQuantumSensingTime,liu2020QuantumFisherInformationMatrixMultiparameterEstimation},
our results provide a counterpoint in a solvable Markovian thermometric setting:
{a faster} approach to equilibrium can coexist with enhanced short-time temperature distinguishability.
It would be valuable to {map the tradeoff between relaxation and distinguishability} under {relevant} constraints on interrogation time and measurements in quantum metrological tasks beyond thermometry.

Another interesting generalization is to move beyond Davies-type Markovian generators.
Non-Markovianity is known to enhance metrological {performance} in certain regimes
\cite{chin2012QuantumMetrologyNonMarkovianEnvironments,aiache2024NonMarkovianEnhancementNonequilibriumQuantumThermometry,altherr2021QuantumMetrologyNonMarkovianProcesses},
and Mpemba-like anomalies have also been reported in non-Markovian dynamics \cite{strachan2025NonMarkovianQuantumMpembaEffect}.
It remains open whether the metrological optimality--QMpE connection persists,
and if so, which dynamical features (e.g., {environmental memory} or time-dependent Liouvillians) contribute {most significantly} in metrological {tasks}.

\medskip
\section*{Methods}
This section is {organized} as follows. First, we introduce QMpE in Markovian systems, including {relevant caveats}. 
Then, {the next three subsections present} our proofs, including {(i) determining} the optimal probe state for thermometry [the proof of \cref{eq:main-text-optimal-probe-inequality-roof}], {(ii) analyzing} the spectrum of $\cL_\beta$, and {(iii) the proof of} \cref{thm:prob-of-exceeding}.
To maintain an appropriate length, we present only the key arguments and state some less central lemmas without proof. 
One can find the full proofs in the SM \cite{SM}.

\subsection*{QMpE in thermal Markovian dynamics}
We first {review} the Markovian open quantum dynamics.
The time {derivative} of {the state} of a quantum system {subject} to Markovian dynamics is {governed} by a linear map $\cL$ ({the} Liouvillian):
\begin{align}
    \frac{\d \brho}{\d t}=\cL[\brho],
\end{align}
where $\brho$ is the density matrix {of the system} and $\cL$ is defined as
\begin{align}
    \cL[\cdot] &= -i[\bH,\cdot] + \cD[\cdot],\\
    \cD[\cdot] &=\sum_j \left(L_j \cdot L_j^\dg-\frac{1}{2} \{L_j^\dg L_j, \cdot\}\right).
\end{align}
Here $\{L_j\}$ are {Lindblad} jump operators.
If {the Gibbs state is} its unique fixed point, the temperature parameter is intrinsically encoded in the dynamics.

Characterizing QMpE {for} general input states is challenging.
In general, one can analyze the thermalization behavior through {the spectral decomposition of the Liouvillian}. 
The system state after {an} evolution time $t$ can be expressed as follows \cite{Carollo2021ExponentiallyAcceleratedApproach,Moroder2024MpembaEffectThermo}
\begin{align}
    \varrho_t = e^{\cL t}[\varrho_0] = \hat r_1 + \sum_{i=2}^{d^2} e^{t \lambda_i} \Tr(\hat l_i^\dg \varrho_0) \hat r_i,
\end{align}
where $d$ is {the} system dimension, $\hat r_i$ and $\hat l_i$ are the $i$th {right} and {left} eigenoperators of $\cL$, and its {Hilbert--Schmidt adjoint} (dual map) is defined as $\cL^+[\cdot]= i [\bH,\cdot]+\sum_j\left(L_j^\dg \cdot L_j-\{L_j^\dg L_j,\cdot\}/2\right)$.
$\lambda_i$ denotes the $i$th eigenvalue.
The eigenvalues and eigenoperators can be found by solving the {eigenvalue equations}: $\cL[\hat r_i] = \lambda_i \hat r_i$ and $\cL^+[\hat l_i] = \lambda_i^* \hat l_i$, which can be {computed via} vectorization (see the SM \cite{SM}).
These eigenoperators satisfy the {biorthogonality} relation $\Tr(\hat l_i^\dg \hat r_j) = \delta_{ij}$.
We remark that this does not {uniquely} fix the norms of $\{\hat l_i\}$ and $\{\hat r_i\}$.
For convenience, we set $\fnorm{\hat r_i} = 1$.
Notice that the eigenvalues $\lambda_i$ {govern} the decay rates of different decay modes.
Therefore, we list the eigenvalues in {nondecreasing} order of the magnitude of {their real parts}, i.e., $|\Re(\lambda_1)|\le |\Re(\lambda_2)|\le...\le|\Re(\lambda_{d^2})|$, where $\lambda_1=0$ corresponds to the steady state of the system.
In the long-time limit, the {relaxation} is dominated by the slowest decaying mode, i.e., $\Re(\lambda_2)$.
If we take a state that has non-trivial overlap with $\hat l_2$ as a reference, a state can thermalize exponentially faster when it has zero overlap with $\hat l_2$.
Then it is reasonable to expect that the latter will exceed the former if they start with {appropriately chosen} initial distances.
Therefore, $\Tr(\hat l_2^\dg \varrho_0)=0$ has been considered as a {convenient indicator} of QMpE \cite{Carollo2021ExponentiallyAcceleratedApproach,Moroder2024MpembaEffectThermo}, which, strictly speaking, is neither necessary nor sufficient.
More generally, QMpE may occur even when a state has nonzero overlap with all decay modes.

{A related structural feature} is the non-normality of the Liouvillian $\cL$, which means $\cL \circ \cL^+ \neq \cL^+ \circ \cL$. 
Consequently, the {right} eigenoperators of $\cL$ are not necessarily orthogonal, i.e., $\Tr(\hat r_i^\dg \hat r_j) \neq 0$ for some $i \neq j$. 
The non-normality can be proved as follows. 
Since $\hat r_1$ represents the steady state, $\cL[\hat r_1] = 0$.
Trace preservation requires $\Tr(\cL[\hat x]) = 0$ for all {operators} $\hat x$, which is equivalent to $\Tr(\id \cL[\hat x]) = \Tr(\cL^+[\id] \hat x) = 0$. 
Hence, $\cL^+[\id] = 0$, confirming that $\id$ is a left eigenoperator with eigenvalue zero. 
Therefore, when the steady state $\hat r_1$ is {unique and not proportional to} $\id$ (i.e., not the maximally mixed state), $\cL$ is non-normal, since {normality would imply that $\id$ is also a right eigenoperator with eigenvalue zero}. 
Consequently, there exists at least one index $j \geq 2$ such that $\Tr(\hat r_j^\dg \hat r_1) \neq 0$. 
In specific instances, this constitutes the only non-orthogonal pair of eigenoperators, with orthogonality preserved among all others, i.e., $\Tr(\hat r_i^\dg \hat r_k) = 0$ for $i \neq k$ and $\{i,k\} \neq \{1,j\}$. 
Under this condition, the squared Frobenius norm of the deviation from {the} steady state simplifies to:
\begin{align}\label{eq:convergence-bound-orthogonal-case}
    \fnorm{\varrho_t - \hat r_1}^2 &= \sum_{i=2}^{d^2} \left| \Tr(\hat l_i^\dg \varrho_0) \right|^2 e^{2\Re(\lambda_i) t}.
\end{align}
Otherwise, cross terms arising from non-orthogonal eigenoperator pairs contribute to the distance, complicating the analysis of the thermalization behavior \cite{longhi2025QuantumMpembaEffectNonNormalDynamics}. {Some recent studies exploit Davies-type assumptions under which orthogonality is guaranteed within the relevant subspaces} \cite{Moroder2024MpembaEffectThermo}.

\subsection*{Optimal probe state for thermometry}
In this subsection and the following ones, we write states in a vectorized representation.
Specifically, the vectorization {in Dirac notation} is defined as
\begin{align}
    |a\rangle\langle b| \xrightarrow{\textrm{vectorize}} \ket{a}\ket{b^*}.
\end{align}
For convenience, we define $\kket{a,b}:=\ket{a}\ket{b^*}$ as {the vectorized form of the operator} $|a\rangle\langle b|$. Accordingly, the linear map is converted into matrices as well, i.e., $\mathcal{A}[\ketbra{a}{b}]=\mb{A}\kket{a,b}$.

Before presenting the proof, we first introduce the following lemma:
\begin{lemma}\label{lem:trace-norm-bound-2-main-text}
For $\mb{A},\mb{B} \in \mathbb{C}^{d\times d}, \ub{c}\in\mathbb{C}^d$ and {$a,b\in \mathbb{C}$}, {with $\alpha\in[0,1]$}, the following inequality holds:
\begin{align}
  &\left\Vert 
    \begin{pmatrix} 
      \alpha \mb{A} + (1-\alpha)\mb{B} & \sqrt{\alpha(1-\alpha)} \ub{c} \\ 
      \sqrt{\alpha(1-\alpha)} \ub{c}^{\dg} & \alpha a +(1-\alpha) b 
    \end{pmatrix}
  \right\Vert_{1} 
   \nonumber\\
  &\quad\quad\quad\quad\quad\leq\max \left\{ \left\Vert \begin{pmatrix} \mb{A} & 0 \\ 0 & a \end{pmatrix} \right\Vert_{1}, \left\Vert \begin{pmatrix} \mb{B} & 0 \\ 0 & b \end{pmatrix} \right\Vert_{1} \right\}
\end{align}
if the following conditions are satisfied:
\begin{enumerate}
  \item $|a-b| \geq 2 \left\Vert \ub{c} \right\Vert_{2}$ or $\frac{2 \left\Vert \ub{c} \right\Vert_{2}^{2}+(a-b)b}{4 \left\Vert \ub{c} \right\Vert_{2}^{2} -(a-b)^{2}} \notin [0,1]$.
  \item $\left( \left\Vert \mb{A} \right\Vert_{1} - \left\Vert \mb{B} \right\Vert_{1} \right)\left(|a|-|b|\right) \geq 0$.
\end{enumerate}
\end{lemma}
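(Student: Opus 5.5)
The plan is to split the matrix into a diagonal-block part and a rank-two ``bordered'' part, bound each separately with the triangle inequality, and then use the two conditions to recombine the bounds at a common endpoint $\alpha\in\{0,1\}$. Write $s:=\sqrt{\alpha(1-\alpha)}$ and $x:=\alpha a+(1-\alpha)b$, and decompose the matrix $M(\alpha)$ on the left-hand side as
\begin{align}
M(\alpha)=\begin{pmatrix}\alpha\mb{A}+(1-\alpha)\mb{B}&0\\0&0\end{pmatrix}+\begin{pmatrix}0&s\,\ub{c}\\ s\,\ub{c}^{\dg}&x\end{pmatrix}=:D(\alpha)+N(\alpha).
\end{align}
The triangle inequality gives $\lone{M(\alpha)}\le\alpha\lone{\mb{A}}+(1-\alpha)\lone{\mb{B}}+\lone{N(\alpha)}$. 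Since $\lone{\mathrm{diag}(\mb{A},a)}=\lone{\mb{A}}+|a|$, the claim reduces to comparing this sum with $\max\{\lone{\mb{A}}+|a|,\lone{\mb{B}}+|b|\}$.

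\textbf{Step 1: compute $\lone{N(\alpha)}$.} The matrix $N$ acts nontrivially only on the span of $\ub{c}/\Vert\ub{c}\Vert_2$ and the last basis vector. On that span it equals the $2\times2$ matrix $\begin{pmatrix}0&u\\u&x\end{pmatrix}$ with $u=s\Vert\ub{c}\Vert_2$. Its two singular values have product $|\det|=u^2$ and squared sum $2u^2+|x|^2$, so their sum is
\begin{align}
\lone{N(\alpha)}=f(\alpha):=\sqrt{|x|^2+4\alpha(1-\alpha)\Vert\ub{c}\Vert_2^2}.
\end{align}
In the intended application $a,b$ are real (diagonal entries of the Hermitian operator $\partial_\beta\cL[\varrho_0]$), and I would assume this. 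Then $f(\alpha)^2=q(\alpha)$ with
\begin{align}
q(\alpha)=\big[(a-b)^2-4\Vert\ub{c}\Vert_2^2\big]\alpha^2+\big[2b(a-b)+4\Vert\ub{c}\Vert_2^2\big]\alpha+b^2 .
\end{align}
Note $f(0)=|b|$ and $f(1)=|a|$.

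\textbf{Step 2: use condition 1.} There are two cases.

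If $|a-b|\ge2\Vert\ub{c}\Vert_2$, then $q$ is a convex quadratic that is nonnegative on $[0,1]$. Its square root is then convex; I expect this to be the delicate point. If $q$ has no real roots, $\sqrt{q}$ is the Euclidean norm of an affine map of $\alpha$. If $q$ has real roots, they lie outside the open interval $(0,1)$, or $q$ has a double root; in either case $\sqrt{q}$ is convex on $[0,1]$, which I would check directly. Convexity gives $f(\alpha)\le\alpha|a|+(1-\alpha)|b|$. Hence $\lone{M}\le\alpha(\lone{\mb{A}}+|a|)+(1-\alpha)(\lone{\mb{B}}+|b|)\le\max\{\cdot,\cdot\}$, and condition 2 is not even needed.

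Otherwise $q$ is concave, with stationary point exactly $\alpha^*=\frac{2\Vert\ub{c}\Vert_2^2+(a-b)b}{4\Vert\ub{c}\Vert_2^2-(a-b)^2}$. The assumption $\alpha^*\notin[0,1]$ makes $q$, and hence $f$, monotone on $[0,1]$. Therefore $f(\alpha)\le\max\{|a|,|b|\}$, attained at the endpoint where $|a|$ or $|b|$ is larger.

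\textbf{Step 3: use condition 2 in the monotone case.} The linear term satisfies $\alpha\lone{\mb{A}}+(1-\alpha)\lone{\mb{B}}\le\max\{\lone{\mb{A}},\lone{\mb{B}}\}$. Condition 2 says the larger of $\lone{\mb{A}},\lone{\mb{B}}$ and the larger of $|a|,|b|$ occur at the same endpoint, with ties harmless. Both maxima are therefore attained simultaneously, and the sum is at most $\max\{\lone{\mb{A}}+|a|,\lone{\mb{B}}+|b|\}$.

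The main obstacle is the convexity of $\sqrt{q}$ in the first case when $q$ has real roots. The fallback is a direct argument via the chord: show that $q(\alpha)\le(\alpha|a|+(1-\alpha)|b|)^2$ on $[0,1]$, which is a quadratic inequality with equality at both endpoints. It reduces to the sign of the leading-coefficient difference, $(a-b)^2-4\Vert\ub{c}\Vert_2^2-(|a|-|b|)^2$, and possibly a sign analysis when $ab<0$.
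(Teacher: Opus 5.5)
Your skeleton matches the paper's. Step 1 (triangle inequality plus the exact value $\Vert N(\alpha)\Vert_1=\sqrt{|x|^2+4\alpha(1-\alpha)\Vert c\Vert_2^2}$) gives exactly the bound the paper gets from its auxiliary block-matrix lemma, and your rank-two computation is a cleaner derivation than the paper's variational one. Your concave case plus Step 3 is the paper's argument.

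The gap is in the convex case. It is not true that $\sqrt q$ is convex on $[0,1]$ whenever $q$ is a convex quadratic that is nonnegative there. For $q=P\alpha^2+Q\alpha+R$ one has $(\sqrt q)''=(4PR-Q^2)/(4q^{3/2})$, and with your coefficients $4PR-Q^2=-16\Vert c\Vert_2^2\,(ab+\Vert c\Vert_2^2)$. This is negative as soon as $c\neq 0$ and $ab>-\Vert c\Vert_2^2$, for instance whenever $a,b$ have the same sign. That is exactly your sub-case of two distinct real roots outside $(0,1)$. There $\sqrt q$ is concave and lies \emph{above} the chord, so $f(\alpha)\le\alpha|a|+(1-\alpha)|b|$ fails.

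Your fallback does not rescue it. $(\alpha|a|+(1-\alpha)|b|)^2-q(\alpha)=K\alpha(\alpha-1)$ with $K=(|a|-|b|)^2-(a-b)^2+4\Vert c\Vert_2^2$, and $K\le 0$ only if $c=0$ or $ab\le-\Vert c\Vert_2^2$.

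Your claim that condition 2 is superfluous in this case is in fact false, not merely unproved. Take $d=1$, $A=0$, $B=-3$, $c=1$, $a=3$, $b=0$, $\alpha=1/2$. Condition 1 holds since $|a-b|=3\ge 2\Vert c\Vert_2$, and condition 2 fails. The left-hand matrix $\begin{pmatrix}-3/2&1/2\\1/2&3/2\end{pmatrix}$ has trace norm $\sqrt{10}$, which exceeds the right-hand side $\max\{0+3,\,3+0\}=3$.

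The repair is one line and puts you on the paper's route. In the convex case, bound $q$ rather than $\sqrt q$: a convex function on $[0,1]$ is maximized at an endpoint, so $q(\alpha)\le\max\{b^2,a^2\}$ and $f(\alpha)\le\max\{|a|,|b|\}$. This is the same conclusion you reach in the concave case. Both cases then go through your Step 3, where condition 2 is genuinely needed to rewrite $\max\{\Vert A\Vert_1,\Vert B\Vert_1\}+\max\{|a|,|b|\}$ as $\max\{\Vert A\Vert_1+|a|,\Vert B\Vert_1+|b|\}$. Your restriction to real $a,b$ is harmless relative to the paper, whose proof also treats $[\alpha a+(1-\alpha)b]^2$ as a real square.
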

The proof of the above lemma can be found in the SM \cite{SM}.
In order to determine the optimal probe state, we need to show that the trace norm of $\partial_\beta \cL[\rho]$ is upper bounded as shown in \cref{eq:main-text-optimal-probe-inequality-roof}.
The bound can be {saturated by choosing} $\rho$ to be the ground state.
We can prove this {by applying} \cref{lem:trace-norm-bound-2-main-text}.
\begin{proof}[Proof of \cref{eq:main-text-optimal-probe-inequality-roof}.]
The {Lindblad generator} is given by
\begin{align}
    \mb{L} = \sum_{j=1}^{d-1} g_j\left[ (1+\bar n_j) \bD^{(j\rightarrow 0)} + \bar n_j \bD^{(0\rightarrow j)}\right],
\end{align}
where $\bar n_{j} = 1/[\exp(\Delta_j \beta) - 1]$, $g_j \ge 0$, and $\Gamma_\beta(\Delta_j) = g_j \bar n_j$. Let $\dbn_j:=\partial_\beta \bar n_j$, then the derivative of $\bL$ reads
\begin{align}
    \partial_\beta \bL = \sum_{k=1}^{d-1} g_k\dbn_k\boldsymbol{\Upsilon}^{(k)},
\end{align}
with $\boldsymbol{\Upsilon}^{(k)} = \bD^{(0\rightarrow k)}+ \bD^{(k\rightarrow 0)}$. Define $\til{N}=\sum_{j=1}^{d-1} g_j\bar n_j$, and {denote its $\beta$-derivative by} $\dot{\til{N}}=\sum_{j=1}^{d-1} g_j\dbn_j$.

For a {normalized} state $\ket{\psi} = \sum_{j>0} \psi_j \ket{j}$, we calculate key quantities:
\begin{align}
    a &= \bbra{0,0}\partial_\beta \bL\kket{\psi,\psi} = \sum_{k=1}^{d-1} g_k\dbn_k |\psi_k|^2,\\
    b &= \bbra{0,0} \partial_\beta \bL \kket{0,0} = -\dot{\til N},\\
    \ltwo{\ub{c}} &= \frac{1}{2} \sqrt{\sum_{k=1}^{d-1} |\psi_k|^2 \left(g_k\dbn_k+\dot{\til{N}}\right)^2}.
\end{align}
We verify the conditions of \cref{lem:trace-norm-bound-2-main-text}:
(i) $|b| \ge |a|$ follows from {$\dbn_k<0$ for all $k$ and $\sum_{k=1}^{d-1}|\psi_k|^2=1$}, which imply {$|a|=-a\le -\sum_{k=1}^{d-1} g_k\dbn_k=|b|$}.
(ii) $(a-b)^2 \le 4\ltwo{\ub{c}}^2$: {By Jensen's inequality (equivalently, Cauchy--Schwarz)}, 
\begin{align}
    (a-b)^2 - 4\ltwo{\ub{c}}^2 &= {\left(\sum_{k=1}^{d-1} |\psi_k|^2 \left(g_k \dbn_k + \dot{\til{N}}\right)\right)^2} \nonumber\\
    &\qquad- {\sum_{k=1}^{d-1} |\psi_k|^2 \left(g_k \dbn_k + \dot{\til{N}}\right)^2} \\
    &\le 0.
\end{align}
(iii) $b(a-b)+2\ltwo{\ub{c}}^2 \le 0$ holds by direct expansion, which gives {$b(a-b)+2\ltwo{\ub{c}}^2=\frac{1}{2}\left(\sum_{k=1}^{d-1}|\psi_k|^2 (g_k\dbn_k)^2-\left(\sum_{k=1}^{d-1} g_k\dbn_k\right)^2\right)\le 0$} {since $g_k\dbn_k\le 0$ for all $k$}.
(iv) For $\mb{A}=-\frac{1}{2}(\ketbra{\tilde\phi}{\psi}+\ketbra{\psi}{\tilde\phi})$ ($\ket{\tilde\phi}=\sum {g_k}\dbn_k \psi_k \ket{k}$) and $\mb{B}=\sum g_k \dbn_k \ketbra{k}{k}$, we have $\lone{\mb{B}}={-\Tr(\mb{B})}=|\dot{\til{N}}|$ {(since $\mb{B}\le 0$)} and $\lone{\mb{A}}=\sqrt{\sum g_k^2\dbn_k^2 |\psi_k|^2} \le |\dot{\til{N}}|=\lone{\mb{B}}$.

By \cref{lem:trace-norm-bound-2-main-text}, we get
\begin{align}
    \lone{\partial_\beta \cL [\rho]} \le \lone{\mb{B}}+|b|=2|\dot{\til{N}}|=2\left|\sum_{j=1}^{d-1}\frac{\partial}{\partial \beta}\Gamma_\beta(\Delta_j)\right|.
\end{align}
Saturation by the ground state is {immediate}.
\end{proof}

\medskip

\subsection*{Eigendecomposition of $\cL_\beta$ at $\varepsilon =0$}
We give analytical solutions for the degenerate case ($\varepsilon=0$).
For conciseness, we let $\Delta$ be the energy gap, $\Gamma_\beta(\Delta) = \gamma \bar n$, and $\Gamma_\beta(-\Delta) = \gamma (1+\bar n)$ where $\bar n$ is the average boson excitation number $\bar n:= [\exp(\Delta\beta)-1]^{-1}$.
According to previous analysis, one can choose $\Delta = \omega_{\rm opt}$ to maximize the thermal sensitivity.

We first consider the coherence part.
As the map is diagonal in the coherence part, we have, in {the vectorized} representation, for $k\neq i, k>0,i>0$,
\begin{align}
    \bL^\dg \kket{k,i} = -\gamma(1+\bar n) \kket{k,i},
\end{align}
and
\begin{align}
    \bL^\dg \kket{k, 0} = -\frac{1}{2}[2\gamma + \gamma \bar n (d+1)] \kket{k,0}.
\end{align}
We consider the population space spanned by $\{\kket{0},\kket{k,k}, k\in[1,d-1]\}$.
The map can be written as
\begin{align}
    L\kket{k,k} &= \gamma (1+\bar n) \kket{0,0} - \gamma (1+\bar n) \kket{k,k},\\
    L\kket{0,0} &= -\gamma(d-1) \bar n \kket{0,0} + \gamma \bar n \sum_{j=1}^{d-1} \kket{j,j}.
\end{align}
We need to compute the eigenvectors and eigenvalues of the map $L^\dg$:
\begin{align}
    L^\dg \kket{0,0} &= \gamma(1+\bar n) \sum_{j=1}^{d-1} \kket{j,j} - \gamma (d-1) \bar n \kket{0,0},\\
    L^\dg \kket{k,k} &= -\gamma (1+\bar n) \kket{k,k} + \gamma \bar n \kket{0,0}.
\end{align}
The operator $ L^\dagger $ acts on a $ d $-dimensional basis $\{ \kket{0,0}, \kket{1,1}, \dots, \kket{d-1,d-1} \}$. 
Define the symmetric state $\kket{S} = \frac{1}{\sqrt{d-1}} \sum_{j=1}^{d-1} \kket{j,j}$. The action of $ L^\dagger $ is:
\begin{align}
    L^\dagger \kket{0,0} &= -\gamma (d-1) \bar{n} \kket{0,0} + \gamma (1 + \bar{n}) \sqrt{d-1} \kket{S},\\
    L^\dagger \kket{S} &= \gamma \bar{n} \sqrt{d-1} \kket{0,0} - \gamma (1 + \bar{n}) \kket{S}.
\end{align}
The restricted $ 2\times 2 $ matrix in $\{ \kket{0,0}, \kket{S} \}$ is:
$$
M = \begin{pmatrix}
-\gamma (d-1) \bar{n} & \gamma \bar{n} \sqrt{d-1} \\
\gamma (1 + \bar{n}) \sqrt{d-1} & -\gamma (1 + \bar{n})
\end{pmatrix}.
$$
Solve $\det(M - \lambda I) = 0$:
$$
\begin{vmatrix}
-\gamma (d-1) \bar{n} - \lambda & \gamma \bar{n} \sqrt{d-1} \\
\gamma (1 + \bar{n}) \sqrt{d-1} & -\gamma (1 + \bar{n}) - \lambda
\end{vmatrix} = 0.
$$
We obtain the following eigenvalues:
$$
\lambda_1 = 0, \quad \lambda_d = -\gamma(d\bar{n} + 1).
$$
For $\lambda_1=0$, the corresponding eigenvector (up to normalization) is
\begin{align}
    \kket{l_1} = \kket{0,0} + \sum_{j=1}^{d-1}\kket{j,j} = \kket{\id}.
\end{align}
For $\lambda_d=-\gamma (d\bar n +1)$, we have
\begin{align}
    \kket{l_d} = \bar{n}\sqrt{d-1}\kket{0,0} - (1+\bar n)\kket{S}.
\end{align}

The $(d-2)$ eigenvectors in the orthogonal subspace to $\kket{0,0}$ and $\kket{S}$:
$$
\kket{\psi} = \sum_{j=1}^{d-1} c_j \kket{j,j}, \quad \sum_{j=1}^{d-1} c_j = 0, \quad L^\dagger \kket{\psi} = -\gamma(1 + \bar{n}) \kket{\psi}.
$$
Eigenvalue $\lambda_k = -\gamma(1 + \bar{n})$ with multiplicity $d-2$. Basis:
$$
\kket{\psi_k} = \frac{ \kket{1,1} - \kket{k,k} }{ \sqrt{2} }, \quad k = 2, 3, \dots, d-1.
$$

All the eigenvalues and eigenvectors in the population space are summarized in \Cref{tab:eigen-population-main}.
\begin{table}
\begin{tabular}{|c|c|c|}
\hline
\text{Eigenvalue} & \text{Multiplicity} & \text{Eigenvector} \\
\hline
$0$ & $1$ & 
    $\dfrac{ \kket{0,0} + \displaystyle\sum_{j=1}^{d-1} \kket{j,j} }{ \sqrt{d} }$ \\
\hline
$-\gamma(d \bar{n} + 1)$ & $1$ & 
    $\dfrac{ -\bar{n} \sqrt{d-1} \; \kket{0,0} + \dfrac{1 + \bar{n}}{\sqrt{d-1}} \displaystyle\sum_{j=1}^{d-1} \kket{j,j} }{ \sqrt{ \dfrac{(1 + \bar{n})^2}{d-1} + (d-1)\bar{n}^2 } }$ \\
\hline
$-\gamma(1 + \bar{n})$ & $d-2$ & 
    $\dfrac{ \kket{1,1} - \kket{k,k} }{ \sqrt{2} }$ \quad for $k = 2, 3, \dots, d-1$ \\
\hline
\end{tabular}
\caption{Eigenvectors and eigenvalues of the linear map $L$ in population space.}
\label{tab:eigen-population-main}
\end{table}
Combining with the previous results on the coherence part, we have the following sequence of inequalities:
\begin{align}
    0=|\Re(\lambda_1)| &\le |\Re(\lambda_2)|=|\Re(\lambda_3)| \cdots  \nonumber\\
    &= |\Re(\lambda_{d-1})|\le |\Re(\lambda_c)| \le |\Re(\lambda_d)|.
\end{align}
The slowest decay rate is $\lambda_2 = -\gamma(1+\bar n)$. The corresponding eigenvectors are
\begin{align}
    \frac{ \kket{1,1} - \kket{k,k} }{ \sqrt{2} } \quad \text{ for } k = 2, 3, \dots, d-1.
\end{align}
To be orthogonal to the slowest decay mode, the initial state should be either the ground state or the equal superposition of {excited} states.
The fastest decaying eigenvector can be written as (up to normalization):
\begin{align}
    -\bar{n} \sqrt{d-1} \; \kket{0,0} + \dfrac{1 + \bar{n}}{\sqrt{d-1}} \displaystyle\sum_{j=1}^{d-1} \kket{j,j},
\end{align}
which {converges} to the ground state for large $d$.

\begin{figure}
    \centering
    \includegraphics[width=0.5\linewidth]{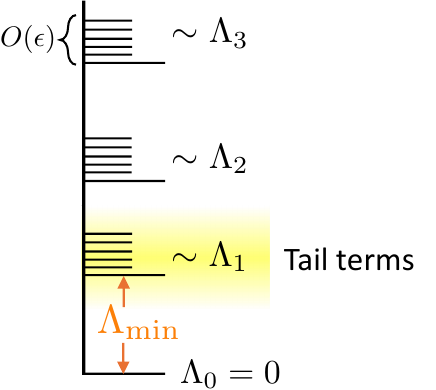}
    \caption{Spectrum of $\cL_\beta$ distributed near the unperturbed {decay rates}: $\Lambda_1 =\gamma(1+\bar n)$, $\Lambda_2 =[2\gamma + \gamma \bar n (d+1)]/2$, and $\Lambda_3 =\gamma (d\bar n + 1)$. The minimum gap $\Lambda_{\min}$ is $\Lambda_1$ {up to corrections of} $O(\varepsilon)$.}
    \label{fig:L-spec}
\end{figure}

\subsection*{Proof of \cref{thm:prob-of-exceeding}}
As we mentioned before, if the initial state has zero overlap with the lowest decay modes, it thermalizes faster than those having non-zero overlaps.
In our case, the zero-overlap condition is not satisfied {exactly}.
Instead, we have the following upper bound for the optimal probe state $\rho^\star$:
\begin{lemma}\label{thm:bound-perturbed-overlap-main}
    The optimal probe state for temperature estimation satisfies the following condition, 
    \begin{align}
        \left|\Tr(l^\dg_j \brho^{\star})\right| \le \frac{\varepsilon}{\sqrt{d-1}} \left|\frac{\partial}{\partial\omega}\log \left[e^{\omega \beta} \Gamma_\beta(\omega)\right]\right|_{\omega=\Delta},
    \end{align}
    for all {indices} $j$ that {satisfy} $|\Re(\lambda_j)|\in (0,(d-1)\Lambda_{\min}/2]$. 
    Here $\Delta:=\omega_1 - \omega_0$.
    If the excited states are exactly degenerate, the {condition is satisfied exactly},
    \begin{align}
        \Tr(l^\dg_j \brho^{\star}) = 0.
    \end{align}
\end{lemma}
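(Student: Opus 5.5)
The plan is to avoid general perturbation theory and compute the overlaps directly from the population-space structure of $\cL_\beta$. The optimal state is $\rho^\star=\ketbra{0}{0}$, which lies in $\mathcal{P}$. Since $\cL_\beta$ does not mix $\mathcal{P}$ and $\mathcal{C}$, every left eigenoperator lies in one of the two subspaces. Any $l_j$ in $\mathcal{C}$ therefore satisfies $\Tr(l_j^\dg\rho^\star)=0$ automatically. It remains to handle the population modes. By the spectral picture of the degenerate case (and \cref{fig:L-spec}), the population eigenvalues with $|\Re\lambda_j|\in(0,(d-1)\Lambda_{\min}/2]$ are exactly the $d-2$ modes emanating from $-\gamma(1+\bar n)$; the isolated mode near $-\gamma(d\bar n+1)$ lies above the threshold. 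I will take this separation as given from the $O(\varepsilon)$ eigenvalue-shift statement.

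\textbf{Secular equation.} Write $u_k=\Gamma_\beta(\omega_k)$ for the up-rates and $w_k=\Gamma_\beta(-\omega_k)=e^{\omega_k\beta}\Gamma_\beta(\omega_k)$ for the down-rates, with $k=1,\dots,d-1$. For a population vector $l=(l_0,l_1,\dots)$, the dual map acts as
\begin{align}
(\cL^+ l)_0&=\sum_k u_k(l_k-l_0),\\
(\cL^+ l)_k&=w_k(l_0-l_k).
\end{align}
The eigen-equation $\cL^+ l=\lambda l$ then gives $l_k=w_k l_0/(w_k+\lambda)$. There are two cases:
\begin{itemize}
\item either $l_0=0$, which forces $\lambda=-w_k$ and an exact zero overlap (this covers $\varepsilon=0$ and proves the second claim);
\item or $l_0\ne 0$ and $\lambda$ solves $1+\sum_k u_k/(w_k+\lambda)=0$.
\end{itemize}
The overlap we want is exactly $\Tr(l_j^\dg\rho^\star)=l_0$. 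The right eigenvector is obtained the same way: $r_k=u_k r_0/(\lambda+w_k)$. The convention $\fnorm{\hat r_j}=1$ gives $r_0^2\bigl(1+\sum_k u_k^2/x_k^2\bigr)=1$, where $x_k:=\lambda+w_k$. Biorthogonality gives $l_0 r_0\bigl(1+\sum_k u_k w_k/x_k^2\bigr)=1$. Combining the two,
\[
|l_0|=\frac{\sqrt{1+\sum_k u_k^2/x_k^2}}{1+\sum_k u_k w_k/x_k^2}.
\]

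\textbf{Bounding $|l_0|$.} The secular function is monotone between its poles, so the slow roots interlace the values $-w_k$. Hence every $|x_k|\le \max_k w_k-\min_k w_k$. Since $|\omega_i-\omega_k|\le\varepsilon$, the mean value theorem gives
\[
|x_k|\le \varepsilon\, w\,\bigl|\partial_\omega\log[e^{\omega\beta}\Gamma_\beta(\omega)]\bigr|_{\omega=\Delta}(1+O(\varepsilon)).
\]
This is where $g$ enters. Because all $x_k$ are $O(\varepsilon)$ while $u_k\approx u$ and $w_k\approx w$ are $O(1)$, the terms $u_k^2/x_k^2$ and $u_kw_k/x_k^2$ dominate the $1$'s. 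The ratio is then essentially
\[
|l_0|\approx\frac{\sqrt{\sum_k u^2/x_k^2}}{w\sum_k u/x_k^2}=\frac{1}{w\sqrt{\sum_k x_k^{-2}}}.
\]

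\textbf{Main obstacle.} The naive estimate $\sqrt{\sum_k x_k^{-2}}\ge 1/\max_k|x_k|$ loses the factor $1/\sqrt{d-1}$. To recover it I need $\sum_k x_k^{-2}\gtrsim (d-1)/(\varepsilon w g)^2$, which holds because every one of the $d-1$ terms has $|x_k|\le\varepsilon w g$. Thus $\sum_k x_k^{-2}\ge (d-1)\min_k x_k^{-2}$, and the per-term bound on $|x_k|$ is precisely what is needed. The delicate part is making this rigorous without the approximations $u_k\approx u$ and $w_k\approx w$. I would first try a Cauchy--Schwarz bound,
\[
\Bigl(\sum_k u_kw_k/x_k^2\Bigr)^2\ge \min_k w_k^2\,\Bigl(\sum_k u_k^2/x_k^2\Bigr)\Bigl(\sum_k x_k^{-2}\Bigr)\frac{\min_k u_k}{\max_k u_k},
\]
and then absorb the ratio $\min_k u_k/\max_k u_k=1+O(\varepsilon)$ and the dropped $1$'s into the $O(\varepsilon)$ slack. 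The threshold $(d-1)\Lambda_{\min}/2$ is used only to guarantee that the mode being bounded is one of the interlaced slow roots and not the fast root.
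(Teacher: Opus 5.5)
\textbf{Verdict.} Apart from the separation issue below, your argument is correct, and it takes a genuinely different route from the paper.

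\textbf{How the two routes compare.} The paper uses first-order degenerate perturbation theory about $\varepsilon=0$. It takes an unperturbed slow left eigenvector $v=\sum_{k\ge1}v_k|k\rangle\langle k|$ with $\sum_k v_k=0$ and $\sum_k|v_k|^2=1$. It projects the first-order eigen-equation onto each $|k\rangle\langle k|$, obtaining $-\dot w\,\varepsilon_k v_k+w\,v^1_0=\lambda^1 v_k$, where $w=e^{\Delta\beta}\Gamma_\beta(\Delta)$ is the down-rate and $\dot w$ its $\omega$-derivative. Summing over $k$ removes $\lambda^1$ and gives $v^1_0=\frac{\dot w}{(d-1)w}\sum_k v_k\varepsilon_k$, which Cauchy--Schwarz bounds by $\varepsilon g/\sqrt{d-1}$.

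You instead solve the star-topology population problem exactly through the secular equation. The two computations agree at leading order:
\begin{itemize}
\item the paper's per-$k$ equation is $v_k=w\,v^1_0/(\lambda^1+\dot w\varepsilon_k)$, which is the linearization of your $l_k=w_k l_0/x_k$;
\item imposing $\sum_k|v_k|^2=1$ then gives $|v^1_0|=1/(w\sqrt{\sum_k x_k^{-2}})$, exactly your approximate expression.
\end{itemize}
Only the final estimate differs: the paper applies Cauchy--Schwarz to $\sum_k v_k\varepsilon_k$, while you use interlacing together with $\sum_k x_k^{-2}\ge(d-1)/\max_k x_k^2$.

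Your route is non-perturbative. It needs no choice of zeroth-order vectors inside the degenerate eigenspace, and it makes the corrections explicit. The paper only computes the first-order term, so its inequality is itself valid only to leading order; your $1+O(\varepsilon)$ slack is therefore no loss. The paper's route is shorter and would extend more directly to perturbations that break the arrowhead structure.

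\textbf{Two small points.}
\begin{itemize}
\item The inequality you label Cauchy--Schwarz actually runs in the opposite direction. It is still true, since it follows from $\sum_k u_k/x_k^2\ge u_{\min}\sum_k x_k^{-2}$ and $\sum_k u_k^2/x_k^2\le u_{\max}\sum_k u_k/x_k^2$.
\item Treating the dropped $1$'s as an $O(\varepsilon)$ correction presumes $\gamma\bar n\gg\varepsilon w g/\sqrt{d-1}$, i.e.\ a bath that is not too cold relative to the detuning. The paper's expansion tacitly needs a condition of the same kind, because its slow cluster is separated from the fast mode only by $(d-1)\gamma\bar n$.
\end{itemize}

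\textbf{The separation you take as given is false in general.} The unperturbed fast population rate is $\gamma(1+d\bar n)$, while the window edge is $\frac{d-1}{2}\gamma(1+\bar n)$. The former exceeds the latter only if $\bar n>(d-3)/(d+1)$, and an $O(\varepsilon)$ eigenvalue shift does not change this.

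For colder baths the fast mode lies inside the window. This already happens for the paper's own numerical parameters $\beta=\Delta=1$, $d=10$, where $\bar n\approx 0.58<7/11$. Your own formula at $\varepsilon=0$, with $x_k=-(d-1)u$, then gives this mode $|l_0|=\sqrt{d(d-1)}\,\bar n/(1+d\bar n)=O(1)$. So the lemma as literally stated cannot be proved in that regime.

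What your argument establishes is the bound, together with its exact vanishing at $\varepsilon=0$, for the coherence modes and the $d-2$ slow population modes. This is also all the paper's proof addresses, since it only treats the slowest population modes. You should state the result in that restricted form rather than asserting the separation. The restricted version is all the subsequent convergence lemma uses, because the fast mode is subdominant at late times anyway.
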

We leave the proof of \cref{thm:bound-perturbed-overlap-main} in the SM \cite{SM} for conciseness. 
Based on this upper bound, we are able to {bound the contributions that decay} with rate at the slowest end of the spectrum, which we define as the tail terms (see \cref{fig:L-spec}).
Specifically, after a reasonably long time $t$, the dominant correction to the thermal state comes from the tail terms.
The upper bound in \cref{thm:bound-perturbed-overlap-main} allows us to bound the tail terms and obtain the following.

\begin{lemma}\label{lem:thermal-convergence-bound-main}
    There {exists a constant $t'>0$} such that the following bound holds for $t\ge t'$:
    \begin{align}
        &\fnorm{\rho_t^\star - \tau_\beta}^2 \le\nonumber\\
        &\quad \frac{11}{10} \varepsilon^2 e^{-2 \Lambda_{\min} t}\left(\frac{d-2}{d-1}\right)\left|\frac{\partial}{\partial\omega}\log\left[e^{\omega\beta}\Gamma_\beta(\omega)\right]\right|_{\omega = \Delta}^2,
    \end{align}
    where $\rho_t^\star:=\exp(\cL t)[\rho^\star]$.
\end{lemma}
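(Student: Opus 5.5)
The starting point is the structure of $\rho^\star$. By \cref{eq:main-text-optimal-probe-inequality-roof}, $\rho^\star=\ketbra{0}{0}$, which is diagonal. The Davies generator does not mix $\mathcal{P}$ and $\mathcal{C}$, so every coherence left eigenoperator has zero overlap with $\rho^\star$. Using the spectral expansion $\rho^\star_t=\hat r_1+\sum_{i\ge2}e^{\lambda_i t}\Tr(\hat l_i^\dg\rho^\star)\hat r_i$ with $\hat r_1=\tau_\beta$, only the $d-1$ nontrivial population modes survive. Continuity from the $\varepsilon=0$ spectrum (\cref{tab:eigen-population-main} and \cref{fig:L-spec}) splits them into two groups:
\begin{itemize}
\item $d-2$ \emph{tail} modes, with $|\Re\lambda_i|\in[\Lambda_{\min},(d-1)\Lambda_{\min}/2]$ for small $\varepsilon$, perturbing the degenerate block $-\gamma(1+\bar n)$;
\item one \emph{fast} mode near $\Lambda_3=\gamma(d\bar n+1)$.
\end{itemize}
I would therefore write $\rho^\star_t-\tau_\beta=X_{\rm tail}(t)+X_{\rm fast}(t)$ and expand
\[
\fnorm{X_{\rm tail}+X_{\rm fast}}^2=\fnorm{X_{\rm tail}}^2+\fnorm{X_{\rm fast}}^2+2\Re\Tr\!\big(X_{\rm tail}^\dg X_{\rm fast}\big).
\]

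\emph{Tail term.} Write $c_i=\Tr(\hat l_i^\dg\rho^\star)$. Then $\fnorm{X_{\rm tail}}^2=\sum_{i,k}\bar c_i c_k e^{(\bar\lambda_i+\lambda_k)t}G_{ik}$, where $G_{ik}=\Tr(\hat r_i^\dg\hat r_k)$ is the Gram matrix of the tail right eigenoperators. At $\varepsilon=0$ the degenerate block can be given an orthonormal basis, and this basis is orthogonal to $\kket{0,0}$ and $\kket{S}$, so $G=I$.

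For $\varepsilon>0$ the eigenprojector of the tail cluster is analytic in $\varepsilon$, because that cluster is separated from $0$ and from $\Lambda_3$ by an $O(\gamma)$ gap. Choosing $\hat r_i$ inside the perturbed cluster and Gram--Schmidt-adjusting within it, I expect $\|G-I\|_{\rm op}=O(\varepsilon)$. This gives
\[
\fnorm{X_{\rm tail}}^2\le\big(1+O(\varepsilon)\big)\sum_{\rm tail}|c_i|^2e^{2\Re\lambda_i t}.
\]
Now apply \cref{thm:bound-perturbed-overlap-main}: each $|c_i|^2\le \varepsilon^2 g^2/(d-1)$. Together with $e^{2\Re\lambda_i t}\le e^{-2\Lambda_{\min}t}$ and the count of $d-2$ tail modes, this yields the target with prefactor $1+O(\varepsilon)$ in place of $11/10$.

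\emph{Fast and cross terms.} These satisfy
\[
\fnorm{X_{\rm fast}}^2\le C e^{-2\Lambda_3' t},\qquad |\text{cross}|\le C'\varepsilon g\,e^{-(\Lambda_{\min}+\Lambda_3')t},
\]
where $\Lambda_3'=\Lambda_3+O(\varepsilon)$ and $C,C'$ are $O(1)$. The gap $\Lambda_3'-\Lambda_{\min}\ge\gamma(d-1)\bar n-O(\varepsilon)>0$ is strictly positive. Hence there is a finite $t'$, depending on $\varepsilon,g,d$ but not on $t$, beyond which both terms are at most $\tfrac1{20}\varepsilon^2 g^2\frac{d-2}{d-1}e^{-2\Lambda_{\min}t}$. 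For small enough $\varepsilon$ the $O(\varepsilon)$ Gram error absorbs into the remaining slack, and the total prefactor stays below $11/10$.

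This argument needs $\varepsilon g>0$ so that the target bound is not zero. In the exactly degenerate case the tail overlaps vanish, and the lemma should be read either as $\varepsilon\to0^+$ or with the fast mode treated separately. I would note that restriction explicitly.

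The main obstacle is the Gram-matrix control, because $\cL_\beta$ is non-normal. First I would show that perturbed tail eigenoperators stay $O(\varepsilon)$-orthogonal to each other and to the fast mode, using analytic perturbation of the Riesz projector. As a fallback, detailed balance makes the population generator self-adjoint in the $\tau_\beta^{-1}$-weighted inner product. That gives exact orthogonality in the weighted norm, at the cost of a norm-equivalence constant, which one would then need to show is $1+O(\varepsilon)$ on the tail subspace, since it is orthogonal to $\kket{0,0}$ at leading order.
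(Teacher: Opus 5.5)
Your proposal is correct and follows the paper's proof. Coherence modes drop out because $\rho^\star=\ketbra{0}{0}$ is diagonal, the single fast population mode is absorbed for $t\ge t'$, and the $d-2$ tail modes are controlled by the overlap lemma together with near-orthonormality $\Tr(\hat r_i^\dagger\hat r_j)=\delta_{ij}+O(\varepsilon)$, all errors going into the factor $11/10$. Your version is slightly more careful than the paper's: the paper absorbs the fast mode into the \emph{actual} tail sum, which fails when those overlaps vanish, whereas comparing against the target bound makes the needed $\varepsilon g>0$ and $d\ge 3$ explicit. One bookkeeping slip: budgeting the fast and cross terms at $1/20$ each already uses the full $1/10$ of slack, so take e.g.\ $1/40$ each to leave room for the $O(\varepsilon)$ Gram error.
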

The proof of \cref{lem:thermal-convergence-bound-main} can be found in the SM \cite{SM}.
This allows us to prove our main result.

\begin{proof}[Proof of \cref{thm:prob-of-exceeding}.]
Fix $U$ and write $\sigma=\ket{\psi}\!\bra{\psi}$ with $\ket{\psi}=U\ket{0}$.
By Lemma~\ref{lem:thermal-convergence-bound-main}, there exist $t'>0$ such that for all $t\ge t'$,
\begin{align}
\fnorm{\rho_t^\star-\tau_\beta}^2
\;\le\;
\frac{11}{10}\,\varepsilon^2\,e^{-2\Lambda_{\min}t}\,
\frac{d-2}{d-1}\,g^2,
\label{eq:opt-upper-main}
\end{align}
where $g:=\partial_\omega \log[e^{\omega\beta}\Gamma_\beta(\omega)]|_{\omega=\Delta}$.
On the other hand, since $\rho_t=(1-\alpha)\tau_\beta+\alpha e^{\cL t}[\sigma]$,
\[
\fnorm{\rho_t-\tau_\beta}^2=\alpha^2\fnorm{e^{\cL t}[\sigma]-\tau_\beta}^2.
\]
Because $\cL$ is Davies and the coherence subspace is normal (hence admits an orthogonal eigen-operator decomposition), the squared Frobenius distance contains a nonnegative contribution from the coherence modes that (in the unperturbed/near-degenerate setting) sit in the lower tail of $\mathrm{Spec}_{\cC}(\cL)$.
In particular, for all $t\ge 0$,
\begin{align}
\fnorm{\rho_t-\tau_\beta}^2
\;\ge\;
\alpha^2 e^{-2\Lambda_{\min}t}\, f(\psi),
\label{eq:rand-lower-main}
\end{align}
where
\begin{align}
f(\psi)
\;:=\;
\sum_{\substack{i,j=1\\i\neq j}}^{d-1}
\left|\bra{i}\sigma\ket{j}\right|^2
=
\sum_{\substack{i,j=1\\i\neq j}}^{d-1}
|\psi_i|^2|\psi_j|^2
\in[0,1].
\label{eq:def-f-main}
\end{align}
Combining~\eqref{eq:opt-upper-main} and~\eqref{eq:rand-lower-main}, we see that the event
\begin{align}
f(\psi)\;>\;\theta
\qquad\text{where}\qquad
\theta:=\frac{M\varepsilon^2}{\alpha^2}\frac{d-2}{d-1}g^2,
\label{eq:good-event-main}
\end{align}
implies that for all $t\ge t'$,
\begin{align}
&\fnorm{\rho_t-\tau_\beta}^2
\;\ge\;
\alpha^2 e^{-2\Lambda_{\cL}^{\min}t} f(\psi)\nonumber\\
&\qquad\;>\;
M\varepsilon^2 e^{-2\Lambda_{\cL}^{\min}t}\frac{d-2}{d-1}g^2
\;\ge\;
\fnorm{\rho_t^\star-\tau_\beta}^2,
\end{align}
i.e., $\rho_t^\star$ exceeds $\rho_t$.
Therefore,
\begin{align}
\delta
\;\le\;
\Pr_{U\sim\mu_H}\!\big[f(\psi)\le \theta\big].
\label{eq:delta-reduction-main}
\end{align}

\smallskip\noindent
For Haar-random $\ket{\psi}\in\mathbb{C}^d$, one has the standard moment identity
\[
\mathbb{E}\big[|\psi_i|^2|\psi_j|^2\big]=\frac{1}{d(d+1)}
\qquad (i\neq j).
\]
Hence, by~\eqref{eq:def-f-main},
\begin{align}
\mathbb{E}_{U\sim\mu_H} f(\psi)
=
\sum_{\substack{i,j=1\\i\neq j}}^{d-1}\frac{1}{d(d+1)}
=
\frac{(d-1)(d-2)}{d(d+1)}
=
\mu_d.
\label{eq:Ef-main}
\end{align}

Next, view $f$ as a function on the unit sphere $S^{2d-1}$ (identify $\mathbb{C}^d\simeq\mathbb{R}^{2d}$).
Let $\ket{\psi},\ket{\phi}$ be unit vectors and write $\sigma_\psi=\ket{\psi}\!\bra{\psi}$, $\sigma_\phi=\ket{\phi}\!\bra{\phi}$.
Let $\mathcal{P}_\cC$ denote the orthogonal projector (w.r.t.\ the Hilbert--Schmidt inner product) onto the operator subspace
$\mathrm{span}\{\ket{i}\!\bra{j}: i,j\in\{1,\dots,d-1\},\, i\neq j\}$.
Then $f(\psi)=\|\mathcal{P}_\cC(\sigma_\psi)\|_2^2$.
Therefore,
\begin{align*}
|f(\psi)-f(\phi)|
&=
\Big|
\|\mathcal{P}_\cC(\sigma_\psi)\|_2^2-\|\mathcal{P}_\cC(\sigma_\phi)\|_2^2
\Big|\\
&\le
\big(\|\mathcal{P}_\cC(\sigma_\psi)\|_2+\|\mathcal{P}_\cC(\sigma_\phi)\|_2\big)\,
\|\mathcal{P}_\cC(\sigma_\psi-\sigma_\phi)\|_2\\
&\le 2\,\|\sigma_\psi-\sigma_\phi\|_2.
\end{align*}
Using $\|\sigma_\psi-\sigma_\phi\|_2^2=2-2|\langle\psi|\phi\rangle|^2\le 2\|\psi-\phi\|_2^2$, we get
\[
|f(\psi)-f(\phi)|\le 2\sqrt{2}\,\|\psi-\phi\|_2.
\]
Thus $f$ is $L$-Lipschitz on $S^{2d-1}$ with $L=2\sqrt{2}$.

\smallskip\noindent
A standard form of Lévy's lemma (e.g.\ \cite{mele2024IntroductionHaarMeasureToolsQuantumInformation}) states that for an $L$-Lipschitz function $h:S^{n}\to\mathbb{R}$,
\[
\Pr\big(|h-\mathbb{E}h|\ge a\big)\le 2\exp\!\left(-\frac{(n+1)a^2}{9\pi^3 L^2}\right).
\]
Apply this with $h=f$, $n=2d-1$, $L=2\sqrt{2}$, and $a:=\mu_d-\theta$.
When $a>0$, we have
\begin{align}
\Pr[f(\psi)\le \theta]
&\le
\Pr\big(|f(\psi)-\mu_d|\ge \mu_d-\theta\big)\\
&\le
2\exp\!\left(-\frac{2d\,(\mu_d-\theta)^2}{9\pi^3\cdot 8}\right)\\
&=
2\exp\!\left(-\frac{d}{36\pi^3}(\mu_d-\theta)^2\right).
\end{align}
Combining with~\eqref{eq:delta-reduction-main} and substituting~\eqref{eq:good-event-main} completes the proof and yields~\eqref{eq:delta-exp-bound-main}.
\end{proof}

\medskip
\section*{DATA AVAILABILITY}
Data generated and analyzed during current study are available from the corresponding author upon reasonable request.

\medskip
\section*{CODE AVAILABILILTY}
Code used to generate data in this study are available from the corresponding author upon reasonable request.

\medskip
\section*{Acknowledgement}
This work is supported by the National Natural Science Foundation of China via the Excellent Young Scientists Fund (Hong Kong and Macau) Project 12322516, the National Natural Science Foundation of China (NSFC)/Research Grants Council (RGC) Joint Research Scheme via Project N\_HKU7107/24, Guangdong Provincial Quantum Science Strategic Initiative via Projects GDZX2403008 and GDZX2503001,  and the Hong Kong Research Grant Council (RGC) through grant 17302724.

\medskip
\emph{Note added.---}After completing this work, we became aware of an independent work \cite{chattopadhyay2026AnomalyResourceMpembaEffectQuantumThermometry} about the QMpE and quantum thermometry. While Ref. \cite{chattopadhyay2026AnomalyResourceMpembaEffectQuantumThermometry} considers the finite-time regime, we focus on short-time thermometric optimality within $d$-dimensional Markovian dynamics, establishing a rigorous connection to the QMpE.

\medskip
\section*{AUTHOR CONTRIBUTIONS}
Both authors contributed to the design and the implementation of the research as well as the writing of the manuscript.

\medskip
\section*{COMPETING INTERESTS}
The authors declare no competing interests.

\bibliography{ref.bib}
\clearpage
\begin{widetext}
\begin{center}
    {\Large\bfseries Supplemental Material}
\end{center}

\tableofcontents

\setcounter{section}{0}
\let\addcontentsline\oldaddcontentsline
\section{Preliminaries}\label{app:preliminaries}
\noindent\emph{Vectorization representation.}
We frequently employ matrix vectorization in our calculations. A brief introduction is provided here (see Ref. \cite{watrous2018TheoryQuantumInformation} Sec. 1.1.2 for a comprehensive review).
This formulation provides an alternative representation of a matrix; for instance, a matrix $\mb{M}\in \mathbb{C}^{2\times 2}${ is vectorized as follows:}
\begin{align}
    \mb{M} = \begin{pmatrix}
        a & b\\
        c & d
    \end{pmatrix} \xrightarrow{\textrm{vectorize}} \kket{\mb M} = \begin{pmatrix}
        a\\
        b\\
        c\\
        d
    \end{pmatrix}.
\end{align}
{Using Dirac notation, this vectorization is formally defined as}
\begin{align}
    |a\rangle\langle b| \xrightarrow{\textrm{vectorize}} \ket{a}\ket{b^*}.
\end{align}
{For brevity, we define }$\kket{a,b}:=\ket{a}\ket{b^*}${ as the vectorized form of the operator }$|a\rangle\langle b|${.}
{We summarize several fundamental properties of this representation:}
\begin{align}
    (\mb{A} \ot \mb{B}) \kket{\ub C} &= \kket{\mb{A} \ub C \mb{B}^T},\label{eq:property-vec}\\
    \bbkk{\mb{A}}{\mb{B}} &= \Tr(\mb{A}^\dg \mb{B}).
\end{align}

\noindent\emph{Operator eigen-decomposition.}
{For a linear map }$\mathcal{A}: \mathbb{C} ^{d\times d} \mapsto \mathbb{C}^{d\times d}${, we define its corresponding matrix representation as }$\mb{A} \in \mathbb{C}^{d^2 \times d^2}${, satisfying}
\begin{align}
    \kket{\mathcal{A}[X]} = \mb{A} \kket{X}.
\end{align}
{The eigen-decomposition of }$\mathcal{A}${ is derived directly from that of }$\mb{A}${, given by }$\mb{A} = \sum_{i=1}^{d^2} a_i \kkbb{\hat r_i}{\hat l_i}${, where }$a_i${ denotes the }$i${-th eigenvalue, while }$\kket{\hat r_i}${ and }$\kket{\hat l_i}${ represent the corresponding right and left eigenvectors, respectively.}
{These eigenvectors satisfy the biorthogonality condition }$\bbkk{\hat l_i}{\hat r_j}=\Tr(\hat l_i^\dg \hat r_j) = \delta_{ij}${.}
{Consequently, the action of }$\mathcal{A}${ on an arbitrary operator }$X${ is expressed as}
\begin{align}
    \mathcal{A}[X] = \sum_{i=1}^{d^2} a_i \Tr(\hat l_i^\dg X) \hat r_i.
\end{align}
{We note that the constraints on }$\{\hat l_i, \hat r_i\}${ are insufficient to uniquely determine their norms.} 
{Specifically, a scaling degree of freedom remains, allowing the transformations }$\hat l_i \rightarrow \alpha_i \hat l_i${ and }$\hat r_i \rightarrow \hat r_i/\alpha_i${.} 
{To eliminate this ambiguity, we adopt the convention }$\fnorm{\hat r_i} = 1${.}
{By applying the Cauchy-Schwarz inequality, we obtain}
\begin{align}
    \fnorm{\hat l_i}^2=\fnorm{\hat l_i}^2 \fnorm{\hat r_i}^2 \ge |\Tr(\hat l_i ^\dg \hat r_i)|^2 = 1.
\end{align}

\section{Matrix inequalities of trace norm}
{In this section, we establish several matrix inequalities pertaining to the trace norm, which will be utilized in subsequent proofs.}
\begin{lemma}\label{lem:trace-norm-bound}
{For any block matrix constructed from }$\mb{A} \in \mathbb{C}^{d\times d}, \ub{c}\in\mathbb{C}^d${ and }$b\in \mathbb{C}${, the following inequality holds:}
\[
\lone{\mat{\mb{A}}{\ub{c}}{\ub{c}^\dg}{b}} \leq \lone{\mb{A}} + \sqrt{4 \ltwo{\ub{c}}^2 + |b|^2}, 
\]
{where }$\mb{A} \in \mathbb{C}^{d\times d}${, }$\ub{c}\in\mathbb{C}^d${, and }$b\in \mathbb{C}${.}
\end{lemma}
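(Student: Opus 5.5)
The plan is to reduce the claim to a bound on the eigenvalues of the Hermitian block matrix
\[
X=\begin{pmatrix}\mb{A}&\ub{c}\\ \ub{c}^\dg&b\end{pmatrix}.
\]
I will first make it Hermitian-compatible. If $\mb{A}$ is not Hermitian, I will use the general route instead: the triangle inequality on the decomposition
\[
X=\begin{pmatrix}\mb{A}&0\\0&0\end{pmatrix}+\begin{pmatrix}0&\ub{c}\\ \ub{c}^\dg&b\end{pmatrix}.
\]
This gives $\lone{X}\le\lone{\mb{A}}+\lone{Y}$, where $Y$ is the second summand.

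The whole task is then to compute $\lone{Y}$ exactly. The matrix $Y$ acts nontrivially only on the two-dimensional subspace spanned by $(\hat{\ub{c}},0)$ and $(0,1)$, where $\hat{\ub{c}}=\ub{c}/\ltwo{\ub{c}}$; on the orthogonal complement it vanishes. In that basis $Y$ is unitarily equivalent to the $2\times2$ matrix
\[
Y_2=\begin{pmatrix}0&\ltwo{\ub{c}}\\ \ltwo{\ub{c}}&b\end{pmatrix},
\]
padded with zeros (if $\ub{c}=0$ the claim is trivial). So I need $\lone{Y_2}=\sigma_1+\sigma_2$, the sum of its singular values.

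To get this sum, I use the standard identity for $2\times2$ matrices, $(\sigma_1+\sigma_2)^2=\fnorm{Y_2}^2+2|\det Y_2|$. Here $\fnorm{Y_2}^2=2\ltwo{\ub{c}}^2+|b|^2$ and $|\det Y_2|=\ltwo{\ub{c}}^2$. Hence $\lone{Y_2}=\sqrt{4\ltwo{\ub{c}}^2+|b|^2}$ exactly, which holds for complex $b$ as well since the identity only uses singular values.

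Combining this with the triangle inequality yields the claim. The main point to check is that the reduction to the $2\times2$ block is a genuine unitary equivalence, i.e., that left and right multiplication by a unitary $V\oplus 1$ with $V\hat{\ub{c}}=e_1$ maps $Y$ to $Y_2\oplus 0$. It does, because the off-diagonal blocks are $\ub{c}$ and $\ub{c}^\dg$ with the same vector. I expect no real obstacle; the only subtlety is the $\ub{c}=0$ edge case and the non-Hermitian $b$, and both are handled by the singular-value identity.
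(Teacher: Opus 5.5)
Your proof is correct, and it takes a different (primal rather than dual) route from the paper.

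\textbf{The paper's route.} It uses the variational characterization $\|X\|_1=\sup_{\|M\|_\infty\le 1}\operatorname{Tr}(MX)$ and writes the witness $M$ in block form with blocks $D$, $e$, $f$. It bounds $\operatorname{Tr}(DA)\le\|A\|_1$ from $\|D\|_\infty\le 1$. It bounds the remaining terms by $2\|c\|_2\sqrt{1-|f|^2}+|b|\,|f|$, using the column constraint $\|e\|_2^2+|f|^2\le 1$ and Cauchy--Schwarz. Finally it maximizes over $|f|$ with a second Cauchy--Schwarz.

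\textbf{Your route.} You split $X=(A\oplus 0)+Y$ and apply the triangle inequality. You then evaluate $\|Y\|_1$ exactly via the conjugation $(V\oplus 1)Y(V\oplus 1)^\dagger=Y_2\oplus 0$ and the identity $(\sigma_1+\sigma_2)^2=\|Y_2\|_F^2+2|\det Y_2|$. All of these steps check out, including the case $c=0$ and complex $b$.

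\textbf{How they relate.} The two arguments are dual to each other. The paper decouples the constraints on $D$ and on $(e,f)$, which is exactly the triangle inequality in dual form. Its bound $\sqrt{4\|c\|_2^2+|b|^2}$ on the $(e,f)$-part coincides with $\|Y\|_1$.

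\textbf{What each buys.} Your version shows that this second term is exactly attained, so the lemma is nothing more than the triangle inequality plus an exact $2\times 2$ computation. It also sidesteps a point the paper glosses over. The paper takes the witness $M$ with off-diagonal blocks $e$ and $e^\dagger$, which restricts the supremum and strictly needs a remark. For general off-diagonal blocks $e_1$, $e_2^\dagger$, the same bound follows from the row and column norm constraints. The paper's dual form gains nothing essential over yours downstream. The next lemma only uses the resulting closed form $\sqrt{4\alpha(1-\alpha)\|c\|_2^2+|\alpha a+(1-\alpha)b|^2}$, which your computation delivers directly.

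\textbf{One presentational fix.} Drop the opening framing about eigenvalues of a Hermitian block matrix. $X$ is not Hermitian in general, and the argument you actually give never uses Hermiticity.
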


\begin{proof}
{By applying the variational characterization of the trace norm, we have}
\begin{align*}
\lone{\mat{\mb{A}}{\ub{c}}{\ub{c}^\dg}{b}} &= \sup_{\|M\|_2 \leq 1} \operatorname{Tr}\left[M \mat{\mb{A}}{\ub{c}}{\ub{c}^\dg}{b}\right] \\
&= \sup_{\mb{D}, \ub{e}, f} \operatorname{Tr}\left[\mat{\mb{D}}{\ub{e}}{\ub{e}^\dg}{f} \mat{\mb{A}}{\ub{c}}{\ub{c}^\dg}{b}\right] \\
&= \sup_{\mb{D}, \ub{e}, f} \operatorname{Tr}\left[\mat{\mb{D}\mb{A} + \ub{e}\ub{c}^\dg}{*}{*}{\ub{e}^\dg\ub{c} + fb}\right] \\
&= \sup_{\mb{D}, \ub{e}, f} \operatorname{Tr}[\mb{D}\mb{A} + \ub{e}\ub{c}^\dg + \ub{e}^\dg\ub{c} + fb] \\
&= \sup_{\mb{D}, \ub{e}, f} \operatorname{Tr}(\mb{D}\mb{A}) + fb + 2\operatorname{Re}[\ub{c}^\dg\ub{e}].
\end{align*}
{where the operator }$\mb{D}${, the vector }$\ub{e}${, and the scalar }$f${ constitute a block matrix }$M${ satisfying }$\|M\|_2 \leq 1${.} 
{Selecting a test vector }$\ub{x} = \begin{pmatrix} \ub{x}_0 \\ 0 \end{pmatrix}${ with }$\|\ub{x}_0\|_2^2=1${, we obtain:}
\[
1 \geq \sup_{\ub{x}_0} \left\| \mat{\mb{D}}{\ub{e}}{\ub{e}^\dg}{f} \begin{pmatrix} \ub{x}_0 \\ 0 \end{pmatrix} \right\|_2^2 
= \sup_{\ub{x}_0} \left\| \begin{pmatrix} \mb{D}\ub{x}_0 \\ \ub{e}^\dg\ub{x}_0 \end{pmatrix} \right\|_2^2 
= \sup_{\ub{x}_0} \left( \|\mb{D}\ub{x}_0\|_2^2 + |\ub{e}^\dg\ub{x}_0|^2 \right).
\]
{This yields the necessary condition:}
\begin{equation}
\|\mb{D}\|_2^2 \leq 1. \label{eq:relax-cond-1}
\end{equation}
{Similarly, by choosing }$\ub{x} = \begin{pmatrix} 0 \\ \vdots \\ 1 \end{pmatrix}${, we deduce:}
\begin{equation}
|f|^2 + \|\ub{e}\|_2^2 \leq 1. \label{eq:relax-cond-2}
\end{equation}
{Furthermore, the real part of the inner product is bounded by the Cauchy-Schwarz inequality as:}
\begin{equation}
\operatorname{Re}(\ub{c}^\dg\ub{e}) \leq |\ub{c}^\dg\ub{e}| \leq \ltwo{\ub{c}} \|\ub{e}\|_2. \label{eq:inner-product-bound-cond3}
\end{equation}
{Combining these inequalities yields:}
\begin{equation}
\sup_{\mb{D}, \ub{e}, f} \operatorname{Tr}(\mb{D}\mb{A}) + fb + 2\operatorname{Re}[\ub{c}^\dg\ub{e}] 
\leq \lone{\mb{A}} + \max_{|f|} \lr{2 \ltwo{\ub{c}} \sqrt{1-|f|^2} + |b| |f|}. \label{eq:trace-norm-bound-1}
\end{equation}
{By substituting }$\theta = |f|^2${, we can further bound the maximum via the Cauchy-Schwarz inequality:}
\begin{align*}
&\max_{\theta} \lr{2 \ltwo{\ub{c}} \sqrt{1-\theta} + |b|\sqrt{\theta}} \\
&= \max_{\theta} \begin{pmatrix} 2\ltwo{\ub{c}} \\ |b| \end{pmatrix} \bdot \begin{pmatrix} \sqrt{1-\theta} \\ \sqrt{\theta} \end{pmatrix} \\
&\leq \sqrt{\left(2\ltwo{\ub{c}}\right)^2 + |b|^2} \cdot \sqrt{(\sqrt{1-\theta})^2 + (\sqrt{\theta})^2} \\
&= \sqrt{4 \ltwo{\ub{c}}^2 + |b|^2},
\end{align*}
{which completes the proof.}
\end{proof}

\begin{lemma}
\label{lem:trace-norm-bound-2}
{Under specific spectral conditions, the following trace norm inequality holds:}
\[
  \left\Vert 
    \begin{pmatrix} 
      \alpha \mb{A} + (1-\alpha)\mb{B} & \sqrt{\alpha(1-\alpha)} \ub{c} \\ 
      \sqrt{\alpha(1-\alpha)} \ub{c}^{\dg} & \alpha a +(1-\alpha) b 
    \end{pmatrix}
  \right\Vert_{1} 
  \leq \max \left\{ \left\Vert \begin{pmatrix} \mb{A} & 0 \\ 0 & a \end{pmatrix} \right\Vert_{1}, \left\Vert \begin{pmatrix} \mb{B} & 0 \\ 0 & b \end{pmatrix} \right\Vert_{1} \right\}
\]
{provided that the following conditions are met:}
\begin{enumerate}
  \item $|a-b| \geq 2 \left\Vert \ub{c} \right\Vert_{2}${ or }$\frac{2 \left\Vert \ub{c} \right\Vert_{2}^{2}+(a-b)b}{4 \left\Vert \ub{c} \right\Vert_{2}^{2} -(a-b)^{2}} \notin [0,1]$.
  \item $\left( \left\Vert \mb{A} \right\Vert_{1} - \left\Vert \mb{B} \right\Vert_{1} \right)\left(|a|-|b|\right) \geq 0$.
\end{enumerate}
\end{lemma}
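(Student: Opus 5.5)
The plan is to reduce the block matrix to a scalar function of $\alpha$ using \cref{lem:trace-norm-bound}, and then show that this function is maximized at one of the endpoints $\alpha\in\{0,1\}$. Throughout I take $a,b$ real, which is the case in the application, where $a,b$ are diagonal entries of the Hermitian operator $\partial_\beta\cL[\varrho_0]$. Complex $a,b$ would need an extra argument (see the end).

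\emph{Step 1: reduction to one variable.} Apply \cref{lem:trace-norm-bound} with upper-left block $\alpha\mb{A}+(1-\alpha)\mb{B}$, off-diagonal vector $\sqrt{\alpha(1-\alpha)}\,\ub{c}$ and corner entry $\alpha a+(1-\alpha)b$. Then use the triangle inequality $\lone{\alpha\mb{A}+(1-\alpha)\mb{B}}\le \alpha\lone{\mb{A}}+(1-\alpha)\lone{\mb{B}}$. With $s:=\ltwo{\ub{c}}$ this gives
\begin{align}
  \text{LHS}\le F(\alpha):=\alpha\lone{\mb{A}}+(1-\alpha)\lone{\mb{B}}+h(\alpha),
\end{align}
where
\begin{align}
  h(\alpha)^2 = 4s^2\alpha(1-\alpha)+\big(b+\alpha(a-b)\big)^2 = \big[(a-b)^2-4s^2\big]\alpha^2+\big[4s^2+2b(a-b)\big]\alpha+b^2 .
\end{align}
The endpoints give exactly the two quantities on the right-hand side: $F(1)=\lone{\mb{A}}+|a|=\lone{\mb{A}\oplus a}$ and $F(0)=\lone{\mb{B}}+|b|=\lone{\mb{B}\oplus b}$. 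It therefore suffices to show $F(\alpha)\le\max\{F(0),F(1)\}$ on $[0,1]$.

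\emph{Step 2a: case $|a-b|\ge 2s$.} Here the leading coefficient of $h^2$ is nonnegative. I will show that $h$ is convex on $[0,1]$. Write the nonnegative quadratic $h^2=p\alpha^2+q\alpha+r$ with $p\ge0$. Then $h(\alpha)=\sqrt{p}\,\big\|\big(\alpha+\tfrac{q}{2p},\,\sqrt{r/p-q^2/4p^2}\big)\big\|_2$ when $p>0$, which is the Euclidean norm of an affine function of $\alpha$ and hence convex. When $p=0$, $h^2$ is linear and nonnegative on $[0,1]$, and I still need to check convexity there; I will handle this degenerate case by continuity in $p$. Since $F$ is a linear term plus the convex $h$, it is convex, so its maximum on $[0,1]$ is attained at an endpoint.

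\emph{Step 2b: case $|a-b|<2s$ with the vertex outside $[0,1]$.} Now $h^2$ is a concave parabola. Its vertex sits at
\begin{align}
  \alpha^*=\frac{4s^2+2b(a-b)}{2\,[4s^2-(a-b)^2]}=\frac{2s^2+(a-b)b}{4s^2-(a-b)^2},
\end{align}
which is precisely the ratio appearing in condition 1. If $\alpha^*\notin[0,1]$, then $h^2$, and hence $h\ge0$, is monotone on $[0,1]$.

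Suppose $h$ is nondecreasing. Then $|b|=h(0)\le h(1)=|a|$, and condition 2 forces $\lone{\mb{A}}\ge\lone{\mb{B}}$. So both the linear part and $h$ are bounded by their values at $\alpha=1$, giving $F(\alpha)\le F(1)$. The nonincreasing case is symmetric and gives $F(\alpha)\le F(0)$. This step is the only place condition 2 is used, and it is exactly what is needed to align the monotonicity of the linear part with that of $h$.

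\emph{Expected obstacle.} The main delicate point is Step 2a: verifying convexity of $h$ cleanly, including the degenerate case $p=0$. A secondary issue is complex $a,b$, if the lemma is meant in that generality. There I would first try replacing $|\alpha a+(1-\alpha)b|^2$ by its expansion in $\Re(\bar a b)$. This again yields a quadratic in $\alpha$ with leading coefficient $|a-b|^2-4s^2$, so the same case split should apply. The vertex formula would then have to be read with $(a-b)b$ replaced by $\Re[(\bar a-\bar b)b]$.
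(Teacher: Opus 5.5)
Your Step 2a has a genuine gap: a nonnegative leading coefficient $p=(a-b)^2-4s^2\ge 0$ does not make $h=\sqrt{g}$ convex. For $g=p\alpha^2+q\alpha+r$ one has $h''=(4pr-q^2)/(4g^{3/2})$, and with your coefficients $q^2-4pr=16s^2(s^2+ab)$. So $h$ is convex on $[0,1]$ only if $s=0$ or $ab\le -s^2$. Whenever $s>0$ and $ab>-s^2$ (for instance $ab\ge 0$, which is the situation in the thermometry application), $h$ is strictly concave. In that regime the real vector you write down does not exist, because $r/p-q^2/(4p^2)=(4pr-q^2)/(4p^2)<0$.

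The degenerate case $p=0$ is no better. There $h$ is the square root of a nonnegative affine function, hence concave, so continuity in $p$ cannot rescue convexity.

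A concrete instance: $a=3$, $b=1$, $s=1/2$ lies in case 2a, with $h^2=3\alpha^2+5\alpha+1$ and $h(1/2)=\sqrt{17}/2>2=\tfrac12[h(0)+h(1)]$. Taking $\|A\|_1=0$ and $\|B\|_1=2$ then gives $F(0)=F(1)=3$ but $F(1/2)=1+\sqrt{17}/2>3$. So the conclusion of Step 2a, that $F$ peaks at an endpoint without using condition 2, is false. Your remark that condition 2 is used only in Step 2b is therefore incorrect.

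The repair is to use only the convexity of the quadratic $g=h^2$ itself in case 2a. A convex quadratic on $[0,1]$ is maximized at an endpoint, so $h(\alpha)\le\max\{|a|,|b|\}$. Combine this with $\alpha\|A\|_1+(1-\alpha)\|B\|_1\le\max\{\|A\|_1,\|B\|_1\}$. Condition 2 guarantees that both maxima are attained at the same endpoint, which gives $F(\alpha)\le\max\{F(0),F(1)\}$.

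This is exactly the paper's argument, applied uniformly to both cases. Condition 1 ensures that $g$ attains its maximum over $[0,1]$ at $\alpha\in\{0,1\}$: either $g$ is a convex parabola, or it is a concave parabola with vertex outside $[0,1]$. Hence $\sqrt{g}\le\max\{|a|,|b|\}$, and condition 2 turns $\max\{\|A\|_1,\|B\|_1\}+\max\{|a|,|b|\}$ into the larger of the two block-diagonal trace norms.

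Your Step 2b is essentially this argument. One small point there: when $|a|=|b|$, condition 2 does not force $\|A\|_1\ge\|B\|_1$. In that case, however, $h$ is constant on $[0,1]$, so you can take whichever monotone branch matches the sign of $\|A\|_1-\|B\|_1$.
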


\begin{proof}
{By invoking Lemma~\ref{lem:trace-norm-bound}, we establish that for a matrix of the form}
\[
  \mb{Q} = \begin{pmatrix} 
        \alpha \mb{A} + (1-\alpha)\mb{B} & \sqrt{\alpha(1-\alpha)} \ub{c} \\ 
        \sqrt{\alpha(1-\alpha)} \ub{c}^{\dg} & \alpha a +(1-\alpha) b
      \end{pmatrix},
\]
{its trace norm is bounded above by}
\[
  \left\Vert \mb{Q} \right\Vert_{1} \leq \left\Vert \alpha \mb{A} + (1-\alpha)\mb{B} \right\Vert_{1} + \sqrt{ 4\alpha(1-\alpha) \left\Vert \ub{c} \right\Vert_{2}^{2} + \left[ \alpha a +(1-\alpha) b \right]^{2} }
\]

{Defining }$g(\alpha) = 4\alpha(1-\alpha) \left\Vert \ub{c} \right\Vert_{2}^{2} + \left[ \alpha a +(1-\alpha) b \right]^{2}${, we compute its derivative as}
\begin{align*}
  \frac{\d g}{\d \alpha} &= 4(1-2\alpha) \left\Vert \ub{c} \right\Vert_{2}^{2} + 2(a-b)\left[ \alpha a +(1-\alpha) b \right] \\
  &= 4 \left\Vert \ub{c} \right\Vert_{2}^{2} - 8 \alpha \left\Vert \ub{c} \right\Vert_{2}^{2} + 2(a-b)\left[ (a-b)\alpha + b \right] \\
  &= 4 \left\Vert \ub{c} \right\Vert_{2}^{2} - 8 \alpha \left\Vert \ub{c} \right\Vert_{2}^{2} + 2(a-b)^{2} \alpha + 2(a-b)b \\
  &= \alpha \left[ 2(a-b)^{2} - 8 \left\Vert \ub{c} \right\Vert_{2}^{2} \right] + 2(a-b)b + 4 \left\Vert \ub{c} \right\Vert_{2}^{2}.
\end{align*}
{We now analyze two scenarios to determine when the maximum is achieved at the boundaries:}
\begin{itemize}
  \item {If }$(a-b)^{2} \geq 4 \left\Vert \ub{c} \right\Vert_{2}^{2}${, the function }$g${ is convex, implying that its maximum is attained at the boundaries of the feasible region, i.e., }$\alpha=0${ or }$1${.} 
  \item {If }$(a-b)^{2} < 4 \left\Vert \ub{c} \right\Vert_{2}^{2}${, the function }$g${ is strictly concave; thus, the maximum occurs at the boundary provided the critical point lies outside the feasible interval }$[0,1]${:}
\[
  \alpha^{\ast} = \frac{ -2(a-b)b - 4 \left\Vert \ub{c} \right\Vert_{2}^{2} }{ 2(a-b)^{2} - 8 \left\Vert \ub{c} \right\Vert_{2}^{2} } = \frac{ 2 \left\Vert \ub{c} \right\Vert_{2}^{2} + (a-b)b }{ 4 \left\Vert \ub{c} \right\Vert_{2}^{2} - (a-b)^{2} } \notin [0, 1].
\]
\end{itemize}
{When these conditions are met, the relevant term can be bounded as follows:}
\[
  \max_{\theta, \alpha} \left( 2 \sqrt{\alpha(1-\alpha)} \left\Vert \ub{c} \right\Vert_{2} \sqrt{1-\theta} + \left| \alpha a +(1-\alpha) b \right| \sqrt{\theta} \right) \leq \max\{ |a|, |b| \}.
\]
{To deduce the final upper bound, we require the following inequalities to hold:}
\begin{align*}
  &\left\Vert \mb{A} \right\Vert_{1} \geq \left\Vert \mb{B} \right\Vert_{1} \text{ and } |a| \geq |b| \quad \text{or} \\
  &\left\Vert \mb{A} \right\Vert_{1} < \left\Vert \mb{B} \right\Vert_{1} \text{ and } |a| < |b|,
\end{align*}
{which can be compactly rewritten as the equivalent condition}
\[
  \left( \left\Vert \mb{A} \right\Vert_{1} - \left\Vert \mb{B} \right\Vert_{1} \right)\left(|a|-|b|\right) \geq 0.
\]
{Consequently, we arrive at the final trace norm bound:}
\begin{align*}
  \left\Vert \mb{Q} \right\Vert_{1} &\leq \max \left\{ \left\Vert \mb{A} \right\Vert_{1}, \left\Vert \mb{B} \right\Vert_{1} \right\} + \max\{ |a|, |b| \} \\
  &= \max \left\{ \left\Vert \begin{pmatrix} \mb{A} & 0 \\ 0 & a \end{pmatrix} \right\Vert_{1}, \left\Vert \begin{pmatrix} \mb{B} & 0 \\ 0 & b \end{pmatrix} \right\Vert_{1} \right\},
\end{align*}
{which completes the proof.}
\end{proof}

\section{The state with optimal thermal sensitivity}\label{app:opt-init-state}

Consider $d$ dimensional system with ground state $\ket0$ and different exited states $\ket{j},j=1,2,\dots,d-1$. 
We only consider the energy exchange bewteen the ground state and the exited states.
This means that we can use the transition weight function $\Gamma_\beta(\omega)=J(\omega)[\exp(\beta\omega)-1]^{-1}$ to obtain the transition strengths $\{g_j\ge 0\}$ for all allowed jumps:
\begin{align}
    g_j&=J(\Delta_j)=-J(-\Delta_j).
\end{align}
where $\delta(\cdot)$ denotes the Dirac-$\delta$ function and $\Delta_j=E_j-E_0$.

We have the following lemma:
\begin{lemma}\label{lem:optimal-thermometry-star-shape}
    The trace norm of $\partial_\beta \cL[\rho]$ is upper-bounded as following:
    \begin{align}
        \lone{\partial_\beta \cL [\rho]} \le 2\left|\sum_{j=1}^{d-1}\frac{\partial}{\partial \beta}\Gamma_\beta(\Delta_j)\right|.
    \end{align}
    The bound can be saturated by setting $\rho$ to be the ground state.
\end{lemma}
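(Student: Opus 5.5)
The proof bounds the trace norm in two stages. First we show that the maximum over all states is attained on pure states. Then we write a general pure state as $\ket{\psi}=\sqrt{\eta}\ket{0}+\sqrt{1-\eta}\ket{\tilde\psi}$ with $\ket{\tilde\psi}=\sum_{j\ge1}\psi_j\ket{j}$, and apply Lemma~\ref{lem:trace-norm-bound-2} to the block form (\ref{eq:block-matrix}). This reduces the problem to the two endpoints $\eta\in\{0,1\}$.

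\textbf{Reduction to pure states.} For a mixed $\rho$, the map $\rho\mapsto\lone{\partial_\beta\cL[\rho]}$ is convex, because $\partial_\beta\cL$ is linear and the trace norm is convex. The maximum over the convex set of density matrices is therefore attained at an extreme point, i.e.\ a pure state.

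\textbf{Block structure.} We write $\partial_\beta\bL=\sum_k g_k\dbn_k\boldsymbol{\Upsilon}^{(k)}$ with $\dbn_k<0$. Ordering the basis as $\{\ket{1},\dots,\ket{d-1},\ket{0}\}$ gives the block form of Lemma~\ref{lem:trace-norm-bound-2}, with mixing parameter $1-\eta$ on the excited block. The entries are as follows.
\begin{itemize}
\item $\mb{B}=\sum_k g_k\dbn_k\ketbra{k}{k}$ and $b=-\dot{\til N}$ come from $\partial_\beta\cL[\ketbra00]$.
\item $\mb{A}\oplus a$ comes from $\partial_\beta\cL[\ketbra{\tilde\psi}{\tilde\psi}]$. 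The populations flow to $\ket0$, giving $a=\sum_k g_k\dbn_k|\psi_k|^2$, and the diagonal-in-excited part gives $\mb{A}=-\tfrac12(\ketbra{\tilde\phi}{\tilde\psi}+\ketbra{\tilde\psi}{\tilde\phi})$ with $\ket{\tilde\phi}=\sum_k g_k\dbn_k\psi_k\ket{k}$.
\item The off-diagonal block $\ub{c}$ comes from the damping of the coherences $\ketbra{k}{0}$. The $\ketbra{k}{0}$ coherence is damped at rate $\tfrac12(\Gamma_k^{\downarrow}+\sum_j\Gamma_j^{\uparrow})$. Differentiating in $\beta$ gives $\ltwo{\ub c}^2=\tfrac14\sum_k|\psi_k|^2(g_k\dbn_k+\dot{\til N})^2$.
\end{itemize}
If the $\ket{\tilde\psi}$ coherences produce cross terms between excited levels, the claimed $\mb{A}$ absorbs them. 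Checking this bookkeeping is routine but must be done carefully.

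\textbf{Verifying the lemma's conditions.}
\begin{itemize}
\item Condition 2: since every $g_k\dbn_k\le0$, $\mb B\le 0$, so $\lone{\mb B}=|\dot{\til N}|$. Also $\mb A$ has rank $\le2$ with trace norm $\lone{\mb A}=\sqrt{\sum g_k^2\dbn_k^2|\psi_k|^2}$; this can be checked by computing the eigenvalues of a $2\times2$ Gram-type matrix, which is the one step I would verify explicitly. Hence $\lone{\mb A}\le|\dot{\til N}|$. Similarly $|a|\le|b|$, so the product in condition 2 is nonnegative.
\item Condition 1: Jensen/Cauchy--Schwarz gives $(a-b)^2\le4\ltwo{\ub c}^2$, so we are in the concave case. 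We must show the critical point $\alpha^*$ lies outside $[0,1]$. The denominator is $\ge0$. Expanding the numerator gives $2\ltwo{\ub c}^2+(a-b)b=\tfrac12\bigl(\sum|\psi_k|^2(g_k\dbn_k)^2-(\sum g_k\dbn_k)^2\bigr)\le0$, because all terms share a sign and the sum of squares is at most the square of the sum. So $\alpha^*\le0$.
\end{itemize}
The degenerate case of zero denominator, where the quadratic is linear, can be handled directly.

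\textbf{Conclusion and main obstacle.} Lemma~\ref{lem:trace-norm-bound-2} then gives $\lone{\partial_\beta\cL[\rho]}\le\lone{\mb B}+|b|=2|\dot{\til N}|=2|\sum_j\partial_\beta\Gamma_\beta(\Delta_j)|$. Saturation at the ground state is immediate: $\partial_\beta\cL[\ketbra00]$ is diagonal, with entries $-\dot{\til N}$ and $g_k\dbn_k$, so its trace norm is exactly $2|\dot{\til N}|$.

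The main obstacle is the correct identification of $\mb A,a,\ub c$ from the Davies dissipator. The sign bookkeeping of $\dbn_k$ must make every term in the sums one-signed, and the proof depends on it.
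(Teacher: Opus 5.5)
Your proposal is correct and follows essentially the same route as the paper. It uses the same block decomposition with the same $\mathbf{A}$, $\mathbf{B}$, $a$, $b$, $\mathbf{c}$, verifies the hypotheses of Lemma~\ref{lem:trace-norm-bound-2} in the same way (Jensen for $(a-b)^2\le 4\|\mathbf{c}\|_2^2$, the one-signed numerator giving $\alpha^*\le 0$, and $\|\mathbf{A}\|_1\le\|\mathbf{B}\|_1$, $|a|\le|b|$), and checks saturation at the ground state identically; your additions are the explicit convexity reduction from mixed to pure states and the observation that $\alpha^*\le 0$ together with the zero-denominator case suffices, both of which the paper leaves implicit.
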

\begin{proof}
First, we have the Lindblad operator in the following form
\begin{align}
    \mb{L} = \sum_{j=1}^{d-1} g_j\left[ (1+\bar n_j) \bD_\da^{(j)} + \bar n_j \bD_\ua^{(j)}\right],
\end{align}
where $\bar n_{j} = 1/[\exp(\Delta_j \beta) - 1]$ and $g_j \ge 0$ for all $j=1,2,...,d-1$. 
$\bar n_{j}$ is also called Bose function that satisfies the following KMS conditions
\begin{align}
    \frac{1+\bar n_j}{\bar n_j} = e^{\beta \Delta_j},~\forall j=1,2,...,d-1.
\end{align}
In addition, we have
\begin{align}
    \Gamma_\beta(\Delta_j) = g_j \bar n_j.
\end{align}
Here we define the following operators
\begin{align}
    \boldsymbol{\Upsilon}^{(k)} &= \bD_\ua^{(k)} + \bD_\da^{(k)}\\
    &=\kket{0,0}\bbra{k,k} + \kket{k,k}\bbra{0,0} -\frac{1}{2}\left(\ketbra{k}\ot \id + \id\ot\ketbra{k^*}+\ketbra{0}\ot \id + \id \ot\ketbra{0^*}\right).
\end{align}
Let $\dbn_j=\partial_\beta \bar n_j$,
such that the derivative of $\bL$ can be written as
\begin{align}
    \partial_\beta \bL = \sum_{k=1}^{d-1} g_k\dbn_k\boldsymbol{\Upsilon}^{(k)},
\end{align}
where $\dbn_k = \partial_\beta \bar{n}_k$.
Let the state $\ket{\psi} = \sum_{j>0} \psi_j \ket{j}$. We have $\kket{\psi,\psi}=\sum_{i,j} \psi_i \psi_j^* \kket{i,j}$ and
\begin{align}
    a = \bbra{0,0}\partial_\beta \bL\kket{\psi,\psi} =\sum_{k=1}^{d-1} g_k\dbn_k |\psi_k|^2.
\end{align}
Similarly,
\begin{align}
    b = \bbra{0,0} \partial_\beta \bL \kket{0,0} = - \dot{\til N},
\end{align}
where $\til{N}=\sum_{j=1}^{d-1} g_j\bar n_j$.
By computing $\partial_\beta L \kket{\psi,0}$, we have
\begin{align}
    \ltwo{\ub{c}} = \frac{1}{2} \sqrt{\sum_{k=1}^{d-1} |\psi_k|^2 \left(g_k\dbn_k+\dot{\til{N}}\right)^2}.
\end{align}
We also have
\begin{align}
    \mb{A} = -\frac{1}{2} \left(\ketbra{\tilde\phi}{\psi} + \ketbra{\psi}{\tilde{\phi}}\right),
\end{align}
with $\ket{\tilde{\phi}}=\sum_{k=1}^{d-1} \dbn_k \psi_k \ket{k}$.
And
\begin{align}
    \mb{B} = \sum_{k=1}^{d-1} g_k \dbn_k \ketbra{k}{k}.
\end{align}
We then verify the conditions of \cref{lem:trace-norm-bound-2}.
Specifically, we want to prove the following conditions hold: $(a-b)^2 \le 4\ltwo{\ub{c}}^2$, $b(a-b)+2\ltwo{\ub{c}}^2\le 0$, $\lone{A} \le \lone{B}$ and $|b|\ge |a|$.
First, it is obvious that $|b| \ge |a|$ as $\dot{\til{N}}^2 = \left(\sum_{j=1}^{d-1} g_j\dbn_j\right)^2 \ge g_k^2\dbn_k^2$ holds for $k=1,2,...,d-1$.
We check the first condition:
\begin{align}
    (a-b)^2 - 4\ltwo{\ub{c}}^2&= \left(\dot{\til N}+\sum_{k=1}^{d-1} g_k \dbn_k |\psi_k|^2\right)^2 - \sum_{l=1}^{d-1} |\psi_l|^2 \left(g_l \dbn_l + \dot{\til N}\right)^2\\
    &= \left(\sum_{k=1}^{d-1} g_k \dbn_k |\psi_k|^2\right)^2 + \dot{\til{N}}^2 + 2\dot{\til{ N}} \sum_{j=1}^{d-1}g_j \dbn_j |\psi_j|^2 - \sum_{l=1}^{d-1} |\psi_l|^2 \left(g_l^2 \dbn_l^2+\dot{\til{N}}^2+2\dot{\til{N}}g_l \dbn_l\right)\\
    &=\left(\sum_{k=1}^{d-1} g_k \dbn_k |\psi_k|^2\right)^2 - \sum_{l=1}^{d-1} |\psi_l|^2 g_l^2 \dbn_l^2\\
    &=\left(\sum_{k=1}^{d-1} g_k \dbn_k |\psi_k|^2\right)^2 - \left(\sum_{l=1}^{d-1} |\psi_l|^2 g_l^2 \dbn_l^2\right)\left(\sum_{j=1}^{d-1} |\psi_j|^2\right)\\
    &\le \left(\sum_{k=1}^{d-1} g_k \dbn_k |\psi_k|^2\right)^2 - \left(\sum_{j=1}^{d-1}|\psi_j|^2 g_j \dbn_j\right)^2=0,
\end{align}
where the final inequality comes directly from Cauchy–Schwarz inequality.
To verify the second condition, we have
\begin{align}
    b (a -b) + 2\ltwo{\ub{c}}^2 &= -\dot{\til{N}} \left( \sum_{k=1}^{d-1}g_k\dbn_k|\psi_k|^2 + \dot{\til{N}}\right) + \frac{1}{2} \sum_{k=1}^{d-1}|\psi_k|^2 \left(g_k\dbn_k + \dot{\til N}\right)^2\\
    &= -\dot{\til{N}} \left( \sum_{k=1}^{d-1}g_k\dbn_k|\psi_k|^2 + \dot{\til{N}}\right) + \frac{1}{2} \sum_{k=1}^{d-1}|\psi_k|^2 \left(g_k^2\dbn_k^2+\dot{\til{N}}^2 + 2g_k\dbn_k \dot{\til{N}}\right)\\
    &=\frac{1}{2}\sum_{k=1}^{d-1}|\psi_k|^2 \left(g_k^2\dbn_k^2-\dot{\til{N}}\right)\le 0.
\end{align}
The trace norm of $\mb{B}$ is easy to obtain as it is diagonal:
\begin{align}
    \lone{\mb{B}} = \sum_{k=1}^{d-1} |g_k\dbn_k| = |\dot{\til{N}}|.
\end{align}
To compute the trace norm of $\mb{A}$, we let $\ket{v} = a \ket{\tilde\phi}+b\ket{\psi}$. We have
\begin{align}
    \mb{A} \ket{v} = \lambda\ket{v}\Rightarrow -\frac{\gamma}{2}\left(a\braket{\psi|\tilde{\phi}}\ket{\tilde \phi}+a\braket{\tilde{\phi}|\tilde{\phi}} \ket{\psi} + b\ket{\tilde{\phi}}+b\braket{\til{\phi}|\psi}\ket{\psi}\right) = \lambda a\ket{\til \phi} +\lambda b\ket{\psi}.
\end{align}
Accordingly, we have
\begin{align}
    \frac{\gamma}{2} \left(a \braket{\psi|\til \phi} + b \right) +\lambda a&=0,\\
    \frac{\gamma}{2} \left(a \braket{\til \phi|\til \phi}+ b\braket{\til \phi|\psi}\right) + \lambda b&=0.
\end{align}
In matrix form, we have
\begin{align}
    \begin{pmatrix}
        \lambda + \frac{1}{2} \braket{\psi|\til \phi} & \frac{1}{2}\\
        \frac{1}{2} \braket{\til \phi|\til \phi} & \lambda + \frac{1}{2}\braket{\til \phi|\psi}
    \end{pmatrix}
    \begin{pmatrix}
        a\\b
    \end{pmatrix}=\ub{0}.
\end{align}
We have
\begin{align}
    \det \begin{pmatrix}
        \lambda + \frac{1}{2} \braket{\psi|\til \phi} & \frac{1}{2}\\
        \frac{1}{2} \braket{\til \phi|\til \phi} & \lambda + \frac{1}{2}\braket{\til \phi|\psi}
    \end{pmatrix} = \left(\lambda + \frac{1}{2} \braket{\psi|\til \phi}\right)\left(\lambda + \frac{1}{2}\braket{\til \phi|\psi}\right) - \frac{1}{4}\braket{\til\phi|\til\phi} = 0.
\end{align}
By solving this equation, we have
\begin{align}
    \lambda_{1,2} = \frac{1}{2}\left(-\sum_{j=1}^{d-1} |\psi_j|^2 g_j \dbn_j \pm \sqrt{\sum_{k=1}^{d-1} g_k^2\dbn_k^2 |\psi_k|^2}\right).
\end{align}
By Cauchy–Schwarz inequality, we have
\begin{align}
    \left(\sum_j |\psi_j|^2\right)\left(\sum_j |\psi_j|^2\dbn_j^2\right)\ge \left(\sum_j |\psi_j|^2 \dbn_j \right)^2.
\end{align}
Then
\begin{align}
    \lone{\mb{A}} = |\lambda_1 - \lambda_2| = \sqrt{\sum_{k=1}^{d-1} g_k^2\dbn_k^2 |\psi_k|^2}.
\end{align}
As $\dot N^2 \ge \dbn_k^2$ holds for all $k$'s, we have $\lone{A}\le \lone{B}$.
In conclusion, the local distinguishability of the output states is bounded as
\begin{align}
    \lone{\partial_\beta \cL[\brho]} \le \lone{\mb{B}}+|b|=2\left|\dot{\til{N}}\right|=2\left|\sum_{j=1}^{d-1}\frac{\partial}{\partial \beta}\Gamma_\beta(\Delta_j)\right|,
\end{align}
which is saturated by choosing the ground state as the initial state.
\end{proof}

\section{Exact solution for the degenerate case}\label{app:satisfy-SAT}
For convenience, we write the explicit form of the maps $\bD_\ua^{(j)}$ and $\bD_\da^{(j)}$ in the vectorization representation:
\begin{align}
\bD_\da^{(j)} &=\kket{0,0} \bbra{j, j} - \frac{1}{2}\left( |j\rangle\langle j|\ot \id + \id \ot |j^*\rangle\langle j^*| \right),\\
\bD_\ua^{(j)} &= \kket{j,j} \bbra{0, 0} - \frac{1}{2}\left( |0\rangle\langle 0|\ot \id + \id \ot |0^*\rangle\langle 0^*| \right).
\end{align}
Their adjoint maps can be written as
\begin{align}
\bD_\da^{(j)\dg} &= \kket{j,j} \bbra{0,0} - \frac{1}{2}\left( |j\rangle\langle j|\ot \id + \id \ot |j^*\rangle\langle j^*| \right),\\
\bD_\ua^{(j)\dg} &= \kket{0,0} \bbra{j, j} - \frac{1}{2}\left( |0\rangle\langle 0|\ot \id + \id \ot |0^*\rangle\langle 0^*| \right).
\end{align}
As $[\sum_j \mb{A}_j, \sum_k \mb{B}_k]=\sum_{j,k}[\mb{A}_j,\mb{B}_k]$,
we can check the commutator separately, i.e.,
\begin{align}
\left[\bD_\da^{(j)},\bD_\da^{(k)\dg}\right] &= \bD_\da^{(j)}\bD_\da^{(k)\dg}-\bD_\da^{(j)\dg}\bD_\da^{(k)}\\
&= \kkbb{0,0}{0,0}\delta_{j,k} - \kkbb{j,j}{0,0}\delta_{j,k}-\kkbb{0,0}{j,j}\delta_{j,k} - \kkbb{j,j}{k,k},\\
\left[\bD_\da^{(j)},\bD_\ua^{(k)\dg}\right] &= (1 + \delta_{jk}) \kket{0,0} \bbra{j,j},\\
\left[\bD_\ua^{(j)},\bD_\da^{(k)\dg}\right] &= - (1 + \delta_{jk}) \kket{k,k} \bbra{0,0},\\
\left[\bD_\ua^{(j)},\bD_\ua^{(k)\dg}\right] &= \kket{j,j} \bbra{k,k} - \kket{j,j} \bbra{0,0} - \kket{0,0} \bbra{k,k} - \delta_{jk} \kket{0,0} \bbra{0,0}.
\end{align}
In addition, we have
\begin{align}
\sum_{j=1}^{d-1} \bD_\da^{(j)\dg} &= \sum_{j=1}^{d-1} \kket{j,j}\bbra{0,0} - \id\ot\id,\\
\sum_{j=1}^{d-1} \bD_\ua^{(j)\dg} &= \sum_{j=1}^{d-1} \kket{0,0}\bbra{j,j} - \frac{d-1}{2}\left( |0\rangle\langle 0|\ot \id + \id \ot |0^*\rangle\langle 0^*| \right).
\end{align}
We define $\bSigma_\da:=\sum_{j=1}^{d-1} \bD_\da^{(j)}$ and $\bSigma_\ua:=\sum_{j=1}^{d-1} \bD_\ua^{(j)}$.
We now consider the action of the above maps on the population space spanned by $\{\kket{0,0}, \kket{j,j}, j=1,2,...,d-1\}$.
\begin{align}
\bSigma^\dg_\da\kket{0,0} &= - \kket{0,0} + \sum_{j=1}^{d-1} \kket{j,j},\\
\bSigma^\dg_\da\kket{k,k} &= -\kket{k,k},~\forall k=1,2,...,d-1,\\
\bSigma^\dg_\ua\kket{0,0} &= -(d-1)\kket{0,0},\\
\bSigma^\dg_\ua\kket{k,k} &= \kket{0,0},~\forall k=1,2,...,d-1.
\end{align}
Before detailing the exact solution, we briefly recall the physical meaning of the relevant parameters. We use $\gamma$ to denote the characteristic relaxation rate, and $\bar n=(e^{\beta\Delta}-1)^{-1}$ to represent the mean thermal boson occupation number at the energy gap $\Delta$. Consequently, the transition rates are given by $\Gamma_\beta(-\Delta)=\gamma(1+\bar n)$ and $\Gamma_\beta(\Delta)=\gamma\bar n$. For conciseness, we adopt the $\gamma$ and $\bar n$ notation in the subsequent calculations.

\subsection{Right eigenvalues and eigenvectors}
The Lindbladian can be written as
\begin{align}
\bL = \sum_{j=1}^{d-1} \gamma(1+\bar n) \bD_\da^{(j)} + \gamma \bar n \bD_\ua^{(j)}.
\end{align}
First, we consider the coherence part.
As the map that we considered is of the Davies form, its coherence part is diagonal in the energy eigenbasis.
In the vectorization representation, we have
\begin{align}
\bL\kket{k, i} = \sum_{j=1}^{d-1} \gamma(1+\bar n) \bD_\da^{(j)} \kket{k,i} = - \gamma(1+\bar n) \kket{k,i},~\forall k\neq i, k>0,i>0.
\end{align}
In addition, we have
\begin{align}
\bL\kket{k, 0} = -\frac{1}{2}\left[\gamma \bar n + (d-1)\gamma(1+\bar n)\right]\kket{k,0}, ~\forall k>0.
\end{align}
Then we consider the population part. 
The population space is spanned by $\{\kket{0,0}, \kket{k,k}, k=1,2,...,d-1\}$.
The image of the map can be written as
\begin{align}
\bD_\ua^{(j)} \kket{0,0} &= \kket{j,j} - \kket{0,0},\\
\bD_\da^{(j)} \kket{0,0} &= 0,\\
\bD_\ua^{(j)} \kket{k,k} &= 0,\\
\bD_\da^{(j)} \kket{k,k} &= \delta_{jk} \left(\kket{0,0} - \kket{k,k}\right).
\end{align}
We then have
\begin{align}
\bL \kket{0,0} &= \sum_{j=1}^{d-1} \gamma \bar n \left(\kket{j,j} - \kket{0,0}\right) = -(d-1)\gamma \bar n \kket{0,0} + \gamma \bar n \sum_{j=1}^{d-1} \kket{j,j},\\
\bL \kket{k,k} &= \sum_{j=1}^{d-1} \gamma(1+\bar n) \delta_{jk} \left(\kket{0,0} - \kket{k,k}\right) = \gamma(1+\bar n) \left(\kket{0,0} - \kket{k,k}\right)\label{eq:g9}.
\end{align}
Let $\kket{S} = \frac{1}{\sqrt{d-1}} \sum_{j=1}^{d-1} \kket{j,j}$ be the uniform combination of all exited states.
We have
\begin{align}
\bL\kket{0,0} &= \sqrt{d-1} \gamma \bar n \kket{S} - (d-1)\gamma \bar n \kket{0,0},\\
\bL\kket{S} &= \gamma(1+\bar n) \sqrt{d-1} \kket{0,0} - \gamma(1+\bar n) \kket{S}.
\end{align}
In the subspace spanned by $\{\kket{0,0}, \kket{S}\}$, the map can be written as a $2\times 2$ matrix:
\begin{align}
M = \begin{pmatrix}
-(d-1)\gamma \bar n & \sqrt{d-1}\gamma(1+\bar n)\\
\sqrt{d-1}\gamma \bar n & -\gamma(1+\bar n)
\end{pmatrix}.
\end{align}
The eigenvalues can be obtained by solving the characteristic polynomial:
\begin{align}
\det(M - \lambda I) = 0.
\end{align}
We have
\begin{align}
\begin{vmatrix}
-(d-1)\gamma \bar n - \lambda & \sqrt{d-1}\gamma(1+\bar n)\\
\sqrt{d-1}\gamma \bar n & -\gamma(1+\bar n) - \lambda
\end{vmatrix} = 0.
\end{align}
By solving the equation, we have
\begin{align}
\lambda = 0,~\text{or}~-\gamma(d \bar n + 1).
\end{align}
The corresponding eigenvectors are
\begin{align}
\kket{r_1} &= \tau_\beta = \exp(-\beta H)/\Tr[\exp(-\beta H)],\\
\kket{r_d} &= \sqrt{d-1} \kket{0,0}-\kket{S}.
\end{align}
Here $\tau_\beta$ is the Gibbs state. 
We here only normalize the eigenvector $\kket{r_d}$ by dividing its norm $\sqrt{d}$, i.e.,
\begin{align}
\kket{r_d} = \sqrt{\frac{d-1}{d}} \kket{0,0} - \sqrt{\frac{1}{d(d-1)}} \sum_{j=1}^{d-1} \kket{j,j}.
\end{align}
By observing \cref{eq:g9}, we can see that the $(d-2)$ eigenvectors in the orthogonal subspace to $\kket{0,0}$ and $\kket{S}$ can be obtained by linearly summing both sides with coefficients $c_j$'s that satisfy $\sum_{j=1}^{d-1} c_j = 0$:
\begin{align}
\bL \sum_{k=1}^{d-1}c_k\kket{k,k} = -\gamma(1+\bar n) \sum_{k=1}^{d-1} c_k \kket{k,k},
\end{align}
where the $\kket{0,0}$ term vanishes as $\kket{S}$ is orthogonal to this subspace.
By normalization, similar to the left eigenvectors, we can choose a convenient basis for this eigenspace:
\begin{align}
\kket{r_k} = \frac{ \kket{1,1} - \kket{k,k} }{ \sqrt{2} }, \quad k = 2, 3, \dots, d-1.
\end{align}
All the right eigenvalues and eigenvectors in the population space are summarized in \Cref{tab:right-eigen-population}.

\begin{table}[h!]
\begin{tabular}{|c|c|c|}
\hline
\text{Eigenvalue} & \text{Multiplicity} & \text{Eigenvector} \\
\hline
$0$ & $1$ & 
$\dfrac{1+\bar{n}}{1+d\bar{n}} \kket{0,0} + \dfrac{\bar{n}}{1+d\bar{n}} \displaystyle\sum_{j=1}^{d-1} \kket{j,j} = \tau_\beta$ \\
\hline
$-\gamma(d \bar{n} + 1)$ & $1$ & 
$\sqrt{\dfrac{d-1}{d}} \kket{0,0} - \dfrac{1}{\sqrt{d(d-1)}} \displaystyle\sum_{j=1}^{d-1} \kket{j,j}$ \\
\hline
$-\gamma(1 + \bar{n})$ & $d-2$ & 
$\dfrac{ \kket{1,1} - \kket{k,k} }{ \sqrt{2} }$ \quad for $k = 2, 3, \dots, d-1$ \\
\hline
\end{tabular}
\caption{Right eigenvectors and eigenvalues of the Lindbladian $\bL$ in population space.}
\label{tab:right-eigen-population}
\end{table}

\subsection{Left eigenvalues and eigenvectors}
We first consider the coherence part.
As the map is diagonal in the coherence part, we have, for $k\neq i, k>0,i>0$,
\begin{align}
\bL^\dg \kket{k,i} = -\gamma(1+\bar n) \kket{k,i}.
\end{align}
We write the adjoint map of $\bD^{(j)}_{\ua,\da}$'s as
\begin{align}
&\bD_\da^{(j)\dg}\kket{k,i} = -\frac{1}{2}\left(\delta_{jk} + \delta_{ji}\right)\kket{k,i},\\
&\bD_\ua^{(j)\dg}\kket{k,i} = 0.
\end{align}
We consider the population space spanned by $\{\kket{0},\kket{k,k}, k\in[1,d-1]\}$.
The map can be written as
\begin{align}
\bL\kket{k,k} &= \gamma (1+\bar n) \kket{0,0} - \gamma (1+\bar n) \kket{k,k},\\
\bL\kket{0,0} &= -\gamma(d-1) \bar n \kket{0,0} + \gamma \bar n \sum_{j=1}^{d-1} \kket{j,j}.
\end{align}
We need to compute the eigenvectors and eigenvalues of the map $\bL^\dg$:
\begin{align}
\bL^\dg \kket{0,0} &= \gamma(1+\bar n) \sum_{j=1}^{d-1} \kket{j,j} - \gamma (d-1) \bar n \kket{0,0},\\
\bL^\dg \kket{k,k} &= -\gamma (1+\bar n) \kket{k,k} + \gamma \bar n \kket{0,0}.
\end{align}
The operator $\bL^\dg$ acts on a $d$-dimensional basis $\{ \kket{0,0}, \kket{1,1}, \dots, \kket{d-1,d-1} \}$. 
Define the symmetric state $\kket{S} = \frac{1}{\sqrt{d-1}} \sum_{j=1}^{d-1} \kket{j,j}$. The action of $\bL^\dg$ is:
\begin{align}
\bL^\dg \kket{0,0} &= -\gamma (d-1) \bar{n} \kket{0,0} + \gamma (1 + \bar{n}) \sqrt{d-1} \kket{S},\\
\bL^\dg \kket{S} &= \gamma \bar{n} \sqrt{d-1} \kket{0,0} - \gamma (1 + \bar{n}) \kket{S}.
\end{align}
The restricted $2\times 2$ matrix in $\{ \kket{0,0}, \kket{S} \}$ is:
$$
M = \begin{pmatrix}
-\gamma (d-1) \bar{n} & \gamma \bar{n} \sqrt{d-1} \\
\gamma (1 + \bar{n}) \sqrt{d-1} & -\gamma (1 + \bar{n})
\end{pmatrix}.
$$
Solve $\det(M - \lambda I) = 0$:
$$
\begin{vmatrix}
-\gamma (d-1) \bar{n} - \lambda & \gamma \bar{n} \sqrt{d-1} \\
\gamma (1 + \bar{n}) \sqrt{d-1} & -\gamma (1 + \bar{n}) - \lambda
\end{vmatrix} = 0.
$$
$$
\left[-\gamma (d-1) \bar{n} - \lambda\right]\left[-\gamma (1 + \bar{n}) - \lambda\right] - \left[\gamma \bar{n} \sqrt{d-1}\right]\left[\gamma (1 + \bar{n}) \sqrt{d-1}\right] = 0.
$$
$$
\lambda^2 + \gamma\left[(d-1)\bar{n} + 1 + \bar{n}\right]\lambda = 0.
$$
Eigenvalues:
$$
\lambda_1 = 0, \quad \lambda_d = -\gamma(d\bar{n} + 1).
$$
For $\lambda_1=0$, the corresponding eigenvector (up to normalization) is
\begin{align}
\kket{l_1} = \kket{0,0} + \sum_{j=1}^{d-1}\kket{j,j} = \kket{\id}.
\end{align}
For $\lambda_d=-\gamma (d\bar n +1)$, we have
\begin{align}
\kket{l_d} = \bar{n}\sqrt{d-1}\kket{0,0} - (1+\bar n)\kket{S}.
\end{align}
The $(d-2)$ eigenvectors in the orthogonal subspace to $\kket{0,0}$ and $\kket{S}$:
$$
\kket{\psi} = \sum_{j=1}^{d-1} c_j \kket{j,j}, \quad \sum_{j=1}^{d-1} c_j = 0, \quad \bL^\dg \kket{\psi} = -\gamma(1 + \bar{n}) \kket{\psi}.
$$
Eigenvalue $\lambda_k = -\gamma(1 + \bar{n})$ with multiplicity $d-2$. Basis:
$$
\kket{\psi_k} = \frac{ \kket{1,1} - \kket{k,k} }{ \sqrt{2} }, \quad k = 2, 3, \dots, d-1.
$$
All the eigenvalues and eigenvectors in the population space are summarized in \Cref{tab:eigen-population}.
\begin{table}[h!]
\begin{tabular}{|c|c|c|}
\hline
\text{Eigenvalue} & \text{Multiplicity} & \text{Eigenvector} \\
\hline
$0$ & $1$ & 
$\dfrac{ \kket{0,0} + \displaystyle\sum_{j=1}^{d-1} \kket{j,j} }{ \sqrt{d} }$ \\
\hline
$-\gamma(d \bar{n} + 1)$ & $1$ & 
$\dfrac{ -\bar{n} \sqrt{d-1} \; \kket{0,0} + \dfrac{1 + \bar{n}}{\sqrt{d-1}} \displaystyle\sum_{j=1}^{d-1} \kket{j,j} }{ \sqrt{ \dfrac{(1 + \bar{n})^2}{d-1} + (d-1)\bar{n}^2 } }$ \\
\hline
$-\gamma(1 + \bar{n})$ & $d-2$ & 
$\dfrac{ \kket{1,1} - \kket{k,k} }{ \sqrt{2} }$ \quad for $k = 2, 3, \dots, d-1$ \\
\hline
\end{tabular}
\caption{Left eigenvectors and eigenvalues of the linear map $\bL$ in population space.}
\label{tab:eigen-population}
\end{table}
Combining with the previous results on the coherence part, we have the following sequence of inequalities:
\begin{align}
0=|\Re(\lambda_1)| \le |\Re(\lambda_2)|=|\Re(\lambda_3)| ... = |\Re(\lambda_{d-1})| \le |\Re(\lambda_c)| \le |\Re(\lambda_d)|.
\end{align}
The slowest decay rate is $\lambda_2 = -\gamma(1+\bar n)$. The corresponding eigenvectors are
\begin{align}
\frac{ \kket{1,1} - \kket{k,k} }{ \sqrt{2} } \quad \text{ for } k = 2, 3, \dots, d-1.
\end{align}
To be orthogonal to the slowest decay mode, the initial state should be either the ground state or the equal superposition of exited states.
The fastest decaying eigenvector can be written as (up to normalization):
\begin{align}
-\bar{n} \sqrt{d-1} \; \kket{0,0} + \dfrac{1 + \bar{n}}{\sqrt{d-1}} \displaystyle\sum_{j=1}^{d-1} \kket{j,j},
\end{align}
which converge to the ground state asymptotically for large $d$.
Thus, we can conclude that the optimal initial state is asymptotically aligned with the fastest decaying mode.

\section{Convergence rate analysis}
\subsection{Bound the ground state overlap by perturbation theory}\label{app:perturbation-theory}
\begin{theorem}\label{thm:bound-perturbed-overlap}
    The optimal probe state for temperature estimation satisfies
    \begin{align}
        \left|\Tr(l^\dg_j \brho^{\star})\right| \le \frac{\varepsilon}{\sqrt{d-1}} \left|\frac{\partial}{\partial\omega}\log \left[e^{\omega \beta} \Gamma_\beta(\omega)\right]\right|_{\omega=\Delta},
    \end{align}
    for all $j$'s that fulfill $|\Re(\lambda_j)|\in (0,(d-1)\Lambda_{\min}/2]$. 
    Here $\Delta:=\omega_1-\omega_0$.
    If the excited states are exactly degenerate ($\epsilon=0$), the conditions are satisfied strictly,
    \begin{align}
        \Tr(l^\dg_j \brho^{\star}) = 0.
    \end{align}
\end{theorem}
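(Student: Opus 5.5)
The plan is to treat the detuned Liouvillian as a perturbation of the degenerate one and to work only in the population subspace $\mathcal{P}$. Since $\rho^\star=\ketbra{0}{0}$ by \cref{lem:optimal-thermometry-star-shape}, and $\cL$ does not couple $\mathcal{P}$ and $\mathcal{C}$, all coherence-sector left eigenoperators have zero overlap with $\rho^\star$. In the degenerate case the left eigenvectors in the slow cluster, with eigenvalue $-\gamma(1+\bar n)$, are spanned by $\sum_j c_j\kket{j,j}$ with $\sum_j c_j=0$. These have no $\kket{0,0}$ component, so $\Tr(l_j^\dg\rho^\star)=0$ exactly. This settles the $\varepsilon=0$ statement.

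For $\varepsilon>0$, I will write the population block as $\bL=\bL_0+\mathbf V$, where $\bL_0$ uses rates at the common gap $\Delta$. The perturbation is then
\[
\mathbf V=\sum_{j}\bigl[\delta\Gamma^{\downarrow}_j\bD_\da^{(j)}+\delta\Gamma^{\uparrow}_j\bD_\ua^{(j)}\bigr],
\]
with $\delta\Gamma^{\uparrow}_j=\Gamma_\beta(\Delta_j)-\Gamma_\beta(\Delta)=O(\varepsilon)$. The $2\times2$ block $\{\kket{0,0},\kket S\}$ has eigenvalues $0$ and $-\gamma(d\bar n+1)$. Both are separated from the slow cluster $-\gamma(1+\bar n)$ by a gap of order $(d-1)\gamma\bar n$, which is why the window $(0,(d-1)\Lambda_{\min}/2]$ isolates only the perturbed slow modes.

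Next I apply first-order perturbation theory to the left eigenvectors of the slow cluster. Using a Riesz projector $P$ for the cluster, a perturbed left eigenvector takes the form $l_j=l_j^{(0)}+\delta l_j$. To first order, its $\kket{0,0}$ component is
\[
\bbra{0,0}\,(\lambda^{(0)}-\bL_0^\dg)^{-1}(1-P_0)\,\mathbf V^\dg\, l_j^{(0)}.
\]
Acting with $\mathbf V^\dg$ on $\sum_k c_k\kket{k,k}$ produces the component $\sum_k c_k\,\delta\Gamma^{\uparrow}_k\kket{0,0}$, together with a diagonal rescaling. Projecting onto the complementary $2\times2$ block and inverting the resolvent there then gives the $\kket{0,0}$ overlap.

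For the bound itself, I would normalize $l_j^{(0)}$ with $\sum_k|c_k|^2\le 1/(d-1)$-type constraints, or choose an orthonormal basis adapted so that $|\sum_k c_k\delta\Gamma_k|$ is controlled by Cauchy--Schwarz. Writing $\delta\Gamma_k\approx\varepsilon\,\partial_\omega\Gamma$, the resolvent factor supplies a $1/\Gamma$ denominator, so the combination becomes $\varepsilon\,\partial_\omega\log$ of the appropriate rate. The up rate and the down rate combine via detailed balance \eqref{eq:kms-detail} into $e^{\omega\beta}\Gamma_\beta(\omega)=\Gamma_\beta(-\omega)$, which explains the logarithmic derivative appearing in the statement. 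Finally, the $1/\sqrt{d-1}$ factor comes from the resolvent on $\{\kket{0,0},\kket S\}$ together with the $\sqrt{d-1}$ normalization of $\kket S$.

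The main obstacle I expect is getting the exact constant rather than just the $O(\varepsilon)$ order. This requires making the degenerate perturbation rigorous, including the choice of basis within the cluster that diagonalizes $P_0\mathbf V P_0$, as well as controlling the higher-order terms and the non-normal biorthogonal normalization. My first attempt would be to compute the $\kket{0,0}$ component exactly using the explicit $2\times2$ resolvent, and then absorb the second-order remainder into the leading term for small $\varepsilon$.
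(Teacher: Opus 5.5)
Your route is the same as the paper's: first-order perturbation of the degenerate slow left eigenvectors $l^{(0)}=\sum_k c_k\kket{k,k}$, $\sum_k c_k=0$, inside the population block, followed by reading off the $\kket{0,0}$ component and applying Cauchy--Schwarz. The paper writes the first-order equation componentwise and sums over $k\ge1$. That is exactly the $\kket{S}$-row of your equation $(\lambda^{(0)}-\bL_0^\dg)(1-P_0)\,\delta l=(1-P_0)\mathbf V^\dg l^{(0)}$. In that row the coefficient of the unknown $\kket{S}$-component vanishes because $\lambda^{(0)}=-\Gamma_\beta(-\Delta)$, so it is a one-line solve. However, the pieces you expect to fix the constant are misidentified.

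(i) On $\mathrm{span}\{\kket{0,0},\kket{S}\}$, the entry of $(\lambda^{(0)}-\bL_0^\dg)^{-1}$ mapping $\kket{0,0}$ to $\kket{0,0}$ is zero. So the term $\sum_k c_k\,\delta\Gamma^{\uparrow}_k\kket{0,0}$ you single out contributes nothing. Everything comes from the $\kket{S}$-component $-(d-1)^{-1/2}\sum_k c_k\,\delta\Gamma^{\downarrow}_k$ of what you call the diagonal rescaling, which gives
\[
\bbra{0,0}\delta l=\frac{1}{(d-1)\,\Gamma_\beta(-\Delta)}\sum_{k=1}^{d-1}c_k\,\delta\Gamma^{\downarrow}_k,\qquad \delta\Gamma^{\downarrow}_k\simeq\varepsilon_k\,\partial_\omega\bigl[e^{\omega\beta}\Gamma_\beta(\omega)\bigr]_{\omega=\Delta}.
\]
Only the down rate $e^{\omega\beta}\Gamma_\beta(\omega)$ appears; no up/down combination through detailed balance is involved.

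(ii) The normalization is not a free choice. In the degenerate slow block, left and right eigenvectors coincide. The restricted perturbation $-P_0\,\mathrm{diag}(\delta\Gamma^{\downarrow}_k)\,P_0$ is real symmetric, so the adapted basis is orthonormal. Hence $\Tr(l_j^\dg r_j)=1$ with $\|r_j\|_2=1$ forces $\sum_k|c_k|^2=1$. The factor $1/\sqrt{d-1}$ is then $1/(d-1)$ from the resolvent times $\sqrt{d-1}\,\varepsilon$ from Cauchy--Schwarz on $\sum_k c_k\varepsilon_k$. Your $\sum_k|c_k|^2\le 1/(d-1)$ option would produce a spurious $\varepsilon/(d-1)$.

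(iii) You do not need the basis that diagonalizes $P_0\mathbf V P_0$. The formula above holds for every unit $c$ with $\sum_k c_k=0$, so one Cauchy--Schwarz bound covers all slow modes at once. This is what the paper uses; it, too, only controls the first-order term.

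Your justification for discarding the other population mode is false. With $\Lambda_{\min}=\gamma(1+\bar n)$ as in the paper, $\gamma(d\bar n+1)\le (d-1)\Lambda_{\min}/2$ holds exactly when $\bar n\le (d-3)/(d+1)$. This includes the paper's own numerics ($d=10$, $\beta\Delta=1$), so the fast mode can sit inside the window. At $\varepsilon=0$ one has $\rho^\star-\tau_\beta=\frac{\bar n\sqrt{d(d-1)}}{1+d\bar n}\,r_d$, with $r_d$ the normalized right eigenvector for $-\gamma(d\bar n+1)$. Therefore $\Tr(l_d^\dg\rho^\star)\neq0$, even in the exactly degenerate case. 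Neither your argument nor the paper's, which simply takes $\kket{v}$ to be a slowest population mode, establishes the statement for every $j$ in that window. What is actually proved is the bound for the $d-2$ perturbed slow population modes, with coherence modes trivially orthogonal to $\rho^\star$. That is all the subsequent convergence lemma uses.
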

\begin{proof}
    Given the unperturbed eigenstate
    \begin{align}
        \bL^\dg\kket{v} = \lambda \kket{v}
    \end{align}
    and the perturbed one
    \begin{align}
        (\bL^\dg + \bL_1^\dg)\left(\kket{v} + \kket{v^1}\right) = (\lambda +\lambda^1) \left(\kket{v} + \kket{v^1}\right),
    \end{align}
    the equation can be further written as
    \begin{align}
        \bL_1^\dg \kket{v} + \bL^\dg \kket{v^1} = \lambda^1 \kket{v} + \lambda \kket{v^1}\label{eq:first-order-perturbation-equation}.
    \end{align}
    We rewrite the Lindblad matrix as
    \begin{align}
        \bL &= \sum_{j=1}^{d-1} J(\omega_j) \left[f_\beta(\omega_j)\bD_\ua^{(j)}+e^{\omega_j \beta} f_\beta(\omega_j)\bD_\da^{(j)}\right]_{\omega_j = \Delta}\\
        &=\sum_{j=1}^{d-1} \Gamma_\beta(\Delta) \bD^{(j)}_{\ua} + \Gamma_\beta '(\Delta) \bD_\da^{(j)}.
    \end{align}
    and the first order perturbation on the energy spectrum can be written as
    \begin{align}
        \bL_1^\dg &= \sum_{j}\partial_{\omega_j}\bL^\dg|_{\omega_j = \Delta} \delta \omega_j = \sum_{j=1}^{d-1}\varepsilon_j \partial_\omega \bL^\dg|_{\omega=\Delta}\\
        &=\sum_{j=1}^{d-1} \varepsilon_j \left\{\partial_\omega\left[J(\omega)f_\beta(\omega)\right] \bD^{(j)}_\ua +\partial_\omega \left[e^{\omega \beta}J(\omega) f_\beta(\omega)\right]\bD_\da^{(j)}\right\}_{\omega=\Delta}.
    \end{align}
    Here $f_\beta(\omega)=[\exp(\omega \beta)\pm 1]^{-1}$ and its derivative is $\partial_\omega f_\beta(\omega) = [\exp(\omega \beta)\pm 1]^{-2}(-1) \exp(\omega\beta)\beta=-\beta\exp(\omega\beta)f_\beta^2(\omega)$.
    We then have
    \begin{align}
        \bL_1^\dg =& \sum_{j=1}^{d-1} \varepsilon_j \left[ 
        \underbrace{
            \left( 
                f_\beta(\Delta) \partial_\omega J(\omega)\big|_{\omega=\Delta} 
                - \beta J(\Delta) e^{\Delta \beta} f_\beta^2(\Delta) 
            \right) 
        }_{\partial_\omega[J(\omega)f_\beta(\omega)]|_{\omega=\Delta}} 
        \bD_\ua^{(j)\dg}
        \right.\nonumber\\
        &+\left. 
        \underbrace{
            \left( 
                \beta e^{\Delta \beta} J(\Delta) f_\beta(\Delta) 
                + e^{\Delta \beta} f_\beta(\Delta) \partial_\omega J(\omega)\big|_{\omega=\Delta} 
                - \beta e^{2\Delta \beta} J(\Delta) f_\beta^2(\Delta) 
            \right) 
        }_{\partial_\omega[e^{\omega\beta}J(\omega)f_\beta(\omega)]|_{\omega=\Delta}} 
        \bD_\da^{(j)\dg}
    \right]\\
    =& \sum_{j=1}^{d-1} \varepsilon_j \left[\dot\Gamma_\beta(\Delta) \bD_\ua^{(j)\dg} + \dot\Gamma_\beta '(\Delta) \bD_\da^{(j)\dg}\right].
    \end{align}
    First, let $\kket{v}$ be the slowest decaying mode in the population space.
    As the unperturbed Lindblad operator is degenerated, we need to leave the unperturbed eigen-operator in a generic form, i.e., $\kket{v} = \sum_{j=1}^{d-1} v_j \kket{j,j}$. According to \Cref{app:satisfy-SAT}, the slowest decaying modes in the population space must satisfy $\bbkk{S}{v}=0$.
    In addition, we remark that the left eigen-operator should be normalized according to the norm of its corresponding right eigen-operator, i.e., $\bbkk{\hat l_j}{\hat r_j}=1$.
    However, the left and right eigen-operators coincides within this degenerated subspace, thereby they can be normalized by themselves.
    More specifically, we have the following constraints on the coefficients $\{v_j\}$:
    \begin{align}
        \sum_{j=1}^{d-1} v_j = 0,~ \sum_{j=1}^{d-1} |v_j|^2 = 1.\label{eq:unperturbed-coeff-constraints}
    \end{align}
    We also let the first-order perturbation be of the general form as well, i.e., $\kket{v^1} = \sum_{j=0}^{d-1} v_j^1 \kket{j,j}$.
    Our target is to bound $|v_0^1|$.
    
    We now evaluate each term in \Cref{eq:first-order-perturbation-equation}.
    First, we have
    \begin{align}
        \bL_1^\dg \kket{v} &= \sum_{j=1}^{d-1} v_j \bL_1^\dg \kket{j,j} = \sum_{j,k=1}^{d-1} v_k \varepsilon_j \left[\dot{\Gamma}_\beta (\Delta) \bD_\ua^{(j)\dg} \kket{k,k}+ \dot{\Gamma}_\beta' (\Delta)\bD_\da^{(j)\dg} \kket{k,k}\right]\\
        &= \sum_{j,k=1}^{d-1} v_k \varepsilon_j \left[\dot{\Gamma}_\beta (\Delta) \delta_{j,k}\kket{0,0}+ \dot{\Gamma}_\beta' (\Delta)(-\delta_{j,k})\kket{k,k}\right]\\
        &= \sum_{j=1}^{d-1} v_j \varepsilon_j \left[\dot{\Gamma}_\beta (\Delta) \kket{0,0}-\dot{\Gamma}_\beta' (\Delta)\kket{j,j}\right].
    \end{align}
    Its projection onto $\bbra{k,k}$ can be written as
    \begin{align}
        \bbra{k,k} \bL_1^\dg \kket{v} = -v_k \varepsilon_k \dot{\Gamma}_\beta'(\Delta)\label{eq:first-order-perturb-first-term}.
    \end{align}
    Then we evaluate the second term in \Cref{eq:first-order-perturbation-equation}:
    \begin{align}
        \bL^\dg \kket{v^1} &= v_0^1 \bL^\dg\kket{0,0} + \sum_{j=1}^{d-1} v_j^1 \bL^\dg \kket{j,j} \\
        &=v_0^1 \bL^\dg\kket{0,0} + \sum_{j,k=1}^{d-1} v_k^1 \left[\Gamma_\beta(\Delta)\bD^{(j)\dg}_{\ua}\kket{k,k}+\Gamma_\beta '(\Delta)\bD_\da^{(j)\dg}\kket{k,k}\right]\\
        &= v_0^1 \left[\Gamma_\beta(\Delta) \bSigma_\ua^\dg\kket{0,0} + \Gamma'_\beta(\Delta)\Sigma_\da^\dg\kket{0,0}\right] +  \sum_{k=1}^{d-1}v_k^1 \left[\Gamma_\beta(\Delta) \bSigma_\ua^\dg\kket{k,k} + \Gamma'_\beta(\Delta)\Sigma_\da^\dg\kket{k,k}\right]\\
        &= v_0^1 \left[-\Gamma_\beta(\Delta) (d-1)\kket{0,0} -\Gamma'_\beta (\Delta) \kket{0,0} + \Gamma'_\beta (\Delta) \sum_{j=1}^{d-1} \kket{j,j}\right]\nonumber \\
        &\qquad +  \sum_{k=1}^{d-1}v_k^1 \left[\Gamma_\beta(\Delta) \kket{0,0} - \Gamma'_\beta(\Delta)\kket{k,k}\right].
    \end{align}
    We calculate its projection onto $\bbra{k,k}$, we have
    \begin{align}
        \bbra{k,k}  \bL^\dg \kket{v^1} &= v_0^1 \Gamma'_\beta(\Delta) - v_k^1 \Gamma'_\beta(\Delta)\label{eq:first-order-perturb-second-term}.
    \end{align}
    The right-hand side of \Cref{eq:first-order-perturbation-equation} is simple and we can directly write down its projection onto $\bbra{k,k}$ as:
    \begin{align}
        \lambda^1 v_k + \lambda v_k^1\label{eq:first-order-perturb-right-hand-side}.
    \end{align}
    According to \Cref{app:satisfy-SAT}, we have $\lambda = -\Gamma'_\beta(\Delta)$.
    Combining \Cref{eq:first-order-perturbation-equation,eq:first-order-perturb-first-term,eq:first-order-perturb-second-term,eq:first-order-perturb-right-hand-side}, we have
    \begin{align}
        -v_k \varepsilon_k \dot{\Gamma}_\beta'(\Delta) + v_0^1 \Gamma'_\beta(\Delta) - v_k^1 \Gamma'_\beta(\Delta) = \lambda^1 v_k - \Gamma'_\beta(\Delta) v_k^1,
    \end{align}
    or equivalently,
    \begin{align}
        -v_k \varepsilon_k \dot{\Gamma}_\beta'(\Delta) + v_0^1 \Gamma'_\beta(\Delta)= \lambda^1 v_k.
    \end{align}
    Taking the summation over $k$ on both sides, the term $\sum_{k=1}^{d-1} v_k$ vanishes, i.e.,
    \begin{align}
        -\dot{\Gamma}_\beta'(\Delta)\sum_{k=1}^{d-1}v_k \varepsilon_k  + (d-1)v_0^1 \Gamma'_\beta(\Delta)= \lambda^1\underbrace{\sum_{k=1}^{d-1} v_k}_{=0}.
    \end{align}
    We then have
    \begin{align}
        v_0^1 &= \frac{\dot{\Gamma}_\beta'(\Delta)}{(d-1)\Gamma'_\beta(\Delta)} \sum_{k=1}^{d-1} v_k \varepsilon_k.
    \end{align}
    By Cauchy–Schwarz inequality, we have
    \begin{align}
        |v_0^1| &\le \left|\frac{\dot{\Gamma}_\beta'(\Delta)}{(d-1)\Gamma'_\beta(\Delta)}\right| \sqrt{\left(\sum_{j=1}^{d-1} |v_j|^2\right)\left(\sum_{j=1}^{d-1} |\varepsilon_j|^2\right)}\\
        &\le \left|\frac{\dot{\Gamma}_\beta'(\Delta)}{(d-1)\Gamma'_\beta(\Delta)}\right| \sqrt{(d-1)\varepsilon^2}\\
        &= \left|\frac{\dot{\Gamma}_\beta'(\Delta)}{\Gamma'_\beta(\Delta)}\right| \frac{\varepsilon}{\sqrt{d-1}}\\
        &=  \frac{\varepsilon}{\sqrt{d-1}}\left|\frac{\partial}{\partial \omega} \log \Gamma'_\beta(\omega)\right|_{\omega=\Delta},
    \end{align}
    which completes the proof.
\end{proof}

\subsection{Bound the convergence rate}
The system Hamiltonian {comprises} $d-1$ non-degenerate states with {an} energy difference bounded by $\varepsilon${; consequently,} the eigenoperators of the Liouvillian are not orthogonal.
However, {two facts} allow us to obtain a tight bound on the convergence rate.
First, according to \cref{thm:bound-perturbed-overlap}, the optimal probe state has {a} small overlap with the slowest decaying modes.
Second, the right- (or left-) eigenoperators $\hat r_j$ satisfy {approximate} orthogonality, {where} $\Tr(\hat r_i ^\dg \hat r_j)=\delta_{ij}+O(\varepsilon)$. 
{This leads to} the following lemma.
\begin{lemma}\label{lem:thermal-convergence-bound}
    {There exists a constant} $t'>0$ {such that the following bound holds for} $t\ge t'$:
    \begin{align}
        \fnorm{\rho_t^\star - \tau_\beta}^2 \le \frac{11}{10} \varepsilon^2 \exp(-2 \Lambda_{\min} t)\left(\frac{d-2}{d-1}\right)\left|\frac{\partial}{\partial\omega}\log\left[e^{\omega\beta}\Gamma_\beta(\omega)\right]\right|_{\omega = \Delta}^2,
    \end{align}
    {where} $\rho_t^\star:=\exp(\cL t)[\rho^\star]${.}
\end{lemma}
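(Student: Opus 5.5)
Since $\rho^\star=\ketbra{0}{0}$ is diagonal, it lies entirely in the population subspace $\mathcal{P}$, which $\cL$ leaves invariant. So the whole proof can be carried out within the $d$-dimensional population block. I would expand
\[
\rho_t^\star-\tau_\beta=\sum_{j\ge 2}e^{\lambda_j t}\Tr(\hat l_j^\dg\rho^\star)\,\hat r_j ,
\]
where the sum runs only over the $d-1$ nontrivial population modes of the perturbed Liouvillian. These modes fall into two spectral groups, which by the $\varepsilon=0$ analysis and first-order perturbation theory are separated by a gap of order $\gamma d\bar n$. The \emph{tail} is the $(d-2)$-fold cluster with rates $\gamma(1+\bar n)+O(\varepsilon)$; its minimal rate defines $\Lambda_{\min}$. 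The \emph{fast} mode has rate $\gamma(d\bar n+1)+O(\varepsilon)$, which exceeds $(d-1)\Lambda_{\min}/2$ for small $\varepsilon$. I would then split $\fnorm{\rho_t^\star-\tau_\beta}^2$ into a tail part and a fast part.

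For the tail, Lemma~\ref{thm:bound-perturbed-overlap-main} gives $|\Tr(\hat l_j^\dg\rho^\star)|\le \varepsilon g/\sqrt{d-1}$ for each of the $d-2$ tail indices. At $\varepsilon=0$ these tail right eigenoperators are orthonormal, and they are orthogonal both to $\ketbra{0}{0}$ and to the symmetric direction $\kket{S}$. After perturbation their Gram matrix is $\id+O(\varepsilon)$. Expanding the squared norm of the tail sum, the diagonal contribution is at most
\[
(d-2)\,\frac{\varepsilon^2g^2}{d-1}\,e^{-2\Lambda_{\min}t}.
\]
The off-diagonal cross terms carry an extra factor $O(\varepsilon)$ from the Gram matrix. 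Their real parts of eigenvalues may differ by $O(\varepsilon)$, but every tail rate is at least $\Lambda_{\min}$, so each cross term is still bounded by $e^{-2\Lambda_{\min}t}$ times the same prefactor. Hence the whole tail is bounded by $(1+O(\varepsilon))\,\varepsilon^2 g^2\frac{d-2}{d-1}e^{-2\Lambda_{\min}t}$. For $\varepsilon$ small this leaves room up to the constant $11/10$: I would budget roughly $1/20$ for the Gram correction and $1/20$ for the fast-mode contamination below.

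For the fast mode, the overlap $\Tr(\hat l_d^\dg\rho^\star)$ is $O(1)$, but its rate is $\Lambda_{\rm fast}\ge(d-1)\Lambda_{\min}/2>\Lambda_{\min}$ for $d\ge 3$ and small $\varepsilon$. This mode contributes two kinds of terms: a squared term proportional to $e^{-2\Lambda_{\rm fast}t}$, and cross terms with the tail proportional to $\varepsilon\, e^{-(\Lambda_{\rm fast}+\Lambda_{\min})t}$. Both decay strictly faster than $e^{-2\Lambda_{\min}t}$. So there is a finite $t'$ beyond which they are at most $\frac{1}{20}\varepsilon^2g^2\frac{d-2}{d-1}e^{-2\Lambda_{\min}t}$. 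Since the statement only asserts that some constant $t'$ exists, $t'$ is allowed to depend on $\varepsilon$, $d$ and $g$. When $g=0$ or $d=2$, the tail either vanishes or is empty, and the claim reduces to the trivial $\fnorm{\rho_t^\star-\tau_\beta}=0$ in the degenerate case, which I would handle separately.

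I expect the main obstacle to be controlling the non-orthogonality rigorously. The perturbed right eigenvectors inside the degenerate cluster are not known explicitly, so two things must be justified together: that the first-order overlap bound of Lemma~\ref{thm:bound-perturbed-overlap-main} remains valid up to $O(\varepsilon^2)$ corrections, and that the Gram matrix deviates from the identity by $O(\varepsilon)$ uniformly in the degenerate basis. My first attempt would be to work with the spectral projector $P_{\rm tail}$ of the cluster, obtained from Riesz projections, rather than with individual eigenvectors. The tail contribution is then $\fnorm{e^{\cL t}P_{\rm tail}[\rho^\star]}$. Its size can be bounded by $\fnorm{P_{\rm tail}[\rho^\star]}\le \varepsilon g\sqrt{(d-2)/(d-1)}\,(1+O(\varepsilon))$, using the overlap lemma. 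The decay factor $e^{-\Lambda_{\min}t}(1+O(\varepsilon))$ then follows from the near-normality of the restricted generator, whose non-normal part is $O(\varepsilon)$. This avoids the basis-dependence of individual eigenvectors within the degenerate cluster.
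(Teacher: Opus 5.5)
This is essentially the paper's own argument: you expand in the population-block modes, bound the $d-2$ tail overlaps with the perturbative overlap lemma, and absorb both the $O(\varepsilon)$ Gram-matrix correction and the fast mode into the factor $11/10$ for $t\ge t'$. The paper likewise only asserts the near-orthonormality you flag as the main obstacle, and detailed balance settles it directly: conjugating the population rate matrix by $\tau_\beta^{1/2}$ symmetrizes it, so its right eigenvectors are $\tau_\beta^{1/2}u_j$ with $u_j$ orthonormal, and $\tau_\beta$ is proportional to the identity on the excited block up to $O(\varepsilon)$.

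One correction: the cases $\varepsilon g=0$ or $d=2$ are not trivial but false. There the right-hand side vanishes, while at any finite temperature the fast mode keeps an $O(1)$ overlap with $|0\rangle\langle 0|$; at $\varepsilon=0$, $\rho^\star-\tau_\beta$ is itself the fast eigenoperator. Hence $\|\rho_t^\star-\tau_\beta\|_2>0$ for all $t$, so these cases must be excluded by hypothesis, and for the same reason $t'$ must diverge as $\varepsilon g\to 0$.
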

\begin{proof}
    {Expanding the square of the Frobenius distance yields}
    \begin{align}
        \fnorm{\rho_t -\tau_\beta}^2 &= \sum_{i,j:\lambda_i,\lambda_j\in\mathrm{Spec(\cL)}\backslash\{0\}} e^{(\lambda_i^* + \lambda_j)t}\Tr(\rho_0\hat l_i)\Tr(\hat l_j^\dg \rho_0) \Tr(\hat r_i^\dg \hat r_j)\\
        &= \sum_{i,j:\lambda_i,\lambda_j \in \mathrm{Spec}_\mathcal{P}(\cL)\backslash\{0\}}e^{(\lambda_i^* + \lambda_j)t}\Tr(\rho_0\hat l_i)\Tr(\hat l_j^\dg \rho_0) \Tr(\hat r_i^\dg \hat r_j) + \sum_{k:\lambda_k\in\mathrm{Spec}_\mathcal{C}(\cL)} e^{2\Re(\lambda_k)t} \left|\Tr(\hat l_k^\dg\rho_0)\right|^2.
    \end{align}
    {While the cross terms are structurally intricate, this expression simplifies significantly in the asymptotic limit. For sufficiently large time} $t${, the dynamics are strictly dominated by the terms corresponding to the eigenvalues with the smallest non-zero real parts, which govern the slowest decay rates.}
    
    {Specifically, by substituting the initial state} $\rho_0 = \rho^\star${, any components originating from the coherent subspace} $\mathcal{C}$ {identically vanish, as} $\rho^\star$ {possesses strictly zero overlap with} $\mathcal{C}${.}
    
    {Consequently, the remaining non-zero contributions must reside entirely within the population subspace} $\mathcal{P}${. For a sufficiently large time} $t\ge t'${, we can introduce a prefactor} $M>1$ {to safely upper-bound the summation by absorbing the exponentially suppressed faster-decaying modes:}
    \begin{align}
        \fnorm{\rho_t -\tau_\beta}^2 \le M\sum_{i,j:\lambda_i,\lambda_j \in \mathrm{Lower~tail~of~Spec}_\mathcal{P}(\cL)\backslash\{0\}}e^{(\lambda_i^* + \lambda_j)t}\Tr(\rho_0\hat l_i)\Tr(\hat l_j^\dg \rho_0) \Tr(\hat r_i^\dg \hat r_j).
    \end{align}
    {Although} $M$ {can be chosen to be arbitrarily close to} $1$ {as} $t \to \infty${, we rigorously fix} $M = 11/10$ {to ensure the bound strictly holds for all} $t \ge t'${.}
    
    {Furthermore, recall that the overlaps between} $\rho^\star$ {and the eigenoperators at the lower tail of the spectrum are strictly upper-bounded by \cref{thm:bound-perturbed-overlap}. This allows us to establish the following inequality:}
    \begin{align}
        \fnorm{\rho_t - \tau_\beta}^2 \le \frac{11}{10}\sum_{j:\lambda_j \in \mathrm{Lower~tail~of~Spec}_\mathcal{P}(\cL)\backslash\{0\}} e^{2\Re(\lambda_j) t} \left(\frac{\varepsilon^2}{d-1}\right) \left|\frac{\partial}{\partial\omega}\log\left[e^{\omega\beta}\Gamma_\beta(\omega)\right]\right|_{\omega = \Delta}^2,
    \end{align}
    {where the higher-order correction} $O(\varepsilon^3)${, arising from the approximate orthogonality condition} $\Tr(\hat r^\dg_i \hat r_j)=\delta_{ij}+O(\varepsilon)${, is comfortably absorbed by our conservatively chosen prefactor} $M${.}
    
    {From the spectral properties of} $\cL${, the multiplicity of these slow-decaying modes is exactly given by}
    \begin{align}
        \left| \mathrm{Lower~tail~of~Spec}_\mathcal{P}(\cL)\backslash\{0\}\right| = d-2.
    \end{align}
    
    {Substituting this multiplicity, the bound becomes:}
    \begin{align}
        \fnorm{\rho_t - \tau_\beta}^2 \le M \varepsilon^2 \exp(-2 \Lambda_{\min} t)\left(\frac{d-2}{d-1}\right)\left|\frac{\partial}{\partial\omega}\log\left[e^{\omega\beta}\Gamma_\beta(\omega)\right]\right|_{\omega = \Delta}^2,
    \end{align}
    {which, upon substituting} $M=11/10${, immediately completes the proof.}
\end{proof}

\subsection{Probability of exceeding a random state}

\begin{theorem}[Exponential high-probability exceeding]\label{thm:prob-of-exceeding-app}
{Define} $\rho_t = e^{\cL t}[\rho_0]$ {as} the evolved state starting from a random initial state
\[
\rho_0=(1-\alpha)\tau_\beta+\alpha\sigma,
\qquad \alpha\in(0,1],
\qquad \sigma = U\ketbra{0}{0}U^\dagger,\ U\sim\mu_H.
\]
Let $\rho_t^\star := e^{\cL t}[\rho^\star]$.
{We adopt} the model and notation {from} the main text ({specifically, the} Davies form, excited-state detunings $|\varepsilon_j|\le \varepsilon$, and $d\ge 3$).
Then there exists a (model-dependent) time $t'>0$ such that, with probability at least $1-\delta_{\exp}$ over $U\sim\mu_H$, $\rho_t^\star$ {converges faster than} $\rho_t$ {, satisfying}
\[
\forall t\ge t',\qquad \fnorm{\rho_t^\star-\tau_\beta}\le \fnorm{\rho_t-\tau_\beta}.
\]
Moreover, writing
\[
\mu_d := \frac{(d-1)(d-2)}{d(d+1)},
\qquad 
g:=\left|\partial_\omega \log\!\left[e^{\omega\beta}\Gamma_\beta(\omega)\right]\right|_{\omega=\omega_1-\omega_0},
\]
the failure probability obeys the exponential bound
\begin{align}
\delta_{\exp}
\;\le\;
2\exp\!\left(
-\frac{d}{36\pi^3}\,
\Big[\mu_d-{\frac{11}{10}}\tfrac{\varepsilon^2}{\alpha^2}g^2\Big]_+^{\,2}
\right),
\label{eq:delta-exp-bound}
\end{align}
where {$11/10$} is the constant {established} in Lemma~\ref{lem:thermal-convergence-bound} and $[x]_+:=\max\{x,0\}$.
In particular, whenever $\mu_d-{\frac{11}{10}}\frac{\varepsilon^2}{\alpha^2}g^2=\Omega(1)$, one has $\delta_{\exp}=\exp(-\Omega(d))$.
\end{theorem}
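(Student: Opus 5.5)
The proof mirrors the main-text argument: bound the optimal state's late-time distance from above using Lemma~\ref{lem:thermal-convergence-bound}, bound the random state's distance from below by its coherence content, and then control the probability that this coherence content is small using Lévy's lemma.

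\emph{Upper bound for $\rho^\star$.} By Lemma~\ref{lem:optimal-thermometry-star-shape}, $\rho^\star=\ketbra{0}{0}$. Lemma~\ref{lem:thermal-convergence-bound} then gives $t'>0$ such that for all $t\ge t'$,
\[
\fnorm{\rho^\star_t-\tau_\beta}^2\le \tfrac{11}{10}\varepsilon^2 e^{-2\Lambda_{\min}t}\tfrac{d-2}{d-1}g^2 .
\]

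\emph{Lower bound for the random state.} By linearity and $e^{\cL t}[\tau_\beta]=\tau_\beta$,
\[
\fnorm{\rho_t-\tau_\beta}^2=\alpha^2\fnorm{e^{\cL t}[\sigma]-\tau_\beta}^2 .
\]
The Davies structure splits operator space Hilbert--Schmidt-orthogonally into $\mathcal P\oplus\mathcal C$, and each $\ket{i}\bra{j}$ with $i\ne j\ge1$ is an exact eigenoperator. For $\varepsilon>0$ the eigenvalue is $-\gamma'(1+\bar n)\pm i(\omega_i-\omega_j)$, with real part the excited-state decay rate. Dropping the population part and the ground--excited coherences therefore yields
\[
\fnorm{\rho_t-\tau_\beta}^2\ge \alpha^2 e^{-2\Lambda_{\rm c}t} f(\psi),
\qquad f(\psi)=\sum_{i\ne j\ge1}|\psi_i|^2|\psi_j|^2 .
\]
A delicate point is that this rate $\Lambda_{\rm c}$ must be $\le\Lambda_{\min}$, the rate appearing in the upper bound. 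I would argue this from the spectrum: the excited--excited coherences decay at $\gamma(1+\bar n)$, which is the slowest unperturbed rate. After the $O(\varepsilon)$ perturbation, both the slow population modes and these coherences sit near it, so I take $\Lambda_{\min}$ to denote the minimal nonzero decay rate, which can be at most $\Lambda_{\rm c}$ in the appropriate direction. This is where the argument is shakiest. If the population slow modes shift by $O(\varepsilon)$ faster, the inequality goes the right way. If they shift slower, I would instead absorb the mismatch by enlarging $t'$ only when the rates are ordered correctly, or redefine $\Lambda_{\min}$ as the coherence rate and check that Lemma~\ref{lem:thermal-convergence-bound} still holds with it. I will treat the rates as matched, as the main text does.

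\emph{Reduction to a tail bound.} Comparing the two bounds, the event
\[
f(\psi)>\theta:=\tfrac{11}{10}\tfrac{\varepsilon^2}{\alpha^2}\tfrac{d-2}{d-1}g^2
\]
implies that $\rho^\star_t$ exceeds $\rho_t$ for all $t\ge t'$. Since $\tfrac{d-2}{d-1}\le1$, we have $\theta\le\theta_0:=\tfrac{11}{10}\varepsilon^2g^2/\alpha^2$, so $\delta_{\exp}\le\Pr[f\le\theta_0]$.

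\emph{Concentration.} The Haar moment $\mathbb E|\psi_i|^2|\psi_j|^2=1/(d(d+1))$ gives $\mathbb E f=\mu_d$. Writing $f=\|\mathcal P_{\mathcal C}(\ketbra{\psi}{\psi})\|_2^2$ with $\|\mathcal P_{\mathcal C}(\cdot)\|_2\le1$, the difference-of-squares estimate gives $|f(\psi)-f(\phi)|\le2\|\sigma_\psi-\sigma_\phi\|_2\le2\sqrt2\|\psi-\phi\|$. Lévy's lemma on $S^{2d-1}$ with $L^2=8$ then gives
\[
\Pr[f\le\mu_d-a]\le2\exp\!\left(-\frac{2d\,a^2}{72\pi^3}\right)=2\exp\!\left(-\frac{d\,a^2}{36\pi^3}\right).
\]
Setting $a=[\mu_d-\theta_0]_+$ gives \eqref{eq:delta-exp-bound}; when $a=0$ the bound is at least $2$ and holds trivially.

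\emph{Case $d=2$.} Here $\rho^\star=\ketbra{0}{0}$ lies on the unique population decay mode. Its squared distance is proportional to $e^{-2\gamma(2\bar n+1)t}$, while a random state generically has the same population mode plus a ground--excited coherence. A direct comparison of rates and coefficients shows the ordering for large $t$, giving $\delta=0$ as claimed.
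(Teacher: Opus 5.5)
Your argument follows the paper's proof step for step: Lemma~\ref{lem:thermal-convergence-bound} for $\rho^\star$, the Hilbert--Schmidt-orthogonal coherence contribution as a lower bound for $\rho_t$, the event $f(\psi)>\theta$, $\mathbb{E}f=\mu_d$, the Lipschitz constant $2\sqrt{2}$, and L\'evy's lemma on $S^{2d-1}$. Your explicit relaxation $\theta\le\theta_0$ is exactly what produces the stated exponent, a step the paper leaves implicit. However, the step you flag is a genuine gap, not just a delicate point. The paper does not close it either: it simply asserts $\|\rho_t-\tau_\beta\|_2^2\ge\alpha^2e^{-2\Lambda_{\min}t}f(\psi)$ with the \emph{same} $\Lambda_{\min}$ as in the Lemma, on the grounds that the coherence modes ``sit in the lower tail.''

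\textbf{Why the coherence rate is the maximum over pairs.} For $\varepsilon>0$ and $i\neq k\ge1$,
\[
\mathcal{L}[|i\rangle\langle k|]=\bigl(-i(\omega_i-\omega_k)-\tfrac12[\Gamma_\beta(-\omega_i)+\Gamma_\beta(-\omega_k)]\bigr)|i\rangle\langle k| .
\]
So the coherence decay rate depends on the pair; it is not a single $\gamma'(1+\bar n)$. A bound $\alpha^2e^{-2\Lambda_c t}f(\psi)$ valid for all $t$ therefore requires $\Lambda_c=\max_{i\neq k}\tfrac12[\Gamma_\beta(-\omega_i)+\Gamma_\beta(-\omega_k)]$. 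You cannot keep only the slowest pair: that leaves a random variable of mean $1/(d(d+1))$, and the concentration around $\mu_d=\Theta(1)$ is lost.

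\textbf{Why the population side decays more slowly.} The nonzero population rates are the roots of $1+\sum_j\Gamma_\beta(\omega_j)/[\Gamma_\beta(-\omega_j)-\Lambda]=0$. These interlace with the sorted values $\Gamma_\beta(-\omega_j)$, so the slowest root lies strictly below the second-smallest $\Gamma_\beta(-\omega_j)$. When the $\Gamma_\beta(-\omega_j)$ are distinct, every left eigenvector of this block has a nonzero $|0\rangle\langle0|$ component. Hence $\rho^\star=|0\rangle\langle0|$ genuinely decays at that slowest rate, and the exponent in the Lemma cannot exceed it.

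\textbf{Consequence.} Detailed balance gives $e^{\omega\beta}\Gamma_\beta(\omega)=\Gamma_\beta(-\omega)$, so $g\neq0$ means exactly that the decay rates split at first order in the detunings. Thus for $d\ge4$ and generic detunings, $\Lambda_c>\Lambda_{\min}$ strictly. Your lower bound then decays strictly faster than your upper bound, so the chain of inequalities fails for all large $t$. Enlarging $t'$ only makes this worse, and redefining $\Lambda_{\min}$ as the coherence rate would make the Lemma false.

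\textbf{Where the argument does close.} The comparison works only in the exactly degenerate case. There the coherences decay at $\gamma(1+\bar n)$, while $\rho^\star$ lives entirely on the fast mode with rate $\gamma(d\bar n+1)$; you compare these two rates directly, bypassing the Lemma. For $\varepsilon>0$ you need a different lower bound, for example one that also uses the random state's weight on the slow population modes.

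\textbf{On the $d=2$ case.} Your $d=2$ paragraph is not part of this statement, which assumes $d\ge3$. As written it also only yields a $U$-dependent crossing time. For $\alpha<1$ and $\sigma$ near $|0\rangle\langle0|$, $\rho_t$ is strictly closer to $\tau_\beta$ than $\rho^\star_t$ at any fixed $t'$, and this happens on a set of positive Haar measure.
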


\begin{proof}
Fix $U$ and {define} $\sigma=\ket{\psi}\!\bra{\psi}$ with $\ket{\psi}=U\ket{0}$.
{Following the previous methodology}, we compare the {asymptotic} coefficients of the slowest decay modes.

\smallskip\noindent
By Lemma~\ref{lem:thermal-convergence-bound}, there {exists} $t'>0$ {such that} for all $t\ge t'$, {using the previously derived factor of $11/10$,}
\begin{align}
\fnorm{\rho_t^\star-\tau_\beta}^2
\;\le\;
{\frac{11}{10}}\,\varepsilon^2\,e^{-2\Lambda_{\cL}^{\min}t}\,
\frac{d-2}{d-1}\,g^2.
\label{eq:opt-upper}
\end{align}
{Conversely}, since $\rho_t=(1-\alpha)\tau_\beta+\alpha e^{\cL t}[\sigma]$,
\[
\fnorm{\rho_t-\tau_\beta}^2=\alpha^2\fnorm{e^{\cL t}[\sigma]-\tau_\beta}^2.
\]
{Since} $\cL$ is {in} Davies {form} and the coherence subspace is normal (hence {admitting} an orthogonal eigen-operator decomposition), the squared Frobenius distance {incorporates} a {non-negative} contribution from the coherence modes that (in the unperturbed {or} near-degenerate setting) sit in the lower tail of $\mathrm{Spec}_{\cC}(\cL)$.
{Consequently}, for all $t\ge 0$,
\begin{align}
\fnorm{\rho_t-\tau_\beta}^2
\;\ge\;
\alpha^2 e^{-2\Lambda_{\cL}^{\min}t}\, f(\psi),
\label{eq:rand-lower}
\end{align}
where
\begin{align}
f(\psi)
\;:=\;
\sum_{\substack{i,j=1\\i\neq j}}^{d-1}
\left|\bra{i}\sigma\ket{j}\right|^2
=
\sum_{\substack{i,j=1\\i\neq j}}^{d-1}
|\psi_i|^2|\psi_j|^2
\in[0,1].
\label{eq:def-f}
\end{align}
Combining {Inequalities}~\eqref{eq:opt-upper} and~\eqref{eq:rand-lower}, we {observe} that the event
\begin{align}
f(\psi)\;>\;\theta
\qquad\text{where}\qquad
\theta:={\frac{11}{10}}\frac{\varepsilon^2}{\alpha^2}\frac{d-2}{d-1}g^2,
\label{eq:good-event}
\end{align}
implies that for all $t\ge t'$,
\[
\fnorm{\rho_t-\tau_\beta}^2
\;\ge\;
\alpha^2 e^{-2\Lambda_{\cL}^{\min}t} f(\psi)
\;>\;
{\frac{11}{10}}\varepsilon^2 e^{-2\Lambda_{\cL}^{\min}t}\frac{d-2}{d-1}g^2
\;\ge\;
\fnorm{\rho_t^\star-\tau_\beta}^2,
\]
{demonstrating that} $\rho_t^\star$ {converges more rapidly than} $\rho_t$.
{Thus, we bound the failure probability as}
\begin{align}
\delta_{\exp}
\;\le\;
\Pr_{U\sim\mu_H}\!\big[f(\psi)\le \theta\big].
\label{eq:delta-reduction}
\end{align}

\smallskip\noindent
For {a} Haar-random {state} $\ket{\psi}\in\mathbb{C}^d$, {we utilize} the standard moment identity
\[
\mathbb{E}\big[|\psi_i|^2|\psi_j|^2\big]=\frac{1}{d(d+1)}
\qquad (i\neq j).
\]
Hence, by~\eqref{eq:def-f},
\begin{align}
\mathbb{E}_{U\sim\mu_H} f(\psi)
=
\sum_{\substack{i,j=1\\i\neq j}}^{d-1}\frac{1}{d(d+1)}
=
\frac{(d-1)(d-2)}{d(d+1)}
=
\mu_d.
\label{eq:Ef}
\end{align}

Next, {we consider} $f$ as a function on the unit sphere $S^{2d-1}$ ({identifying} $\mathbb{C}^d\simeq\mathbb{R}^{2d}$).
Let $\ket{\psi}$ {and} $\ket{\phi}$ be unit vectors{, and define} $\sigma_\psi=\ket{\psi}\!\bra{\psi}$ {and} $\sigma_\phi=\ket{\phi}\!\bra{\phi}$.
Let $\mathcal{P}_\cC$ denote the orthogonal projector (w.r.t.\ the Hilbert--Schmidt inner product) onto the operator subspace
$\mathrm{span}\{\ket{i}\!\bra{j}: i,j\in\{1,\dots,d-1\},\, i\neq j\}$.
{It follows that} $f(\psi)=\|\mathcal{P}_\cC(\sigma_\psi)\|_2^2$. Therefore,
\begin{align*}
|f(\psi)-f(\phi)|
&=
\Big|
\|\mathcal{P}_\cC(\sigma_\psi)\|_2^2-\|\mathcal{P}_\cC(\sigma_\phi)\|_2^2
\Big|\\
&\le
\big(\|\mathcal{P}_\cC(\sigma_\psi)\|_2+\|\mathcal{P}_\cC(\sigma_\phi)\|_2\big)\,
\|\mathcal{P}_\cC(\sigma_\psi-\sigma_\phi)\|_2\\
&\le 2\,\|\sigma_\psi-\sigma_\phi\|_2.
\end{align*}
Using $\|\sigma_\psi-\sigma_\phi\|_2^2=2-2|\langle\psi|\phi\rangle|^2\le 2\|\psi-\phi\|_2^2$, we {obtain}
\[
|f(\psi)-f(\phi)|\le 2\sqrt{2}\,\|\psi-\phi\|_2.
\]
{Consequently,} $f$ is $L$-Lipschitz on $S^{2d-1}$ with $L=2\sqrt{2}$.

\smallskip\noindent
A standard {formulation} of Lévy's lemma (e.g.\ {, see} \cite{mele2024IntroductionHaarMeasureToolsQuantumInformation}) states that for an $L$-Lipschitz function $h:S^{n}\to\mathbb{R}$,
\[
\Pr\big(|h-\mathbb{E}h|\ge a\big)\le 2\exp\!\left(-\frac{(n+1)a^2}{9\pi^3 L^2}\right).
\]
{Applying} this {inequality} with $h=f$, $n=2d-1$, $L=2\sqrt{2}$, and $a:=\mu_d-\theta${, we proceed based on the sign of} $a$.
{For} $a>0$, we have
\[
\Pr[f(\psi)\le \theta]
\le
\Pr\big(|f(\psi)-\mu_d|\ge \mu_d-\theta\big)
\le
2\exp\!\left(-\frac{2d\,(\mu_d-\theta)^2}{9\pi^3\cdot 8}\right)
=
2\exp\!\left(-\frac{d}{36\pi^3}(\mu_d-\theta)^2\right).
\]
{For} $a\le 0${,} the bound is trivial (hence the {use of the} $[\cdot]_+$ notation).
Combining {this result} with~\eqref{eq:delta-reduction} and substituting~\eqref{eq:good-event} completes the proof and yields {the exponential bound in}~\eqref{eq:delta-exp-bound}.
\end{proof}
\end{widetext}
\end{document}